\def\spacingset#1{\renewcommand{\baselinestretch}%
{#1}\small\normalsize} \spacingset{1}
\definecolor{lowprob}{RGB}{255,255,255}
\definecolor{highprob}{RGB}{255,102,102}
\newcolumntype{P}[1]{>{\centering\arraybackslash}p{#1}}
\newtheorem{Theorem}{Theorem}
\newtheorem{Proposition}{Proposition}
\newtheorem{Assumption}{Assumption}
\newtheorem{Lemma}{Lemma}
\newcommand{\E}{\mathbb{E}}
\newcommand{\V}{\mathbb{V}}
\newcommand{\tr}{\operatorname{tr}}
\begin{document}
\title{Recidivism and Peer Influence with LLM Text Embeddings in Low Security Correctional Facilities}
\author{Shanjukta Nath$^{1}$, Jiwon Hong$^{2}$, Jae Ho Chang$^{2}$, Keith Warren$^{3}$, and \\Subhadeep Paul$^{2}$\thanks{$^{1}$Department of Agricultural and Applied Economics, University of Georgia; shanjukta.nath@uga.edu (corresponding author), $^{2}$Department of Statistics, The Ohio State University, $^{3}$College of Social Work, The Ohio State University. We thank Sergio Urzua, Guido Kuersteiner, Susan Athey, and Emil Palikot for helpful comments and feedback. SP's research was partially funded by a grant from NSF DMS (2529302). We thank the Ohio Supercomputer Center (OSC) for providing us with computing resources through an academic account.}}
\date{\today}

\maketitle

\spacingset{1.5}

\begin{abstract}

Studying peer effects in language is critical because they often reflect behavioral and personality traits that are important determinants of economic outcomes. However, language is unstructured, non-numeric, and high-dimensional. We combine Large Language Model (LLM) embeddings with structural econometric identification to provide a unified framework for identifying peer effects in language. This unified framework is applied to 80,000–120,000 written exchanges among residents of low security correctional facilities. The LLM language profiles predict three-year recidivism 30\% more accurately than pre-entry covariates alone, showing that text representations capture meaningful signals. We analyze peer effects on multidimensional language embeddings while addressing network endogeneity. We develop novel instrumental variable estimators for peer effects that accommodate multivariate outcomes, sparse networks, and multidimensional latent variables. Our methods achieve $\sqrt{N}$-consistency and asymptotic normality under realistic sparsity conditions, relaxing the dense-network assumption. Results reveal significant peer effects in residents' language profiles.
\end{abstract}

\noindent \textbf{JEL Classification:} C31, C36, C55, K42.\\
\noindent \textbf{Keywords:} Peer effects, Language embeddings, Network endogeneity, Recidivism.

\section{Introduction}

Peer effects play an important role in many outcomes economists are interested in, including educational attainment, criminal behavior, and workplace productivity \citep{bayer2009building,mas2009peers,sacerdote2001peer}. Researchers have identified several mechanisms by which peers influence individuals' behavior \citep{boucher2024toward,bursztyn2014understanding}. These include social learning, where individuals learn from or update beliefs based on peers' actions \citep{banerjee1992simple,bikhchandani1992theory}, and models of social image and peer pressure, where individuals modify behavior to adapt to societal norms \citep{bursztyn2017social}. In this paper, we focus on language as an underlying mechanism and propose studying peer effects on language. It is well known that individuals often convey behavioral traits and beliefs through language exchanges \citep{fouka2020backlash,grogger2019speech,giles2013communication,niederhoffer2002linguistic}. Therefore, peers' language reveals their social alignment and beliefs, making it central to these mechanisms of peer effects.

There are two major challenges to estimating peer effects in language. First, language interactions, by nature, are non-numerical, unstructured, high-dimensional, and context-dependent, making them difficult to analyze mathematically. Second, similar to peer effect estimation in other settings, the identification of peer effects in language is complex due to multiple issues. These include simultaneity in the outcomes and confounding due to unobserved homophily affecting both the outcome and peer selection \citep{manski1993identification,shalizi2011homophily,johnsson2021estimation}.

The first challenge is to represent language interactions in a structured numerical form. Traditional text analysis methods, such as bag-of-words, dictionary methods, and topic models have limitations in capturing implicit semantic meaning and often require explicit appearance of keywords specified by the researcher in the text. Earlier text embedding models, such as Word2vec \citep{mikolov2013efficient} attempt to address these limitations by representing text as high-dimensional embedding vectors. However, even these models are limited by their inability to incorporate context in their embeddings. In terms of the second challenge, identification of peer effects can be achieved using instrumental variables, where peers interact in network settings, using the identification results outlined in \cite{bramoulle2009identification}. However, \cite{bramoulle2009identification}'s results assume that the network is uncorrelated with the error term conditional on covariates. Recently, \cite{johnsson2021estimation} have extended the results of \cite{bramoulle2009identification}, allowing for network endogeneity. However, these results rely on strong assumptions about network density that are unrealistic for real-world network data. Consequently, the identification and estimation of peer effects in language require new econometric methods and theory that accommodate multivariate, correlated outcomes, endogeneity in network formation, and are developed under the assumption of realistic, sparse networks.

In this paper, we provide a unified framework that combines Large Language Model embeddings with novel econometric methodology for the identification and estimation of peer effects in language interactions. We use embeddings from transformer-based foundation AI models that improve upon early text embedding models by incorporating both an attention mechanism \citep{vaswani2017attention} and bidirectional processing \citep{devlin2019bert} to obtain \textit{context-aware} vector representations of text. These models can generate different vector embeddings for the same word because other words in the sentence dynamically influence its embeddings, thereby defining a context for the word. Foundation models trained on extensive text corpora have demonstrated strong capabilities for learning language representations. Consequently, these models can be applied to obtain embeddings for new text in specific applications without the need for any new training \citep{Radford2018ImprovingLU,brown2020language}.

Even though the LLM embedding vectors provide a meaningful numerical representation of language, they are not well-suited for peer effect estimation due to their high dimensionality (700-1000). Such high dimensionality limits the scope of interpretation in peer effect estimation. We address this problem using zero-shot classification, which provides a suitable way to map high-dimensional text embeddings to interpretable user-defined input label categories. Zero-shot classification reframes the classification task as a language inference problem that determines entailment between a text and the user-supplied label categories \citep{yin2019benchmarkingzeroshottextclassification}. Notably, this language inference task does not require labeled training data. On the peer effect identification side, we develop new econometric methods based on instrumental variables (IV) to accommodate correlated multivariate outcomes and multidimensional latent variables. Our methods do not require dense networks and accommodate flexible nonparametric modeling of multidimensional latent homophily vectors. The proposed estimator is shown to be $\sqrt{N}$ consistent and asymptotically normal.

We apply this new unified framework of LLM embeddings and econometric methodology to written exchanges between residents of three low-security correctional facilities in a midwestern state in the United States. These correctional facilities, formally known as therapeutic communities (TC), are for individuals with criminal behavior and substance use disorder \citep{de2000therapeutic}. These programs are based on mutual aid, encouraging residents to be collaborative, practice group therapy, and provide regular feedback to their peers using affirmations and corrections. Corrections are targeted towards individuals for violating TC norms, and affirmations are given to encourage positive and prosocial behavior \citep{de2000therapeutic,perfas2014therapeutic}. Affirmations and corrections are written on “slips,” which are brief forms containing fields for the sender, receiver, and date, along with the message. Previous randomized controlled trials have demonstrated that TCs reduce criminal activity and drug use \citep{sacks2012randomized,bahr2012works}.

In these three correctional facilities, we observe the administrative records for each resident's precise entry and exit dates, pre-entry covariates, and details on the content, date, sender, and recipients of these written exchanges. These residents are then mapped with the recidivism records provided by the midwestern state's department of rehabilitation and correction. These correctional facilities are unique both in their practice of exchanging written affirmations and corrections and also in their process of recording them. This \textit{unique feature} enables us to \textit{study peer effects} in these language interactions and to use information contained in language to \textit{predict} meaningful downstream outcomes, such as recidivism.

Using only the pre-entry covariates, we can predict 3-year recidivism with an out-of-sample AUC only slightly above 0.5, which is just marginally better than random guessing. However, using transformer-based LLM embeddings of the 80,000 to 120,000 written exchanges, prediction accuracy improves dramatically to an AUC of 0.70, approximately 30\% higher than using pre-entry characteristics alone. This performance is comparable to or slightly better than published AUC values from state-of-the-art commercial software, such as COMPAS, which uses hundreds of individual characteristics in a machine learning model, raising concerns about fairness \citep{dressel2018accuracy}. An AUC of 0.70 is also generally considered ``good'' in the context of predicting criminal recidivism \citep{laqueur2024algorithmic}. These results establish that LLM representations of written exchanges capture economically meaningful behavioral variation. This finding further supports recent work suggesting that peer interactions are key mechanisms behind TC effectiveness \citep{campbell2019relationship,nath2022identifying}. 

After establishing that written exchanges in TCs contain meaningful signals for recidivism, we examine whether these peers influence each other's language profiles. Using zero-shot classification, the unstructured language exchanges are mapped to four user-defined input label categories that represent distinct behavioral dimensions. These chosen input labels are community support, personal growth, rule violations, and disruptive conduct.\footnote{These four input labels for zero-shot classification were chosen using principled, iterative, and replicable LLM prompts; the steps are detailed in Appendix \ref{inpulabelgeneration}.} Applying the peer effects methodology developed in this paper to the predicted probabilities for language categories reveals significant peer effects. Residents affect each other's language profiles even after accounting for unobserved homophily using the new methods developed in this paper. Peer effects within the same language dimension are substantially larger than cross-category spillovers across dimensions, a pattern observed consistently across both male and female units and across sender and receiver profiles.

The novel econometric methods developed and employed in this study indicate significant peer effects in language profiles, and predictive analysis demonstrates that language exchanges are associated with 3-year recidivism. Nevertheless, predictive power alone is not sufficient for policy implications. Estimating the causal impact of language profiles on recidivism is necessary to determine whether language exchanges in TC can causally affect three-year recidivism. The primary identification challenge in this causal relationship is the presence of unobserved confounders that may influence both language profiles during the stay in TC and recidivism. We estimate the association between language profiles and recidivism, controlling for covariates that we can observe. These covariates include Level of Service Inventory-Revised (LSI-R), which is an aggregate measure of 54 questions provided to participants spanning criminal history, mental health, education, employment, family, living conditions, and alcohol and drug abuse. However, we acknowledge the limitation of this estimate due to unmeasured confounders and use two approaches \citep{oster2019unobservable,cinelli2020making} to bound the magnitude of omitted-variable bias in the discussion section. We find that estimates of the effect of language in TC on recidivism remain statistically significant at conventional levels, even after accounting for considerable omitted-variable bias relative to the observables.

\section{Related Literature and Our Contributions}

Text analysis is becoming increasingly important in economics for extracting meaningful insights from unstructured data. The works of \cite{kalamara2022making,lippmann2022gender,brehm2025vaccines,alsan2025something} use some of the earlier approaches to text analysis, such as sentiment analysis, word frequency counts, bag-of-words, topic modeling, and dictionary methods. Some recent works, such as \cite{gennaro2022emotion,ash2022ideas}, have improved on these earlier models and instead deployed word embedding methods such as Word2Vec \citep{mikolov2013efficient}, for text analysis. However, as outlined in the introduction, these pre-transformer word embedding methods do not account for context. In contrast, we deploy transformer-based foundation models to estimate text embeddings and then use them for zero-shot classification, leveraging the attention mechanism of \cite{vaswani2017attention}. In this regard, our approach is related to \cite{bajari2023hedonic}, which uses embeddings from transformer models to construct hedonic price indices from Amazon data.

The linear-in-means peer effect model has been widely employed in econometrics for estimating peer effects \citep{manski1993identification,bramoulle2009identification}. \cite{bramoulle2009identification} established identification conditions leveraging the network structure and instrumental variables when networks are exogenous conditional on observables. However, networks are typically endogenous due to the phenomenon of latent homophily. These are unobserved characteristics that drive both network formation (peer selection) and the outcomes. Recent work addresses this problem through latent variable models \citep{goldsmith2013social,johnsson2021estimation}, but existing methods require unrealistically dense networks (expected degree scaling as $N$) and rely on univariate latent homophily. Other work accommodates multivariate correlated outcomes \citep{zhu2020multivariate,cohen2018multivariate} but assumes exogenous network formation. We extend the peer effects methodology to simultaneously accommodate sparse networks (expected degree scaling as $\sqrt{N}$), multidimensional latent homophily, and multivariate outcomes, proving $\sqrt{N}$-consistency under realistic sparsity conditions and using nonparametric sieve methods to control for latent confounding flexibly.

We also contribute to modeling networks using multidimensional additive and multiplicative latent-variable models for peer-effect estimation. 
While the works of \cite{goldsmith2013social,johnsson2021estimation} integrated network formation with peer effects estimation, these approaches have used simplistic network models with univariate additive latent factors to account for the unobserved confounders. We advocate the use of Random dot product graph (RDPG) models \citep{athreya2017statistical}, which are multiplicative latent variable models, and the latent space models, which accommodate additive and multiplicative latent effects along with observed covariates \citep{hoff2002latent,ma2020universal,li2023statistical}. In contrast to the one-dimensional latent-variable model of \cite{graham2017econometric} that \cite{johnsson2021estimation} employed, the models we deploy use multidimensional latent variables to model network data flexibly. We build methodological and theoretical tools needed to enable the use of this rich class of additive and multiplicative latent space models in peer effect identification, using estimated latent positions with a nonparametric sieve adjustment. 

We empirically compare five network specifications (RDPG, latent space with and without covariates, and tetrad logit and joint fixed effect estimators of \cite{graham2017econometric}) in their ability to replicate observed network properties (modularity, degree heterogeneity, transitivity) of our correctional facility network data. We find that the latent space model with additive and multiplicative factors plus covariates \citep{ma2020universal} best fits our data.

Our findings have important policy implications for the criminal justice system. Recidivism imposes substantial costs. Reconviction rates range from 18-55\% across countries \citep{yukhnenko2023criminal}, with U.S. incarceration costs exceeding one trillion dollars annually \citep{pettus2016economic} estimated in 2016. Recidivism prediction has been a central focus in criminal justice policy, with widely-used risk assessment tools such as COMPAS incorporating over 100 pre-entry features \citep{dressel2018accuracy}. Despite extensive feature engineering, prediction accuracy remains limited \citep{dressel2018accuracy,lin2020limits}. Our results demonstrate that language exchanges contain rich behavioral signals in correctional settings where pre-entry covariates have extremely limited signals for recidivism.\footnote{While prediction tools inform risk assessment and resource allocation, designing effective interventions requires identifying factors that causally affect recidivism. A separate strand of research has identified several such factors, including employment opportunities, welfare programs, and neighborhood institutions \citep{barrios2025recidivism,tuttle2019snapping,galbiati2021jobs}.}

\section{Empirical Setup}
\label{empiricalsetup}

This paper uses datasets from two male units and one female unit of Therapeutic Communities (TCs). These TCs in a midwestern state in the United States were minimum-security, community-based correctional facilities.  Thus, while they were locked facilities, they were not units in a prison.  These TCs had residents from a mixed urban/suburban/rural area; each unit had 80 beds.  Since data was collected over time, the female unit had 982 unique residents, and the male units had 1649 unique residents. The two male units are located on two floors of the same TC, and henceforth they will be collectively referred to as the male unit. We observe the entry and exit dates of these residents. The residents may stay in these units for up to 180 days. They may either successfully graduate from the program at the end of their stay or leave it unsuccessfully. We find, on average, the residents stayed for about 120 days. Table \ref{summarystatsdata} (in the Appendix) details the residents used in the analysis. There is a very high graduation rate from these units, with 87\% of residents successfully graduating from the female unit, and the corresponding percentage of graduates for the male unit is 89\%.

Given that we observe the time stamps of entry and exit, we observe a substantial variation in these dates and the residents' length of stay in these units. Figure \ref{time} (a,b) shows the variation in these entry and exit dates, and panel (c) shows substantial differences in the length of the stay. The resident information from TCs is mapped with the administrative dataset maintained by the State's Department of Rehabilitation and Corrections. This administrative dataset provides the date of reincarceration of TC residents, tracked up to 3 years.  Reincarceration is one of several possible measures of recidivism, but it is the most commonly used in the criminology literature \citep{bouchard2024seeing} and has been used in previous studies of TC outcomes \citep{doogan2016semantic,warren2007my,warren2020tightly}. Table \ref{summarystatsdata} shows that 22\% of the residents had recidivism in the female unit and 25\% in the male unit (among all residents, irrespective of graduation/success from TC treatment status). Interestingly, these numbers align with global figures obtained by \cite{yukhnenko2023criminal} across 33 countries and various types of crimes.

During their stay in TCs, the residents were encouraged to exchange written affirmations and corrections with each other as part of their therapy. We observe the timestamps, text, sender, and receiver IDs for each of these exchanges in these TCs. The digitized records of these texts consist of 123,000 such exchanges between the residents in the female unit and about 83,000 exchanges between the residents in the male unit.
%for the final dataset that provides non-missing information on every variable required to conduct the analysis in this paper. 
Figure \ref{counttexts} (Appendix) shows the distribution of the count of affirmations (push) and corrections (pull) aggregated by sender ID in panels (a,b) and by receiver ID in panels (c,d), respectively. The female unit shows differences in the distribution of affirmations and corrections, with affirmations being less skewed. However, we see an almost overlapping distribution of corrections and affirmations in the male units.

These corrections and affirmations are short messages exchanged between the residents. Figure \ref{fig:resident-timeline} provides a few examples of these messages and the data structure. In this figure, we create a sample of four residents who entered the facility around a similar time and exchanged messages. Note that this figure shows examples of only a few messages exchanged due to space constraints.

\begin{figure}[!h]
\caption{Timeline of resident stays and exchanges (2006)}
\label{fig:resident-timeline}
\centering
\begin{tikzpicture}[
    scale=0.65, 
    transform shape,
    resident/.style={rectangle, draw, fill=#1!20, minimum width=2cm, minimum height=0.7cm, rounded corners},
    message/.style={rectangle callout, draw, fill=#1!30, rounded corners=2pt, text width=3.2cm, align=left, font=\footnotesize}
]

\draw[-{Latex[length=3mm]}] (0,2) -- (14,2) node[right] {};

\foreach \x/\month/\day in {0/Feb/1, 4/Mar/1, 8/Apr/1, 12/May/1, 14/Jun/1}
    \draw (\x,2.3) -- (\x,1.7) node[below] {\month \day};

\node[resident=blue, anchor=south east] (R1) at (-2,-1.6) {Resident A};
\draw[thick, blue] (0,-2) -- (12,-2);
\node[left, align=right, font=\footnotesize] at (0,-2) {Entry: Jan 5, 2006};
\node[right, align=left, font=\footnotesize] at (12,-2) {Exit: May 3, 2006};

\node[resident=green, anchor=south east] (R2) at (-2,-4.1) {Resident B};
\draw[thick, green] (0.2,-4.5) -- (14,-4.5);
\node[left, align=right, font=\footnotesize] at (0,-4.5) {Entry: Jan 20, 2006};
\node[right, align=left, font=\footnotesize] at (14,-4.5) {$\rightarrow$ Exit: Jul 13, 2006};

\node[resident=red, anchor=south east] (R3) at (-2,-6.6) {Resident C};
\draw[thick, red] (0.7,-7) -- (14,-7);
\node[below, align=center, font=\footnotesize] at (0.7,-7) {Feb 9, 2006};
\node[right, align=left, font=\footnotesize] at (14,-7) {$\rightarrow$ Exit: Aug 7, 2006};

\node[resident=purple, anchor=south east] (R4) at (-2,-9.1) {Resident D};
\draw[thick, purple] (1.8,-9.5) -- (14,-9.5);
\node[below, align=center, font=\footnotesize] at (1.8,-9.5) {Feb 22, 2006};
\node[right, align=left, font=\footnotesize] at (14,-9.5) {$\rightarrow$ Exit: Aug 15, 2006};

\node[message=yellow](pushAC) at (2,-0.8) {Feb 13, 2006\\A → C: "Willingness to change your life"\\(push)};

\node[message=yellow](pushBC) at (3,-3.5) {Feb 28, 2006\\B → C: "Being a good listener and processing"\\(push)};

\node[message=yellow](pushDC) at (4,-11) {Mar 2, 2006\\D → C: "Studying hard for the board"\\(push)};

\node[message=orange](pullBA) at (5.3,-5.5) {Mar 26, 2006\\B → A: "Talking on quiet time"\\(pull)};

\node[message=orange](pullCD) at (11.5,-6) {Jun 23, 2006\\C → D: "Talking on quiet time"\\(pull)};

\draw[thick, dashed, yellow!70!black, -latex] (1.5,-1.3) -- (1.5,-7);
\draw[thick, dashed, yellow!70!black, -latex] (3,-4) -- (3,-7);
\draw[thick, dashed, yellow!70!black, -latex] (4,-10.5) -- (4,-7);
\draw[thick, dotted, orange!70!black, -latex] (5.3,-4.5) -- (5.3,-2);
\draw[thick, dotted, orange!70!black, -latex] (11.5,-7) -- (11.5,-9.5);

\end{tikzpicture}
\begin{minipage}{13.0 cm}{\footnotesize{Notes: The dates of the entry, exit, and exchange of messages have been altered slightly for the construction of this figure to maintain the anonymity of residents.}}
\end{minipage} 
\end{figure}

We deploy transformer-based LLMs to create text embeddings separately for the 123,000 and 83,000 written exchanges in the female and male units. Precisely, we use the BERT (\textit{bert-base-uncased}) language model from \cite{devlin2019bert}. \footnote{The implementation is done using the \texttt{text} package in R \citep{kjell2023text} that allows users to access these LLMs easily from \href{https://huggingface.co/}{huggingface.co}.}
We use these embeddings to create sender and receiver profile vectors for the residents based on the language they use to interact with their peers. These embeddings represent how the behavior and engagement of individuals vary within the TC.  The sender profiles indicate how individuals interact with their peers, for example, being supportive or dismissive of others. The receiver profiles, on the other hand, are related to the behavior of the individuals as perceived by their peers. This is because the messages individuals receive reflect their conduct during their stay in the TC (for example, being called out for disruptive conduct). We use these embeddings to assess two key questions- one related to predictive ability for recidivism and the other related to the mechanism of peer influence. We describe in Section \ref{results} that these word embeddings have substantial predictive ability for recidivism.

However, it is hard to interpret these 768-dimensional embeddings meaningfully. Therefore, our second method uses a LLM-based Zero-Shot approach \citep{yin2019benchmarkingzeroshottextclassification,meshkin2024harnessing,wang2023large} using the \textit{BART} model of \cite{lewis2019bart}. In this method, the user provides a fixed class of words to the LLM along with the text messages, and the method generates predictions for the user-supplied classes for each message. Note that the classification is performed without seeing any labeled training data (and hence ``Zero-Shot''), and with only the knowledge of natural language that the LLM has acquired. 
\begin{figure}[!h]
\centering
\caption{Raw messages and their predicted class probabilities using Zero-Shot classifier}
\label{fig:zero-shot-classification}
\begin{tikzpicture}
% Define some styles
\tikzset{
    box/.style={
        rectangle,
        draw=black,
        thick,
        minimum width=2.5cm,
        minimum height=0.8cm,
        text centered,
        text width=6cm,
        align=center
    },
    arrow/.style={
        ->,
        thick,
        >=stealth
    },
    matrixbox/.style={
        rectangle,
        draw=black,
        thick,
        minimum width=6.5cm,
        minimum height=6cm,
        align=center
    },
    title/.style={
        font=\large\bfseries,
        align=center
    },
    subtitle/.style={
        font=\footnotesize\bfseries,
        align=center
    }
}

\node[matrixbox] (input) at (-3.5, 2) {};
\node[align=center, font=\small\bfseries] at (-3.5, 6) {\textbf{Raw Messages}};

\node[align=left, font=\footnotesize] at (-3.5, 3.5) {\textbf{1.} ``holding family hostage"};
\node[align=left, font=\footnotesize] at (-3.5, 2.5) {\textbf{2.} ``talking in quiet zone in bathroom"};
\node[align=left, font=\footnotesize] at (-3.5, 1.5) {\textbf{3.} ``getting promotion"};
\node[align=left, font=\footnotesize] at (-3.5, 0.5) {\textbf{4.} ``dancing on dayroom floor"};
\node[align=left, font=\footnotesize] at (-3.5, -0.5) {\textbf{5.} ``always giving me positive feedback"};

\node[matrixbox] (output) at (4, 2) {};
\node[align=center, font=\small\bfseries] at (4, 6) {\textbf{Class Probabilities}};

\node[align=center, font=\tiny] at (1.5, 4.3) {\textbf{Personal} \\ \textbf{Growth}};
\node[align=center, font=\tiny] at (3, 4.3) {\textbf{Community} \\ \textbf{Support}};
\node[align=center, font=\tiny] at (4.5, 4.3) {\textbf{Rule} \\ \textbf{Violations}};
\node[align=center, font=\tiny] at (6, 4.3) {\textbf{Disruptive} \\ \textbf{Conduct}};

\node[rectangle, fill=lowprob, draw=black, minimum width=1cm, minimum height=0.6cm, font=\tiny] at (1.5, 3.5) {0.02};
\node[rectangle, fill=lowprob, draw=black, minimum width=1cm, minimum height=0.6cm, font=\tiny] at (3, 3.5) {0.01};
\node[rectangle, fill=lowprob, draw=black, minimum width=1cm, minimum height=0.6cm, font=\tiny] at (4.5, 3.5) {0.17};
\node[rectangle, fill=highprob, draw=black, minimum width=1cm, minimum height=0.6cm, font=\tiny] at (6, 3.5) {0.79};

\node[rectangle, fill=lowprob, draw=black, minimum width=1cm, minimum height=0.6cm, font=\tiny] at (1.5, 2.5) {0.10};
\node[rectangle, fill=lowprob, draw=black, minimum width=1cm, minimum height=0.6cm, font=\tiny] at (3, 2.5) {0.04};
\node[rectangle, fill=highprob, draw=black, minimum width=1cm, minimum height=0.6cm, font=\tiny] at (4.5, 2.5) {0.68};
\node[rectangle, fill=lowprob, draw=black, minimum width=1cm, minimum height=0.6cm, font=\tiny] at (6, 2.5) {0.19};

\node[rectangle, fill=highprob, draw=black, minimum width=1cm, minimum height=0.6cm, font=\tiny] at (1.5, 1.5) {0.84};
\node[rectangle, fill=lowprob, draw=black, minimum width=1cm, minimum height=0.6cm, font=\tiny] at (3, 1.5) {0.08};
\node[rectangle, fill=lowprob, draw=black, minimum width=1cm, minimum height=0.6cm, font=\tiny] at (4.5, 1.5) {0.02};
\node[rectangle, fill=lowprob, draw=black, minimum width=1cm, minimum height=0.6cm, font=\tiny] at (6, 1.5) {0.06};

\node[rectangle, fill=lowprob, draw=black, minimum width=1cm, minimum height=0.6cm, font=\tiny] at (1.5, 0.5) {0.10};
\node[rectangle, fill=lowprob, draw=black, minimum width=1cm, minimum height=0.6cm, font=\tiny] at (3, 0.5) {0.04};
\node[rectangle, fill=lowprob, draw=black, minimum width=1cm, minimum height=0.6cm, font=\tiny] at (4.5, 0.5) {0.22};
\node[rectangle, fill=highprob, draw=black, minimum width=1cm, minimum height=0.6cm, font=\tiny] at (6, 0.5) {0.64};

\node[rectangle, fill=lowprob, draw=black, minimum width=1cm, minimum height=0.6cm, font=\tiny] at (1.5, -0.5) {0.32};
\node[rectangle, fill=highprob, draw=black, minimum width=1cm, minimum height=0.6cm, font=\tiny] at (3, -0.5) {0.61};
\node[rectangle, fill=lowprob, draw=black, minimum width=1cm, minimum height=0.6cm, font=\tiny] at (4.5, -0.5) {0.03};
\node[rectangle, fill=lowprob, draw=black, minimum width=1cm, minimum height=0.6cm, font=\tiny] at (6, -0.5) {0.04};

\draw[arrow] (0, 6) -- (1, 6) node[midway, above, font=\footnotesize] {Zero-Shot} node[midway, below, font=\footnotesize] {Classification};

\end{tikzpicture}
\end{figure}
The classification mapping is explained in Figure \ref{fig:zero-shot-classification}. The figure provides examples of 5 text messages from TCs and their associated LLM-based Zero-Shot classifier probabilities.\footnote{This method is implemented using the \texttt{transforEmotion} package in R \citep{christensen2024transforemotion}. Similar to the \texttt{text} package, any LLM-based Zero-Shot classifier with a pipeline on Huggingface can be implemented on the local machine without sending any data to an external server.} The four user-supplied classes are selected using LLM prompts to the latest AI models in a principled, iterative, and reproducible manner. See section \ref{inpulabelgeneration} in the Appendix for further details on the input label generation. In summary, for choosing the input labels, we generate 100 random samples of text messages stratified by message type. These random samples are then supplied to the Claude API (claude-sonnet-4) with a fixed query for 100 iterations. Finally, the input labels are chosen based on the clusters that well represent the LLM-generated labels in the 100 iterations. Notably, the LLM-based Zero-Shot classification method has been shown to achieve performance comparable to that of deep neural networks with access to labeled training data in biomedical applications \citep{meshkin2024harnessing}.

We display the scatter plot for class probabilities by message type. The classification appears to work well as the probabilities for ``rule violations" and ``disruptive conduct" are much higher for corrections, and we see the opposite for affirmations (Figure \ref{0shotclassprobbytype1} for females and Figure \ref{0shotclassprobbytype} for males, respectively). We review the methodology behind transformers, LLMs, and Zero-Shot classification in Section \ref{methodsdetails}.

\begin{figure}[h] 
  \caption{Class Probabilities and Message Type in Female Unit}
    \label{0shotclassprobbytype1} 
\centering 
 \begin{minipage}[b]{0.5\linewidth}
       \begin{center}

    \includegraphics[width=\linewidth]{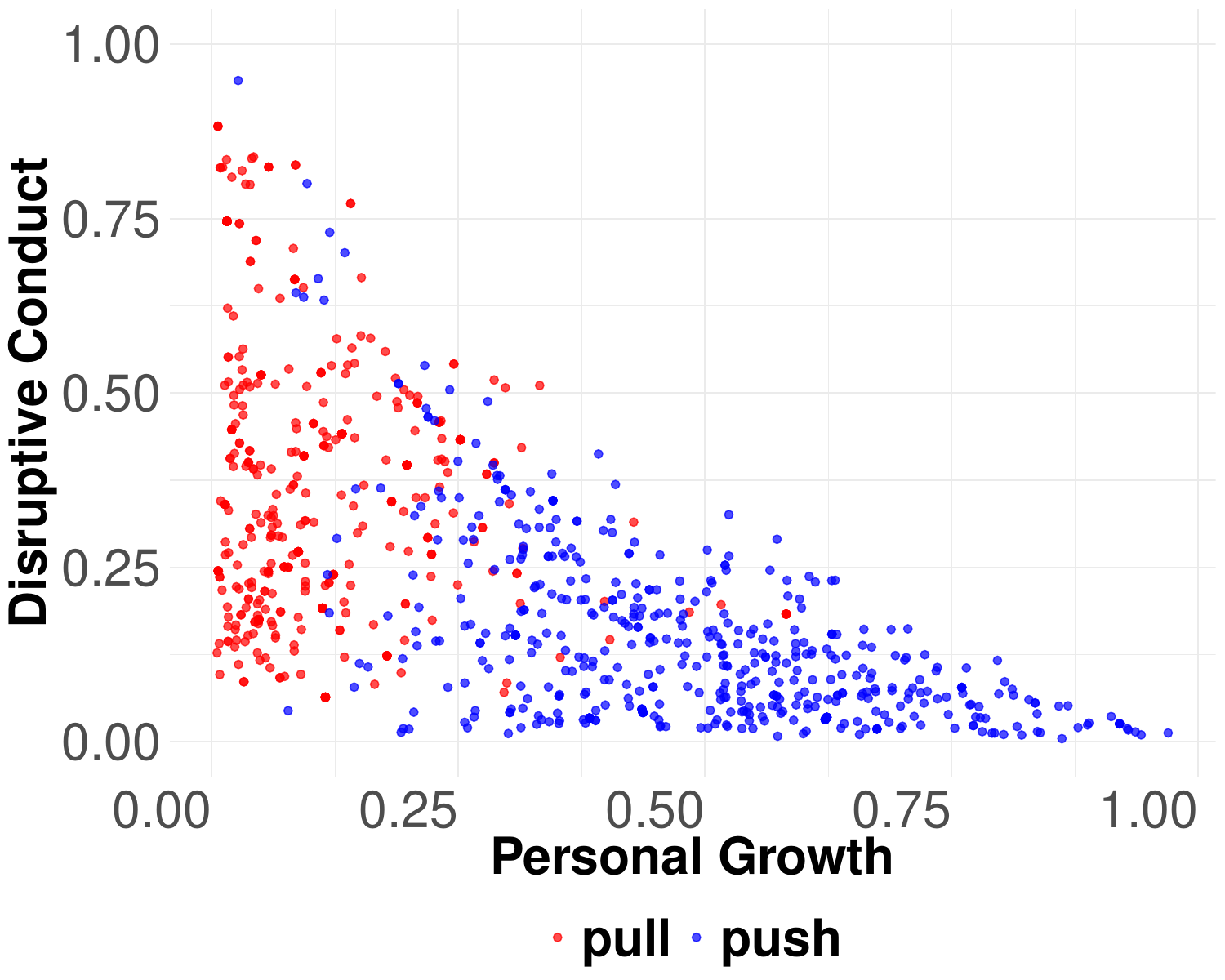}
        \end{center}

  \end{minipage}%%
  \begin{minipage}[b]{0.5\linewidth}
       \begin{center}

    \includegraphics[width=\linewidth]{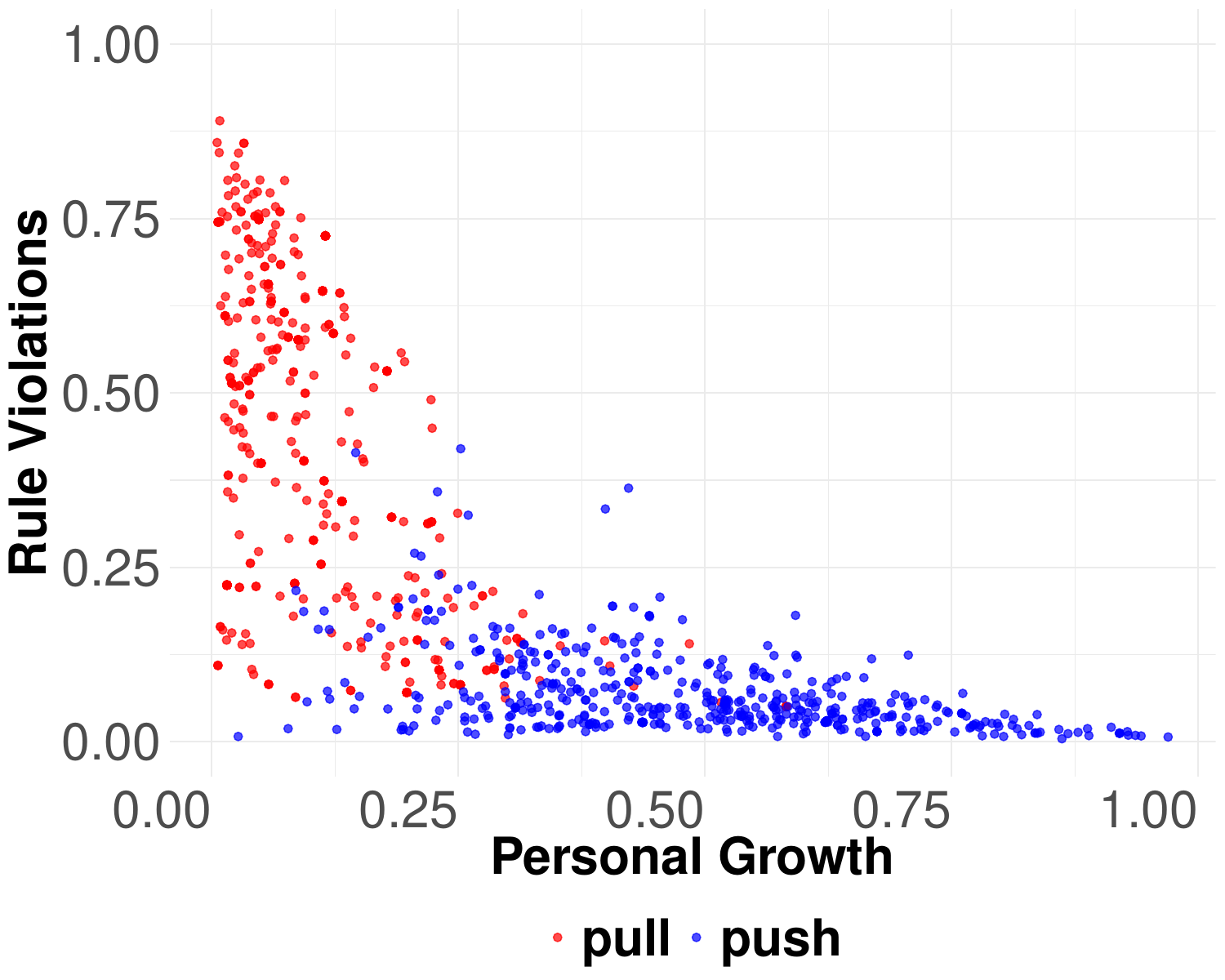}
        \end{center}

  \end{minipage}
   \begin{minipage}[b]{0.5\linewidth}
       \begin{center}

    \includegraphics[width=\linewidth]{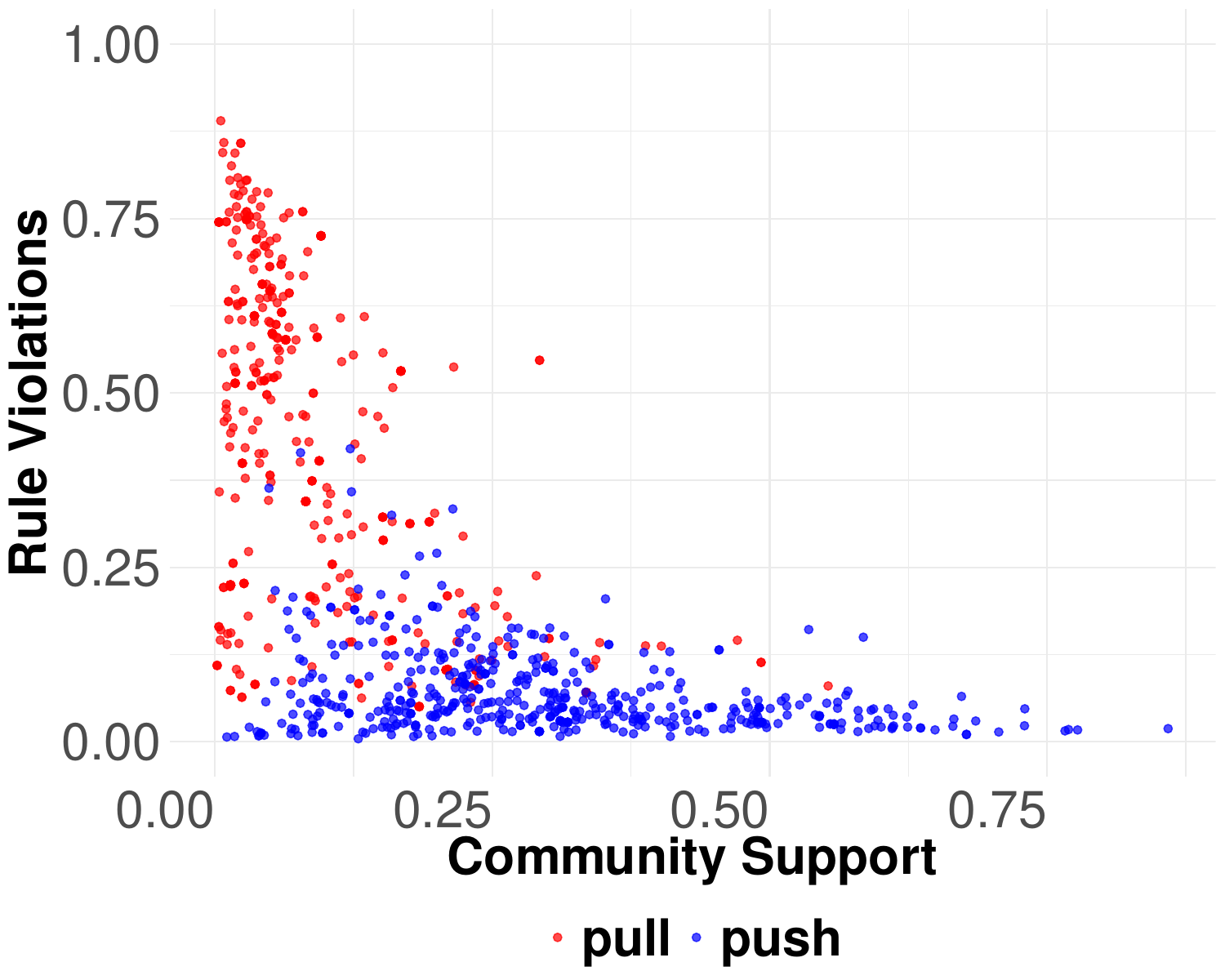}
        \end{center}

  \end{minipage}%%
  \begin{minipage}[b]{0.5\linewidth}
       \begin{center}

    \includegraphics[width=\linewidth]{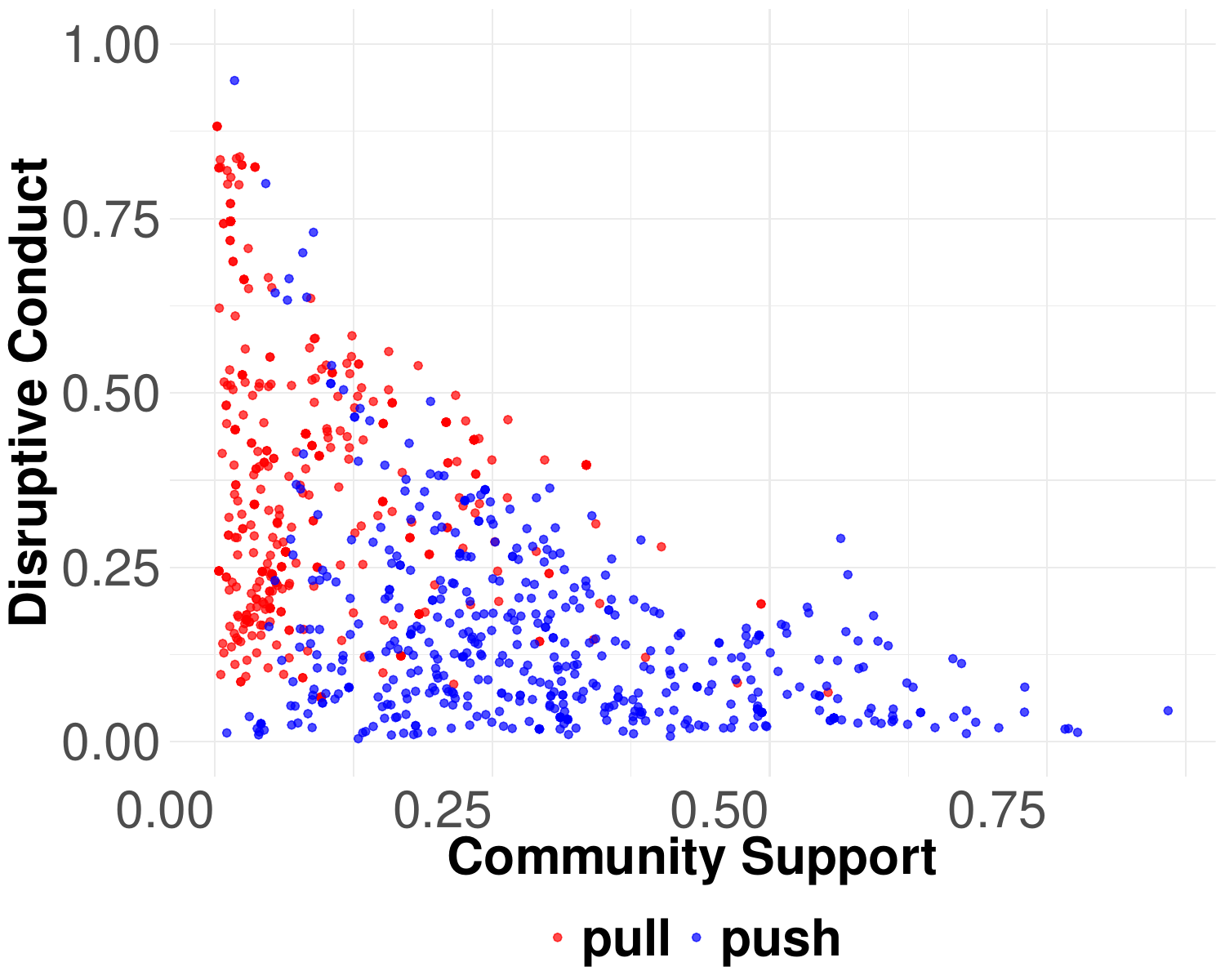}
        \end{center}

  \end{minipage}
    \begin{minipage}{15.0 cm}{\footnotesize{Notes: Random sample of 1000 text messages was drawn from the female unit for this graph, with 50\% sampling by message type.}}
\end{minipage} 
\end{figure}

Our final step is to study the underlying behavioral mechanism for these exchanges. Using the low-dimensional interpretable class probabilities, we set up a multivariate outcome model of peer effects that allows for contextual and endogenous peer effects and endogeneity in the network formation. The network links are generated using information on entry and exit date stamps, along with the dates, sender ID, receiver ID, and message exchange frequency. We describe our peer effect estimation methodology in more detail in section \ref{methodsdetails}.

\section{Methodology and Theory}
\label{methodsdetails}
We describe our methodology that includes five components (described over seven subsections): (1) obtaining embedding vectors of the text exchanges from a pre-trained language model, (2) classifying the messages into user-defined categories through Zero-Shot classification, (3) using the embedding vectors into a high dimensional penalized regression model for predicting recidivism, (4) peer effect estimation using multivariate outcomes and endogenous networks, (5) modeling the network with a sparse mutli-dimensional latent variable model.

\subsection{LLM embedding of text and Zero-Shot classification}

We deploy Pre-trained LLMs on the corpus of written messages to derive two representations: (1) context-sensitive embedding vectors and (2) probability vectors for meaningful psychological categories using Zero-Shot classification. We use the \textit{bert-base-uncased} model \citep{devlin2019bert} for (1), and the \textit{bart-large} model \citep{lewis2019bart} for (2). Before detailing the model structure, we briefly review the development of text embedding methods and discuss our rationale for adopting transformer-based AI models, such as BERT and BART. 

A text embedding is a high-dimensional numerical vector representation of text. If two words convey similar meanings, they are represented closely in the vector space. However, earlier word embedding models, such as Word2Vec \citep{mikolov2013efficient}, did not consider the context for words, limiting their ability to handle words with multiple context-specific meanings. Transformer-based models, like BERT and BART, overcome these limitations using self-attention mechanisms \citep{vaswani2017attention}, where the embedding for each word is dynamically influenced by the embedding of the entire input sequence, providing potentially different embeddings for each word based on context. The generative pre-training (GPT) framework \cite{Radford2018ImprovingLU}  uses semi-supervised learning in the transformer architecture. The model is pre-trained unsupervised on a large data corpus to learn embeddings or representations, followed by supervised fine-tuning for specific tasks. The language representation model BERT in \cite{devlin2019bert} further improved the capabilities of pre-trained language models by introducing a bi-directional transformer, which can use both left and right context for a word as opposed to unidirectional processing (typically from left to right) in OpenAI GPT-1 of \cite{Radford2018ImprovingLU}.

The first step to processing text is tokenizing it into words or sub-words $z_k$ for $k = 1, \ldots, K$, forming a sequence $\mathbf{z} = (z_1, \ldots, z_K)$ of tokens where $K$ is the total number of tokens for each written text. For each token position $k$, both BART and BERT construct an input embedding as a sum of token embedding $\mathbf{e}_k^{\mathrm{token}}$ and a positional embedding $\mathbf{e}_k^{\mathrm{pos}}$ as $\mathbf{e}_k = \mathbf{e}_k^{\mathrm{token}} + \mathbf{e}_k^{\mathrm{pos}}.$ In BERT, an additional segment embedding $\mathbf{e}_k^{\mathrm{seg}}$ is included if input text consists of a sentence pair: $\mathbf{e}_k = \mathbf{e}_k^{\mathrm{token}} + \mathbf{e}_k^{\mathrm{pos}} + \mathbf{e}_k^{\mathrm{seg}}.$ These vectors then form an input embedding matrix $\mathbf{E} = [\mathbf{e}_1, \ldots, \mathbf{e}_K]^\top \in \mathbb{R}^{K \times H}$. Here $H$ denotes the dimension of each embedding vector. Precisely, $H = 768$ for \textit{bert-base-uncased} and $H = 1024$ for \textit{bart-large}.

A sequence of transformer blocks processes $\mathbf{E}$, where the output of each block is fed as an input into the next block. Let the output of the $l$-th block be $\mathbf{E}^{(l)} \in \mathbb{R}^{K \times H}$, with the initial input being $\mathbf{E}^{(0)} = \mathbf{E}$. We review the processes involved in each of these transformer blocks and the multi-head self-attention mechanism in section \ref{attn} of the Appendix.

To numerically encode the semantic content of each message, we use the output from the 11th encoder block (i.e., the penultimate block) of \textit{bert-base-uncased} model to construct an ``embedding profile" for each participant --- a high-dimensional vector summarizing their language use across multiple messages during their stay in TC. We choose the penultimate layer as that layer is widely thought to contain the unsupervised representation of data in the representation learning framework \citep{bengio2013representation}, while the last layer is trained or fine-tuned to be task-specific. One can, in principle, also choose an embedding representation from an earlier layer. Let $\mathbf{f}^{\mathrm{token}}_{c,k} \in \mathbb{R}^H$ denote the embedding of the $k$-th token in the $c$-th message, extracted from the 11th encoder. We obtain the message-level embedding by taking the average of these vectors over all tokens in the message as
$\mathbf{f}^{\mathrm{msg}}_c=\frac{1}{N_c} \sum_{k=1}^{N_c} \mathbf{f}^{\mathrm{token}}_{c,k}$, where $N_c$ is the number of tokens in message $c$. We then compute the ``sender embedding profile'' for a sender $s$ by taking an average of the the message-level embeddings, $\mathbf{f}^{\mathrm{sender}}_s=\frac{1}{N_s} \sum_{c \in \mathcal{C}_s} \mathbf{f}_c^{\mathrm{msg}}$, where $\mathcal{C}_s$ denotes the set of messages sent by sender $s$, and $N_s = |\mathcal{C}_s|$ is the number of such messages. The embedding profile for each receiver can be obtained analogously. 

\begin{figure}[!h]
    \centering
        \caption{Process of Zero-Shot classification}
    \label{fig:bart-classificationl}
    \includegraphics[width=0.75\linewidth]{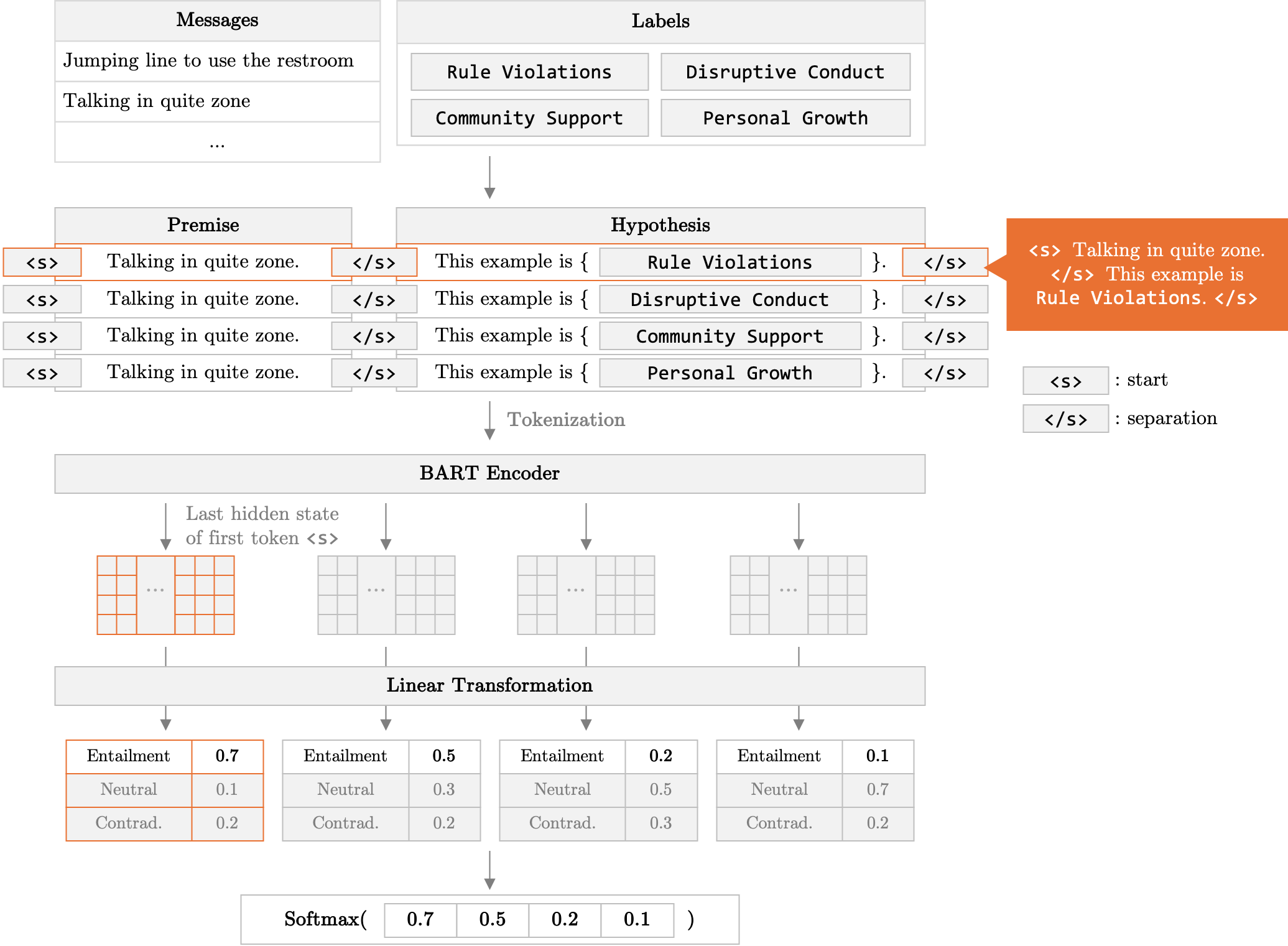}
\end{figure}
\subsection{Zero-Shot Classification}

To interpret the embeddings, we relate them to interpretable psychological constructs, using the ``Zero-Shot" classification framework \citep{yin2019benchmarkingzeroshottextclassification}. In Zero-Shot classification, a user provides an arbitrary number of labels, and the LLM assigns given texts to the most appropriate label. This enables us to assign written affirmations and corrections to psychologically relevant categories without requiring additional training.

In Zero-shot learning, the classification problem is turned into a language inference problem. We use \textit{bart-large} model fine-tuned on the MNLI (Multi-Genre Natural Language Inference) task dataset, a dataset designed for language inference task. In MNLI, each input consists of a premise and a hypothesis, and the pre-trained BART model is trained to classify the semantic relationship between them as one of three predefined MNLI classes: entailment, neutral, or contradiction. A premise-hypothesis combination is concatenated into a single input sequence using special tokens following the format: \texttt{<s>  premise </s>  hypothesis </s>}, which allows the model to process both segments jointly while preserving their roles.

The encoder uses a self-attention mechanism that allows the model to capture contextual dependencies between tokens across both segments. This enables the model to assess how strongly the premise supports, contradicts, or is unrelated to the hypothesis. During training, the embedding of the first token \texttt{<s>} from the final encoder layer is passed to a classification head, which outputs a logit vector over the three MNLI classes. 

Figure \ref{fig:bart-classificationl} illustrates how a TC text message is classified into user-defined categories using this framework. In our application, each peer message is treated as a premise, and we define four psychologically meaningful word strings: ``Rule Violations", ``Disruptive Conduct", ``Community Support", and ``Personal Growth". Each of the four strings is converted into a hypothesis using a fixed template: ``This example is \{\}." 

For each message $c$, the model generates four premise-hypothesis pairs by combining the message with each label-based-hypothesis indexed by $\ell=1, \ldots,4$. 
For each pair, the model obtains a probability vector for entailment $\mathbf{s}_{c} =(p_{c,\ell}^{\mathrm{entail}})_{\ell=1}^4 \in \mathbb{R}^4$, which reflect how strongly the input message supports each of the user-defined categories. These four entailment scores are then passed through a softmax function to obtain a probability vector over the labels: $\boldsymbol{\pi}_c = \mathrm{softmax}(\mathbf{s}_c)$. Finally, we aggregate these scores at the level of individuals to create personal class profiles, analogous to the embedding-based profiles described earlier, but in a more tractable lower-dimensional space.

\subsection{Prediction models}

We fit a penalized regression lasso method \citep{tibshirani1996regression} to predict recidivism using the resident-level covariates along with the individual embedding profiles. Suppose $Y_i$ denotes the recidivism status for individual $i$, which takes the value $1$ if the individual recidivates and $0$ if the person does not recidivate. Therefore, we can model $Y_i$ with a Bernoulli distribution with probability of success being $p_i$ and then further model $p_i$ using the covariates and embedding profiles. However, since the embedding profiles are high-dimensional, we use a penalized logistic regression method (logistic-lasso). Suppose $\bm X$ is the $N\times p$ matrix of observed covariates and $\bm T$ is the $N\times H$ matrix containing the dimensions of text embeddings obtained from the LLM model as columns. Then the logistic regression model is $\log \left(\frac{p_i}{1-p_i}\right)=\beta_{0}+\bm X_i\beta_{1}+\bm T_i\beta_{2} $, where $\beta_1$ and $\beta_2$ are $p$ and $H$ dimensional parameter vectors respectively. We optimize for $\beta_1, \beta_2$ by maximizing the penalized likelihood function with $\ell_1$ penalty as follows,
\[
(\hat{\beta}_1, \hat{\beta}_2) = \text{argmin} \{ - \ell (\beta_1,\beta_2, \bm X, \bm T) + \lambda_1 \|\beta_1\|_1 + \lambda_2\|\beta_2\|_1\}. 
\]

The parameters $\lambda_1, \lambda_2$ denote the penalty parameters which maybe different for the observed covariates and the embedding vectors. We assess the model's predictive accuracy with out-of-sample AUC values through a five-fold cross-fitting.

Beyond predicting recidivism, we also want to interpret the link between individual behavioral and emotional profiles as manifested by the messages with recidivism. In order to do so we perform a Zero-Shot classification as described before to classify the messages into the four groups. 
Let $\bm Q$ denote the $N \times 4$ matrix containing the average of probabilities assigned to the messages sent (respectively received) by each individual. Since the predictors are now low-dimensional (just four dimensions in addition to the covariates), we will not need to penalize the coefficients. However, we note that some of the covariates are now ``compositional", since $\sum_j Q_{ij}=1$ for each resident $i$. Therefore, following standard techniques of dealing with compositional predictors, we designate one covariate as baseline and take log ratios (ALR) of other covariates with respect to the baseline covariate \citep{aitchison1982statistical}. This is important for interpreting the model coefficients, but not for prediction accuracy. \footnote{We note that we could just continue with the original predictors and drop the intercept, and have a model with comparable prediction accuracy.}

\subsection{Multivariate Peer Effects Model with Endogenous Network}
We consider the multivariate peer effect model, which captures dependencies among multiple outcomes observed across spatially or network-connected entities.  Suppose we have $N$ entities and we observe $m$ outcome variables, and $p$ covariates for each entity. Further, we observe a network among the entities whose adjacency matrix is $\mathbf{A}$, where the elements $a_{ij}$ represent the relationship between entities $i$ and $j$.  Then we define the multivariate peer effect model as:
\begin{equation}
\mathbf{Y} = \mathbf{G} \mathbf{Y} \mathbf{D} + \mathbf{X} \mathbf{B}_1 + \mathbf{G} \mathbf{X} \mathbf{B}_2  + \mathbf{E},
\label{MSARmodel}
\end{equation}
where $\mathbf{Y} \in \mathbb{R}^{N \times m}$ is the matrix of outcome variables, $\mathbf{G}=(g_{ij})\in\mathbb{R}^{N \times N}$ where $g_{ij}:=\frac{a_{ij}}{\sum_{j\ne i}a_{ij}}$, is the row-normalized counterpart of the network adjacency matrix $\mathbf{A}$, and  $\mathbf{X} \in \mathbb{R}^{N \times p}$ is the matrix of observed covariates. 
We write $\bm Y_i\in\mathbb{R}^m$ and $\bm Y_{.,j}\in\mathbb{R}^N$ to denote the $i$-th row and $j$-column vectors of an $N \times m$ matrix $\mathbf{Y}$, respectively. Matrix $\mathbf{D} \in \mathbb{R}^{m \times m}$ is the peer effect parameter matrix. The diagonal of $\mathbf{D}$ are the direct peer spillover effects on the same dimensions of $\mathbf{Y}$, and the off-diagonal elements are the indirect peer spillover effects of a dimension on another dimension.  The matrices $\mathbf{B}_1, \mathbf{B}_2 \in \mathbb{R}^{p \times m}$ are coefficient matrices for $\mathbf{X}$ and $\mathbf{G}\mathbf{X}$, respectively. 

However, since the network is endogenous, we cannot identify and consistently estimate $\mathbf{D}$ from this model using \cite{bramoulle2009identification}'s  IV 2LSLS framework. For concreteness, we assume for each individual there is a vector $\mathbf{U}_i$ of latent homophily variables that is correlated with both the network adjacency matrix $\mathbf{A}$ and the error matrix $\mathbf{E}$. We assume the error matrix $\mathbf{E} \in \mathbb{R}^{N \times m}$ is such that for each row $\mathbb{E} [ \mathbf{E}_i| \mathbf{U}_i] =\bm{h}^E(\bm U_i)$ for some unknown function $\bm h^E$, and $ \mathbb{V}[ \mathbf{E}_i| \mathbf{U}_i] = \mathbf{V}$. The matrix $\mathbf{V}$ allows dependence across outcome dimensions. The vectors $\mathbf{E}_i,\mathbf{X}_i,\mathbf{U}_i$ are assumed to be i.i.d. across individuals $i$. We let $A_{ij}=f(\mathbf{U}_i,\mathbf{U}_j,\xi_{ij})$, where $\xi_{ij}$s over all $(i,j)$ are assumed to be i.i.d and independent of $\mathbf{X,E,U}$. We postpone the discussion on the function $f$ and the models for network formation until the next section. We further assume that $\E [\bm E_i | \bm X_i, \bm U_i] = \E[\bm E_i | \bm U_i]$, i.e., conditional on $\bm U_i$, the covariates $\bm X_i$ and error term $\bm E_i$ are uncorrelated.

This model is similar to the multivariate extension of the spatial autoregressive (MSAR) model proposed in \citep{zhu2020multivariate}. However, the model in \cite{zhu2020multivariate} can be used for peer effect estimation only under the assumption of exogenous network formation ($\mathbf{G}$ is uncorrelated with $\mathbf{E}$) and the proposed estimator is the maximum likelihood estimator and not IV2SLS. Further, model (\ref{MSARmodel}) can also be thought of as extending the endogenous and simultaneous peer effect model in \cite{johnsson2021estimation}, both in terms of multidimensional outcome and latent variables.

To identify the parameter matrices, we take an instrumental variable approach similar to \cite{johnsson2021estimation}, but extend the methodology in several directions as we detail below. 
We define the combined regressor and instrument matrices as:
\[
\mathbf{Z} = [\mathbf{G}\mathbf{Y},  \mathbf{X}, \mathbf{G}\mathbf{X}] \in \mathbb{R}^{N \times (m + 2p)}, \quad
\mathbf{K} = [\mathbf{X}, \mathbf{G}\mathbf{X}, \mathbf{G}^2\mathbf{X}] \in \mathbb{R}^{N \times 3p}.
\]
The stacked parameter matrix is:
\[
\boldsymbol{\beta} = [\mathbf{D},  \mathbf{B}_1,  \mathbf{B}_2]^T \in \mathbb{R}^{(m + 2p) \times m}.
\]
Using this notation, equation \eqref{MSARmodel} can be written compactly as $
\mathbf{Y} = \mathbf{Z} \boldsymbol{\beta} + \mathbf{E}.$
A natural estimator for $\boldsymbol{\beta}$ is the two-stage least squares (2SLS) estimator:
\begin{equation}
\hat{\boldsymbol{\beta}}_{\text{IV}} = (\mathbf{Z}^\top \mathbf{K} (\mathbf{K}^\top \mathbf{K})^{-1} \mathbf{K}^\top \mathbf{Z})^{-1} \mathbf{Z}^\top \mathbf{K} (\mathbf{K}^\top \mathbf{K})^{-1} \mathbf{K}^\top \mathbf{Y}.
\label{IVsol}
\end{equation}

This estimator is a multivariate extension of the estimator proposed in \cite{bramoulle2009identification,kelejian1998generalized} and is a consistent estimator provided the network is exogenous, which is not the case in our setup. Taking expectations on both sides of Equation \ref{MSARmodel} conditioning on $\bm U$ and subtracting this from equation \ref{MSARmodel}, we obtain
\begin{equation}
\bm Y- \E[\bm Y|\bm U] = (\bm Z-\E[\bm Z|\bm U])\bm\beta + (\bm E-\E[\bm E|\bm U]).
\label{meansubtract}
\end{equation}
If we redefine, $\tilde{\bm Y} = \bm Y- \E[\bm Y|\bm U]$, $\tilde{\bm Z} = \bm Z-\E[\bm Z|\bm U]$, and $\tilde{\bm K} = \bm K - \E[\bm K| \bm U].$ Then we can write the above equation as, $
\tilde{\mathbf{Y}} = \tilde{\mathbf{Z}} \boldsymbol{\beta} + \tilde{\mathbf{E}},$
where $\tilde{\mathbf{E}} = \mathbf{E} - \mathbb{E}[\mathbf{E} \mid \mathbf{U}]$. For this redefined linear model we can solve for $\bm \beta$ using IV-2SLS with the redefined instrument matrix $\tilde{\mathbf{K}}$. The moment equation for this is given by
\begin{equation}
\E [(\mathbf{K}_i- E[\mathbf{K}_i|\mathbf{U}_i])^T(\mathbf{Y}_i - \E [ \mathbf{Y}_i|\bm U_i] -(\mathbf{Z}_i- E[\mathbf{Z}_i|\mathbf{U}_i])\bm \beta \,)] =0
    \label{momenteq}
\end{equation}
The following proposition provides sufficient conditions for the identification of the true parameter matrix $\beta_0$.
\begin{Proposition}
    Under assumptions that the $3p \times (m+2p)$ matrix $\E [(\mathbf{K}_i- E[\mathbf{K}_i|\mathbf{U}_i])^T((\mathbf{Z}_i- E[\mathbf{Z}_i|\mathbf{U}_i])]$ has full column rank, the true parameter $\beta_0$ can be obtained by solving the $3p \times m$ moment matrix condition in Equation \ref{momenteq}.
    \label{identify}
\end{Proposition}

Therefore, given access to the latent positions $\mathbf{U}$ and the three conditional mean functions, the parameter $\boldsymbol{\beta}_0$ can be estimated using 2SLS. Note that the rank condition in Proposition \ref{identify} can generally be satisfied with $p>m$, i.e., if we have more covariates than dimensions. However, the moment equation still cannot be estimated as we do not observe $\mathbf{U}$ and do not know the conditional expectations of $\mathbf{Y,Z,K}$ given $\mathbf{U}$. We next address how to estimate both of these components.

\subsection{Latent positions and Non-Parametric estimation with Sieves}

As mentioned earlier, we model the network data using a latent variable model $A_{ij}=f(\mathbf{U}_i,\mathbf{U}_j,\xi_{ij})$ and posit that the node level latent variables $\mathbf{U}_i$ are involved in both the network formation model and are correlated with the error term of the outcome model. These latent variables represent unobserved characteristics of individuals, which may be responsible for tie formation (latent homophily), and an unknown function of these latent variables is part of the error term in the outcome model. Therefore, these latent variables can be estimated from the observed network. This is a core assumption made in recent methodological advances in peer effect estimation \citep{mcfowland2021estimating,johnsson2021estimation,nath2022identifying,goldsmith2013social}. However, while \cite{mcfowland2021estimating,nath2022identifying} focused on the longitudinal peer effect model, we consider the simultaneous peer effect model. Further, \cite{mcfowland2021estimating,nath2022identifying} assumed the latent variables enter the model for $\mathbf{Y}$ linearly as $\mathbf{U\beta}$. In contrast, we assume the conditional expectations of $\mathbf{Y,Z,K}$ to be unknown functions $\mathbf{h(U)}$ of the latent variables. In this aspect, our approach resembles \cite{johnsson2021estimation}; however, we consider more general latent variable models for network data with multi-dimensional latent variables, which are more realistic for modeling individual characteristics. Further, the theoretical results in \cite{johnsson2021estimation} require the network to be ``dense'' in the sense that every individual is expected to be connected to $O(N)$ individuals and consequently, the total number of edges in the network is expected to be $O(N^2)$. This is quite unrealistic as real-world networks tend to be sparse with individuals connected to only a few other people, even when the network size is huge (e.g., in online social networks, even if there are millions of people present in the network, an individual member only has a few hundred or thousands of connections). In contrast, our results here hold for sparse networks.

We define the stacked data matrix $\mathbf{W} := [\mathbf{Y}, \mathbf{Z}, \mathbf{K}] \in \mathbb{R}^{N \times (2m+5p)}$. The conditional mean functions are defined as $\bm{h}^Y(\bm{U}) := \mathbb{E}[\bm{Y} \mid \bm{U}]$, $\bm{h}^Z(\bm{U}) := \mathbb{E}[\bm{Z} \mid \bm{U}]$, and $\bm{h}^K(\bm{U}) := \mathbb{E}[\bm{K} \mid \bm{U}]$. These are collected into the full conditional mean matrix $\bm{h}(\bm{U}) := [\bm{h}^Y(\bm{U}), \bm{h}^Z(\bm{U}), \bm{h}^K(\bm{U})] \in \mathbb{R}^{N \times (2m+5p)}$. The function $\mathbf{h}: \mathbb{R}^d \to \mathbb{R}^{2m+3p}$ is an unknown non-linear function of $\mathbf{U}_i$. Let \( h_j(\cdot):\mathbb{R}^N\to\mathbb{R} \) denote the \( j \)th component of \( \bm{h}(\cdot) \), for \( j = 1, \dots, (2m+5p) \). Then,  for each \( \bm{U}_i \), \( h_j(\bm{U}_i) \in \mathbb{R} \) is the scalar value of the \( j \)th conditional mean component. 
In what follows, to keep the notation simple, let $\mathbf{u}$ denote a generic row $\mathbf{U}_i$.

To estimate each component function $h_j(\bm{u})$, we use tensor-product basis functions \cite{zhang2023regression} given by $\phi_k(\bm{u}) = \prod_{\ell=1}^{d} \psi_{k_\ell}^{(\ell)}(u_\ell)$, where each $\psi_{k_\ell}^{(\ell)}$ is a univariate basis function (polynomial or cosine) on the $\ell$-th coordinate of $\mathbf{u}$.  With these basis functions, we approximate each component function \( h_j(\bm{u}) \), by a linear combination of basis functions:
\[
h_j(\bm{u}) \approx \sum_{k=1}^{L_N} \phi_k(\bm{u}) \, \alpha_k^j = \sum_{k=1}^{L_N} \prod_{\ell=1}^{d} \psi_{k_\ell}^{(\ell)}(u_\ell) \, \alpha_k^j ,
\]
where \( \bm{\alpha}^j = (\alpha_1^j, \dots, \alpha_{L_N}^j)^\top \in \mathbb{R}^{L_N} \) is the coefficient vector, and the number of basis functions to use $L_N$ is possibly a function of $N$. Note this formulation allows for different univariate basis functions indexed by $\psi_{k_\ell}$ for each component basis $\phi_k$ and each dimension of $u_l$. It also allows for interactions among the predictors.

For the generic vector $\mathbf{u}$, let \(
\bm{\phi}^{L_N}(\bm{u}) := \left( \phi_1(\bm{u}), \dots, \phi_{L_N}(\bm{u}) \right)^\top \in \mathbb{R}^{L_N}
\). Then we construct the design matrix for all $N$ observations as:  $\mathbf{\Phi}_N := \left[\bm{\phi}^{L_N}(\bm{u}_1), \ldots, \bm{\phi}^{L_N}(\bm{u}_N)\right]^T \in \mathbb{R}^{N \times L_N}$. The coefficient vector \( \bm{\alpha}^j \) is obtained by regressing the observed vector \( \mathbf{w}_{\cdot j} \) on the basis matrix \( \mathbf{\Phi}_N \) via ordinary least squares, as $\hat{\bm{\alpha}}^j = (\mathbf{\Phi}_N^\top \mathbf{\Phi}_N)^{-1} \mathbf{\Phi}_N^\top \mathbf{w}_{\cdot j}$. The fitted values for the \( j \)th component function at all sample points are then given by: $
\hat{h}_j(\bm{U}) := \mathbf{\Phi}_N \hat{\bm{\alpha}}^j = \mathbf{P}_{\mathbf{\Phi}_N} \, \mathbf{w}_{\cdot j},$ where \( \mathbf{P}_{\mathbf{\Phi}_N} := \mathbf{\Phi}_N (\mathbf{\Phi}_N^\top \mathbf{\Phi}_N)^{-} \mathbf{\Phi}_N^\top \) is the projection matrix associated with the basis space, and \( A^{-} \) denotes any symmetric generalized inverse of $A$.

Since latent positions $\bm{U}$ are unknown, we use a two-step procedure. For estimating $\bm U $, we model the network using latent variable models such as the RDPG model \citep{athreya2017statistical,rubin2022statistical,xie2023efficient} and the additive and multiplicative effects latent space model \citep{ma2020universal,hoff2021additive,hoff2002latent,li2023statistical}. 
Then our two-stage procedure is as follows. 
(i) we first construct an estimator \( \widehat{\bm{U}} := (\widehat{\bm{u}}_1, \dots, \widehat{\bm{u}}_N)^\top \) for the latent traits using spectral embedding for RDPG models or maximum likelihood estimation for additive and multiplicative effects latent space models, and  
(ii) we evaluate the basis functions at \( \widehat{\bm{U}} \) and apply the same projection strategy. With these estimated latent vectors, we define the estimated design matrix as $
\widehat{\mathbf{\Phi}}_N := \mathbf{\Phi}_N(\widehat{\bm{U}}) = 
\left[
\bm{\phi}^{L_N}(\widehat{\bm{u}}_1),
\ldots,
\bm{\phi}^{L_N}(\widehat{\bm{u}}_N) \right]
\in \mathbb{R}^{N \times L_N}.$
The final estimated conditional mean functions are $\hat{\bm h}^W(\widehat{\bm{U}}) := \mathbf{P}_{\widehat{\mathbf{\Phi}}_N}\bm W$ where $\mathbf{P}_{\widehat{\mathbf{\Phi}}_N} := \widehat{\mathbf{\Phi}}_N (\widehat{\mathbf{\Phi}}_N^\top \widehat{\mathbf{\Phi}}_N)^{-1} \widehat{\mathbf{\Phi}}_N^\top$.

Denote  $\bm M_{\hat{\bm \Phi}_N} = \bm I_N - \bm P_{\hat{\mathbf{\Phi}}_N}$.
Then, our Two-Stage Least Squares (2SLS) estimator is:
\begin{align}
    \hat{\bm\beta}_{2SLS} = \left(\bm Z_N^T \bm M_{\hat{\bm \Phi}_N} \bm K_N \left(\bm K_N^T \bm M_{\hat{\bm \Phi}_N} \bm K_N\right)^{-1} \bm K_N^T \bm M_{\hat{\bm \Phi}_N} \bm Z_N\right)^{-1}\nonumber \\
    \times  \bm Z_N^T \bm M_{\hat{\bm \Phi}_N} \bm K_N \left(\bm K_N^T \bm M_{\hat{\bm \Phi}_N} \bm K_N\right)^{-1}\bm K_N^T \bm M_{\hat{\bm \Phi}_N} \bm Y_N.\label{eq-2sls}
\end{align}

\subsection{Network models}\label{sec:lsm}
In terms of latent variable network models, we first consider the Latent Space Model (LSM), which is a general random graph model that includes both additive and multiplicative latent variables and can also accommodate covariates \citep{hoff2002latent,hoff2021additive,ma2020universal,li2023statistical}. The model is parameterized by $d$-dimensional unknown vectors $\bm q_i$ and unknown scalars $v_i$ for all nodes $i=1, \ldots, n$. 

We also assume that we have edge level covariates $\bm X= (x_{ij})$ available to us. A network adjacency matrix $\bm A= (a_{ij})$ from this model is generated as follows \citep{li2023statistical,ma2020universal}:
\[a_{ij}\overset{ind.}{\sim}f_{ij}:= f(a;\theta_{ij}),\qquad \theta_{ij}:=\sigma({\bm q_i}'\bm q_j+v_i+v_j + x_{ij}\beta), \quad 1\leq i<j\leq n\]
where $f$ is a family of distributions satisfying certain smoothness conditions, and $\sigma:\mathbb{R}\to\mathbb{R}$ is a known link function. In this paper, we consider a particular sparse version of the model due to \cite{li2023statistical} that assumes $f$ to be a Bernoulli distribution, link function $\sigma$ to be the logistic function, and there exists a sparsity parameter $\rho_N$ such that $\theta_{ij}=logistic({\bm q_i}'\bm q_j+v_i+v_j + x_{ij}\beta +\rho_N)$. Defining $\omega_N = \exp(\rho_N)$, we assume $\omega_N \to 0$ and $\omega_N=\omega(N^{-1/2})$.
We let $\bm u_i\triangleq(\bm q_i',v_i)'$ denote the $d+1$ dimensional parameter vector containing all latent variables associated with node $i$. 
%Then, we can write the latent variable matrix collectively as\[\bm U:=[\bm u_1,\dots,\bm u_N]'=\bm Q\bm Q'+\bm v\bm 1_N'+\bm1_N\bm v'.\]
A special case of this model is Random Dot Product Graph (RDPG) model, which can be expressed as follows,
\[(a_{ij}|\bm u_i,\bm u_j)\overset{ind.}{\sim}\operatorname{Bernoulli}(\rho_N \bm u_i'\bm u_j),\]
where $\rho_N$ again controls the sparsity of the model. We will assume $\rho_N =\omega(\frac{\log^4N}{N})$. Note the expected density (i.e., probability of an edge) scales with $N$ as $\rho_N$, or in other words, the expected degree scales as $N\rho_N$ in this model, making the model suitable for sparse networks. This model does not account for the additive latent variables $v_i$'s but contains $d$-dimensional multiplicative latent variables, and the link function $\sigma(\cdot)$ is specified as identity. Clearly, both these models allow for multidimensional latent variables in the network formation model and are more general than the single latent variable model considered in \cite{johnsson2021estimation}. Both models are also capable of modeling sparse networks, which the models in \cite{johnsson2021estimation,graham2017econometric,auerbach2022identification} are not capable of. We compare the model fit of the latent-space network models with the network model considered in \cite{johnsson2021estimation,graham2017econometric} for the network data in our real-data analysis application (see section \ref{comparisonofnet}).

We employ the maximum likelihood estimator with Lagrange adjustment in \citet{li2023statistical} for the estimation of the LSM model.
Let $\hat{\bm U}$ denote the maximum likelihood estimator over the constrained parameter space 
\[\Xi_d:=\{\bm U\in\mathbb{R}^{N\times(d+1)} \,;\,{\bm Q}'\bm 1_N = \bm 0_d, \,{\bm Q}'{\bm Q} \text{ is diagonal, } \|\bm U\|_{2,\infty} = O(1)\text{ as }N\rightarrow \infty\}.\]

 Under the set of assumptions laid out in \cite{li2023statistical}, an upper bound on the error rate of estimating the latent positions in Frobenius norm \citep{li2023statistical} is $\frac{1}{N}\|\hat{\bm U}-\bm U\|_F^2=O_p(\frac{1}{N\omega_N})$.

Although the LSM described above contains the RDPG model as its special case, there exists a rich amount of results with better convergence rates of the estimate of $\bm U$ when one uses spectral embedding to estimate $\bm U$ in the case of RDPG model \citep{athreya2017statistical,cape2019signal,xie2023efficient,chang2024embedding}.
Specifically, the spectral embedding of the adjacency matrix $A$ can be written as 
$\hat{\bm U} = \bm U_A|\bm S_A|^{1/2}$,
where $\bm{U}_A$ denotes the matrix containing the leading $d$ eigenvectors of $\bm{A}$, which are assumed, without loss of generality, to be ordered by decreasing absolute eigenvalue magnitude. The matrix $|\bm{S}_A|$ is a diagonal matrix containing the absolute values of the corresponding eigenvalues.
Then, a faster convergence rate than LSM can be obtained in terms of the $2\to \infty$ norm (maximum of row-wise $\ell_2$ norms) as  $\|\hat {\bm U} - \bm U\bm H\|_{2,\infty}  = O_{hp}\left(\frac{\log^c N}{N^{1/2}}\right)$ \citep{rubin2022statistical,cape2019signal}. Here we mean $X_n=O_{hp}(1)$ by: for every $c>0$ there exist constants $M(c),n_0(c)>0$ such that $\Pr(|X_n|>M)<n^{-c}$ for all $n>n_0$.

Our method, combining all the above components, is described in Algorithm \ref{algorithm1}. Note that the step estimating $\hat{U}$ with spectral embedding in the algorithm will be replaced with MLE when the latent space model is used for the underlying  network.

\subsection{Theoretical results}\label{sec:assmp}
For a matrix $A$ we denote its spectral norm as $\|A\|=\max_{\|x\|=1}|Ax|$, Frobenius norm as $\|A\|_F=\sqrt{\sum_{ij}a_{ij}^2}$ and the $ 2 \to \infty$ norm as $\|A\|_{2, \infty}=\max_{i}\sqrt{\sum_j a_{ij}^2}$. We use the stochastic order notation $o_p(1)$ to mean that if $x_N = o_p(1)$, then $x_N \overset{p}{\to} 0$. 
We start with a list of assumptions. The first assumption is a collection of conditions described above in the description of the model, making the model well-behaved and identification possible, similar to those in  \citep{johnsson2021estimation}.
\begin{Assumption}\label{assmp:jnm}
    The vectors $(\bm X_i,\bm U_i,\bm E_i)$ are i.i.d. across individuals $i=1, \ldots, N$,  $A_{ij}=f(\mathbf{U}_i,\mathbf{U}_j,\xi_{ij})$, where $\xi_{ij}$s over all $(i,j)$ are i.i.d and independent of $\mathbf{X,E,U}$, and $\E(\bm E_i|\bm X_i,\bm U_i)=\E(\bm E_i|\bm U_i)$.
    In addition, let $\E(\bm X_1)=\bm 0_p$, and we have a universal constant $M>0$ such that 
     $\Pr\left(\|\bm e_1\|_{\infty}\le M\right)=\Pr\left(\|\bm x_1\|_\infty\le M\right)=1.$
\end{Assumption}
We note again that the network $\bm A$ is endogenous by the dependence between $\bm U_i$ and $\bm E_i$. A consequence of Assumption \ref{assmp:jnm} is that
\begin{align*}
    \E(\bm E_i|\bm X_i,\bm A,\bm U_i)=\E(\bm E_i|\bm X_i,\bm A(\bm U_i,\bm U_{-i},\xi_{ij}),\bm U_i)=\E(\bm E_i|\bm X_i,\bm U_i)=\E(\bm E_i|\bm U_i),
\end{align*}
i.e., covariates and observed network are uncorrelated with the model error conditional on the latent variables $\bm U_i$.

\begin{Assumption}\label{assmp:sieve}
    Assume the following on the Sieve estimation.
    \begin{enumerate}
        \item[(i)] $L_N=o(N)$.
        \item[(ii)] We have
        \[\lim_{N\to\infty}\bm v'\E[\bm\phi^{L_N}\left(\bm u_1\right) \bm\phi^{L_N}\left(\bm u_1\right)^{\prime}]\bm v=\lim_{N\to\infty}\sum_{1\le i,j\le L_N}v_iv_j\E\left[\prod_{1\le k,l\le d} \psi_{i_k}^{(k)}(u_{1k})\psi_{j_l}^{(l)}(u_{1l})\right]>0\]
        for any $\bm v=(v_1,\dots,v_{L_N})\in S^{L_N-1}$.
        \item[(iii)] Almost surely, there exists $\zeta_0\left(L_N\right)$ such that 
        \[\sup _{\bm u\in \mathcal{U}} \|\bm\phi^{L_N}(\bm u)\|^2=\sup _{\bm u\in \mathcal{U}} \sum_{j\le L_N}\left(\prod_{k\le d}\psi^{(k)}_{j_k}(u_k)\right)^2\le
        \zeta_0\left(L_N\right)^2~\&~\frac{\zeta_0(L_N)^2L_N}{N}=o(1),\]
        and a sequence $\boldsymbol{\alpha}_{L_{N}}^j$ and a number $\kappa>0$ such that
        $$
        \sup_{f\in\{Y,Z,K\}}\sup _{j\in {\cal I}(f)}\sup _{\bm u\in \mathcal{U}}\left|h^f_j(\bm u)-\bm\phi^{L_N}(\bm u)'\bm\alpha_{L_{N}}^j\right|=O\left({L_N}^{-\kappa}\right)
        $$
        and ${L_N}^{-\kappa}=o(1/\sqrt{N})$.
    \end{enumerate}
\end{Assumption}
The above assumption is an extension of the assumption in \cite{johnsson2021estimation} to multivariate basis functions which says the true component-wise functions $h_j(u)$ are well approximated by the sieve approximations for all vectors $u$ and all components $j$.

\begin{Assumption}\label{assmp:lipsch}
    There exists a positive number $\zeta_{1}(k)$ such that for all $k=1, \ldots, L_{N}$,
    \begin{align*}
        |\phi_{k}(\bm u)-\phi_{k}\left(\bm u^{\prime}\right)| =|\prod_{l\le d}\psi_{k_l}^{(l)}(u_l)-\prod_{l\le d}\psi_{k_l}^{(l)}(u_l')| \leq \zeta_{1}(k)\left\|\bm u-\bm u^{\prime}\right\|
     \end{align*} 
     and     
\begin{align*}\{1\vee\zeta_{0}\left(L_N\right)^2\}\sum_{k=1}^{L_N} \zeta_{1}^{2}(k) =\begin{cases}
            o\left(\frac{N}{\log^{2c}N}\right)&\text{if }\bm A\sim RDPG\\
            o\left(N\omega_N\right)&\text{if }\bm A\sim LSM.
        \end{cases}
\end{align*}

\end{Assumption}
Our last assumption restricts $\{\phi_k;k\le L_N\}$ to be a $\zeta_1(k)$-Lipschitz where the order of $\zeta_1(1),\dots,\zeta_1(L_N)$ is determined by the random graph model and appropriate sparsity parameters we assume.

We derive the asymptotic distribution of the 2SLS estimator \( \hat{\boldsymbol{\beta}}_{v,\text{2SLS}} \). 
Here, we write $\delta_{\max}:=\max_{i\le N}\sum_{j\ne i}a_{ij}$ and $\delta_{\min}:=\min_{i\le N}\sum_{j\ne i}a_{ij}$ to denote the maximum and minimum node degrees, respectively.
\begin{Theorem}\label{thm:clt}
    Assume the assumptions in the section \ref{sec:assmp}. Let the graph $\bm A$ be generated via either RDPG model with sparsity parameter $\rho_N=\omega(\frac{\log^4N}{N})$ or LSM with sparsity parameter $\omega_N=\omega(N^{-1/2})$ as described in section \ref{sec:lsm}, respectively.
    Further assume that
    \begin{align}
        &\frac{\|\bm A\|_{2,\infty}^2}{\delta_{\min}^2}=o_p(1/\sqrt{N})\label{eqn:clt-c1},\\
        &\max_{i_1,\dots,i_4\le N}\E\left(\frac{\|\bm a_{i_1}\|^2\|\bm a_{i_2}\|^2\|\bm a_{i_3}\|\|\bm a_{i_4}\|\|\bm A\|_{2,\infty}^2}{\delta_{i_1}^2\delta_{i_2}^2\delta_{\min}^4}\right)=o(1/N^2).\label{eqn:clt-c2}
    \end{align}
    Then, as $N\to\infty$,
    \[\sqrt{N}\left( \hat{\bm\beta}_{v,2SLS} - \bm\beta_v^0 \right)\Rightarrow {\cal N}\left[0_{m(m+2p)},\Sigma_{\tilde E}\otimes\left\{(\Sigma_{\tilde Z\tilde K}\Sigma_{\tilde K}^{-1}\Sigma_{\tilde K\tilde Z})^{-1}\right\}\right]\]
    where $\Sigma_{\tilde E}:=\E(\tilde{\bm e}_1\tilde{\bm e}_1')=V$ and $\Sigma_{\tilde Z\tilde K},\Sigma_{\tilde K}$, $\Sigma_{\tilde K\tilde Z}$ are defined in Lemma \ref{lem:cov-cons} in the Appendix.
\end{Theorem}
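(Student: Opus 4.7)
The plan is to follow the standard two-step decomposition for a 2SLS central limit theorem, taking care in controlling the sieve annihilator $\bm M_{\hat{\bm\Phi}_N}$ built from estimated latent positions on a sparse network. First, substituting $\bm Y = \bm Z\bm\beta^0 + \bm E$ into (\ref{eq-2sls}) gives
\begin{align*}
\sqrt{N}(\hat{\bm\beta}_{2SLS} - \bm\beta^0) &= \Bigl[\tfrac{1}{N}\bm Z^T\bm M_{\hat{\bm\Phi}_N}\bm K \bigl(\tfrac{1}{N}\bm K^T\bm M_{\hat{\bm\Phi}_N}\bm K\bigr)^{-1} \tfrac{1}{N}\bm K^T\bm M_{\hat{\bm\Phi}_N}\bm Z\Bigr]^{-1}\\
&\quad\times \tfrac{1}{N}\bm Z^T\bm M_{\hat{\bm\Phi}_N}\bm K \bigl(\tfrac{1}{N}\bm K^T\bm M_{\hat{\bm\Phi}_N}\bm K\bigr)^{-1} \tfrac{1}{\sqrt N}\bm K^T\bm M_{\hat{\bm\Phi}_N}\bm E.
\end{align*}
Vectorising via $\operatorname{vec}(AXB)=(B^T\otimes A)\operatorname{vec}(X)$ turns the outer factor into $\bm I_m$ Kronecker-multiplied by an $(m+2p)\times(m+2p)$ matrix, and straightforward algebra shows that if the sample denominator converges to $\Sigma_{\tilde Z\tilde K}\Sigma_{\tilde K}^{-1}\Sigma_{\tilde K\tilde Z}$ and the score has the Kronecker limit law $\mathcal N(0,\Sigma_{\tilde E}\otimes\Sigma_{\tilde K})$, the sandwich collapses to the stated covariance $\Sigma_{\tilde E}\otimes(\Sigma_{\tilde Z\tilde K}\Sigma_{\tilde K}^{-1}\Sigma_{\tilde K\tilde Z})^{-1}$.

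Second, I establish consistency of the residualised sample cross-moments $\tfrac{1}{N}\bm K^T\bm M_{\hat{\bm\Phi}_N}\bm K \overset{p}{\to}\Sigma_{\tilde K}$ and $\tfrac{1}{N}\bm K^T\bm M_{\hat{\bm\Phi}_N}\bm Z \overset{p}{\to}\Sigma_{\tilde K\tilde Z}$. Write $\bm M_{\hat{\bm\Phi}_N}\bm K = \tilde{\bm K} - \bigl(\bm P_{\hat{\bm\Phi}_N}\bm K - \bm h^K(\bm U)\bigr)$ and split the bracket into a sieve approximation error $\bm h^K(\bm U)-\bm\Phi_N\bm\alpha_{L_N}^K$ of order $O(L_N^{-\kappa})$ by Assumption \ref{assmp:sieve}(iii), and a plug-in error $(\bm P_{\hat{\bm\Phi}_N} - \bm P_{\bm\Phi_N})\bm K$ bounded via the Lipschitz inequality in Assumption \ref{assmp:lipsch} together with the rates $\|\hat{\bm U}-\bm U\bm H\|_{2,\infty}=O_{hp}(\log^c N/\sqrt N)$ for RDPG or $\tfrac{1}{N}\|\hat{\bm U}-\bm U\|_F^2=O_p(1/(N\omega_N))$ for LSM. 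Conditions (\ref{eqn:clt-c1})-(\ref{eqn:clt-c2}) control the sparse-graph rows $\bm G\bm X$ and $\bm G^2\bm X$ entering $\bm Z$ and $\bm K$ so that the law of large numbers on the i.i.d.\ sequence $(\tilde{\bm K}_i,\tilde{\bm Z}_i)$ delivers $\tfrac{1}{N}\tilde{\bm K}^T\tilde{\bm K}\to\Sigma_{\tilde K}$ and the analogous cross-moment limit.

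Third, I prove the CLT for the score by writing $\tfrac{1}{\sqrt N}\bm K^T\bm M_{\hat{\bm\Phi}_N}\bm E = \tfrac{1}{\sqrt N}\tilde{\bm K}^T\tilde{\bm E} + R_N$; the remainder $R_N=o_p(1)$ uses the sharper Assumption \ref{assmp:sieve}(iii) clause $L_N^{-\kappa}=o(1/\sqrt N)$ combined with (\ref{eqn:clt-c2}). The leading term is a sum of i.i.d.\ mean-zero matrix summands (mean zero by $\E[\bm E_i|\bm X_i,\bm U_i]=\E[\bm E_i|\bm U_i]$ and Lemma \ref{identify}) with uniformly bounded fourth moments from the essential boundedness in Assumption \ref{assmp:jnm}. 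Vectorising and applying the multivariate Lindeberg-Lévy CLT yields $\operatorname{vec}(\tfrac{1}{\sqrt N}\tilde{\bm K}^T\tilde{\bm E})\Rightarrow\mathcal N(0,\Sigma_{\tilde E}\otimes\Sigma_{\tilde K})$, using $\V[\bm E_i|\bm U_i]=\bm V=\Sigma_{\tilde E}$ to get the Kronecker form of the covariance. Slutsky's theorem then glues the denominator consistency and the score CLT into the stated limit.

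The main obstacle is the plug-in step inside consistency and CLT: the Lipschitz constants $\zeta_1(k)$ grow with $k$, the sieve dimension $L_N$ grows with $N$, and $\|\hat{\bm U}-\bm U\|$ decays slower than $1/\sqrt N$ under sparsity. The calibration in Assumption \ref{assmp:lipsch} between $\{1\vee\zeta_0(L_N)^2\}\sum_k\zeta_1^2(k)$ and $N\omega_N$ (respectively $N/\log^{2c}N$) is what simultaneously balances these three competing rates, and upgrading the bound from $o_p(1)$ to $o_p(1/\sqrt N)$ in the score step is the most delicate piece.
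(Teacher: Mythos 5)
Your overall skeleton --- substitute the model into the 2SLS formula, swap $\bm P_{\hat{\bm\Phi}_N}$ for $\bm P_{\bm\Phi_N}$ using the Lipschitz calibration and the latent-position rates, absorb the sieve bias via $L_N^{-\kappa}=o(1/\sqrt N)$, reduce everything to centered sample moments of $\tilde{\bm K},\tilde{\bm Z},\tilde{\bm E}$, and finish with Slutsky --- matches the paper's route through Lemmas \ref{lem:samplingerror}, \ref{lem:sieveapprox}, \ref{lmm:sieve:est2} and \ref{lmm:sieve:rank}. Your closing remark about the three competing rates is exactly the right diagnosis of why Assumption \ref{assmp:lipsch} is calibrated the way it is.

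However, there is a genuine gap in the probabilistic core. You assert that $(\tilde{\bm K}_i,\tilde{\bm Z}_i)$ form an i.i.d.\ sequence and that the score $\frac{1}{\sqrt N}\tilde{\bm K}'\tilde{\bm E}$ is a sum of i.i.d.\ summands amenable to the Lindeberg--L\'evy CLT. That is false: the rows of $\tilde{\bm K}$ and $\tilde{\bm Z}$ contain $\bm G\bm X$, $\bm G^2\bm X$ and $\bm G\bm Y$, which are network-weighted averages over other units, so row $i$ depends on the entire adjacency matrix and on the covariates and errors of $i$'s neighbors (and, through the reduced form $(\bm I-\bm D'\otimes\bm G)^{-1}$, on all units). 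The paper handles this dependence in two ways that your proposal omits. First, the CLT for the score is a \emph{martingale difference array} CLT (Corollary 3.1 of Hall and Heyde) with respect to the filtration ${\cal F}_{Ni}=\sigma(\bm X_N,\bm A,\bm u_s,\bm e_s;\,s\le i)$: the summands $T_{Ni}=\frac{1}{\sqrt N}\tilde{\bm k}_i'\bm Q\tilde{\bm e}_i$ are conditionally mean zero because $\tilde{\bm e}_i$ is independent of the past and of $(\bm X_N,\bm A)$, even though the $\tilde{\bm k}_i$ are dependent; the conditional Lindeberg condition is then verified via the bound $\sum_i\|\bm g_i'\bm G\|^4\le N\|\bm A\|_{2,\infty}^4/\delta_{\min}^4=o_p(1)$, which is where condition \eqref{eqn:clt-c1} is actually consumed. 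Second, the convergence $\frac{1}{N}\tilde{\bm K}'\tilde{\bm Z}\to\Sigma_{\tilde K\tilde Z}$ cannot be obtained from an i.i.d.\ LLN; Lemma \ref{lem:cov-cons} expands $\bm g_k'\bm Y_i$ through the Neumann series of $(\bm I_{Nm}-\bm D'\otimes\bm G)^{-1}$ and bounds the resulting fourth-order cross moments of network rows, which is precisely the role of condition \eqref{eqn:clt-c2}. Without the martingale structure and this moment computation, your argument does not go through, so the most delicate part of the theorem is the one your proposal glosses over.
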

The proofs of all results are in the Appendix \ref{mathproofs}. Several additional technical lemmas are needed, which are all stated and proved in the Appendix \ref{mathproofs}.

Note that the conditions on the graph density indicate that consistent and asymptotically normal estimation is possible even for \textit{sparse graphs}. 
Under a binary graph ($a_{ij}$s are either 1 or 0), for example, the first condition in \eqref{eqn:clt-c1} only requires $\delta_{\max}/\delta_{\min}^2=o_p(1/\sqrt{N})$.
A sufficient condition for the second part in \eqref{eqn:clt-c2} can be (since for a binary graph $\|\bm a_{i_1}\|^2=\delta_{i_1}$ for any $i_1$)
\[\mathbb{E}\left[\frac{(\delta_{i_3}\delta_{i_4})^{1/2}\delta_{\max}}{\delta_{i_1}\delta_{i_2}\delta_{\min}^4}\right]\le \mathbb{E} \left(\frac{\delta_{\max}^2}{\delta_{\min}^6}\right)=o(1/N^2).\]
These conditions hold under a moderate expected density of the graph. For example, if we let $\delta_{\min} \asymp \delta_{\max}$ and the average expected density of the graph grows as $O(\frac{1}{\sqrt{N}})$, then all conditions are satisfied. This is in contrast to the results in \cite{johnsson2021estimation}, which only hold if the average expected density of the graph grows as $O(1)$, making the graph unrealistically dense. 

An intermediate lemma en route to proving the main theorem further clarifies the distinction between the density requirements for the RDPG and the LSM models. 

\begin{Lemma}\label{lmm:sieve:est}
    Let Assumption \ref{assmp:lipsch} hold.
    Under the conditions on RDPG model in \cite{rubin2022statistical} and sparsity parameter $\rho_N=\omega(\frac{\log^4N}{N})$, we have
    \[\frac{1}{N}\|\hat{\bm\Phi}_N-\bm\Phi_N\|_F^2=O_{hp}\left(\frac{\log^{2c}N}{N}\sum_{k\le L_N}\zeta_1(k)^2\right)\]
    and under the conditions on the sparse LSM model with sparsity parameter $\omega_N=\omega(N^{-1/2})$ in \cite{li2023statistical}, we have
    \[\frac{1}{N}\|\hat{\bm\Phi}_N-\bm\Phi_N\|_F^2=O_{hp}\left(\frac{1}{N\omega_N}\sum_{k\le L_N}\zeta_1(k)^2\right).\]
\end{Lemma}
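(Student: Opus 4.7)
The plan is to reduce the lemma to the already cited convergence rates for $\hat{\bm U}$ by a direct application of the Lipschitz condition on the basis functions. Starting from the definition of the Frobenius norm,
\[
\|\hat{\bm\Phi}_N - \bm\Phi_N\|_F^2 \;=\; \sum_{i=1}^N \sum_{k=1}^{L_N} \bigl(\phi_k(\hat{\bm u}_i) - \phi_k(\bm u_i)\bigr)^2,
\]
I apply Assumption \ref{assmp:lipsch} termwise to get $|\phi_k(\hat{\bm u}_i) - \phi_k(\bm u_i)| \le \zeta_1(k)\,\|\hat{\bm u}_i - \bm u_i\|$. Squaring, summing over $k$ and then over $i$, the bound factorizes cleanly as
\[
\|\hat{\bm\Phi}_N - \bm\Phi_N\|_F^2 \;\le\; \Bigl(\sum_{k=1}^{L_N}\zeta_1(k)^2\Bigr) \sum_{i=1}^N \|\hat{\bm u}_i - \bm u_i\|^2 \;=\; \Bigl(\sum_{k=1}^{L_N}\zeta_1(k)^2\Bigr) \|\hat{\bm U} - \bm U\|_F^2.
\]
Dividing by $N$ reduces the problem to controlling $\frac{1}{N}\|\hat{\bm U}-\bm U\|_F^2$ under each of the two random graph models.

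For the RDPG case, I import the $2\to\infty$ bound from \cite{rubin2022statistical} quoted in Section \ref{sec:lsm}: there is an orthogonal $\bm H$ with $\|\hat{\bm U}-\bm U\bm H\|_{2,\infty} = O_{hp}(\log^c N / \sqrt{N})$ under the sparsity $\rho_N = \omega(\log^4 N / N)$. Treating the rotated representation $\bm U\bm H$ as the identified version of $\bm U$ at which $\phi_k$ is evaluated, the standard conversion $\|A\|_F^2 \le N\,\|A\|_{2,\infty}^2$ yields $\|\hat{\bm U}-\bm U\|_F^2 = O_{hp}(\log^{2c}N)$, hence $\frac{1}{N}\|\hat{\bm U}-\bm U\|_F^2 = O_{hp}(\log^{2c}N/N)$. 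Substituting back into the factorization gives the first claimed rate. For the LSM case, the constrained maximum likelihood estimator of \cite{li2023statistical} directly supplies $\frac{1}{N}\|\hat{\bm U}-\bm U\|_F^2 = O_{hp}(1/(N\omega_N))$ under $\omega_N = \omega(N^{-1/2})$, and inserting this into the factorization yields the second claimed rate.

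The main technical subtlety, and the only nontrivial bookkeeping, is the handling of the rotational non-identifiability of the RDPG latent positions: one must be careful that $\bm\Phi_N$ and $\hat{\bm\Phi}_N$ are defined with respect to a common representation of $\bm U$ (namely the one recovered by the spectral embedding, i.e., $\bm U\bm H$), so that the squared differences $(\phi_k(\hat{\bm u}_i)-\phi_k(\bm u_i))^2$ are meaningful to bound by the $2\to\infty$ rate. Once this is fixed, the rest of the argument is a one-line Lipschitz bound combined with imported norm-convergence rates, with no further probabilistic work required; in particular the assumption $\{1\vee\zeta_0(L_N)^2\}\sum_k \zeta_1(k)^2 = o(N/\log^{2c}N)$ (resp.\ $o(N\omega_N)$) in Assumption \ref{assmp:lipsch} is precisely calibrated so that the resulting factor $\frac{1}{N}\sum_k\zeta_1(k)^2$ times the ambient $\hat{\bm U}$-rate remains $o(1)$, which is what the downstream proof of Theorem \ref{thm:clt} needs.
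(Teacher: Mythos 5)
Your proof is correct and follows essentially the same route as the paper's: a termwise application of the Lipschitz condition in Assumption \ref{assmp:lipsch} to factor out $\sum_{k\le L_N}\zeta_1(k)^2$, followed by importing the $2\to\infty$ rate for the RDPG spectral embedding (with the orthogonal alignment $\bm H$ handled exactly as you describe, via $\phi_k(\bm H'\bm u_i)$) and the Frobenius-norm rate for the LSM maximum likelihood estimator. The paper's only cosmetic difference is that it bounds $\frac{1}{N}\sum_i\|\hat{\bm u}_i-\bm H'\bm u_i\|^2$ directly by $\|\hat{\bm U}-\bm U\bm H\|_{2,\infty}^2$ rather than passing through $\|\cdot\|_F^2\le N\|\cdot\|_{2,\infty}^2$, which is the same inequality.
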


Clearly, Lemma \ref{lmm:sieve:est} provides a result on the concentration of the sieve design matrix when the true latent positions $\bm U$ are replaced with their estimated counterparts $\hat{\bm U}$. Both the RDPG and the LSM models can accommodate sparse networks through the sparsity parameters $\rho_N$ and $\omega_N$ respectively. However, our result with the RDPG model has a better concentration of the $\hat{\bm \Phi}_N$ to $\bm \Phi_N$, for whenever $\omega_N$ is smaller than $\frac{1}{\log^{2c}N}$, i.e., if the network is even slightly sparse. However, we emphasize that our results for both models accommodate sparse networks with density requirement for RDPG being $\frac{\log^4N}{N}$ and LSM being $\frac{N^{1/2}}{N}$ (in addition to what is required to satisfy conditions \ref{eqn:clt-c1} amd \ref{eqn:clt-c2} in the main theorem).

\section{Simulations}
In this section, we describe a simulation study we performed to evaluate the finite-sample performance of the proposed estimation strategy in the presence of latent homophily and network endogeneity. Specifically, we consider two main scenarios for network and outcome generation: (i) a Random Dot Product Graph (RDPG) model and (ii) a latent space model with covariates. We generate network data, observed covariates, latent variables, and outcomes for each scenario according to the corresponding data-generating process. The simulated data are then used to compare several alternative adjustment strategies for latent confounding within the 2SLS estimation framework. We evaluate both nonparametric sieve-adjustment methods, which use polynomial or tensor product polynomial basis expansions, as well as linear functions of the estimated latent positions. As our primary evaluation metric, we use the mean squared error, calculated across simulation replications. 

Table \ref{simsfinal} documents the performance of the estimator when the network is generated from the RDPG model and the spectral embedding is used to estimate the latent variables. The dimension of latent variables in the outcome model is fixed at 2. There are two correlated outcomes where the data generation follows Equation \ref{MSARmodel} with endogeneity in network formation. The peer influence parameter matrix is set at $\begin{bmatrix}
0.3 &0.2  \\
0.25 & 0.6
\end{bmatrix}$ where the diagonal elements capture the direct peer influence spillovers and the off-diagonal elements capture the indirect spillovers in the outcome. The dimension of the coefficients on $h^Y(U)$ varies across the three scenarios. For each scenario, the data is generated 100 times, followed by estimation of $\beta_{v}$ using the steps in algorithm \ref{algorithm1}. The table shows that the mean squared error steadily declines with the number of nodes in the network. This supports the fact that the estimator works well in recovering the true parameter of interest as the sample size increases.  

Next, we change the data generation and estimation of latent variables, where latent space models that allow for both multiplicative and additive unobserved variables, along with observed covariates, are used for generating the network. A Projected Gradient Descent (PGD) algorithm with Universal Singular Value Thresholding (USVT) initialization is used to estimate the MLE $\hat{U}$ \citep{ma2020universal}. Note that even though this solves a non-convex optimization problem, the solution has nice theoretical guarantees which we have used in the proofs above. The covariate that explains the network formation is not part of the data generation for the outcomes. The network formation consists of one additive latent factor, two multiplicative latent factors, and two observed covariates. The outcome data generation consists of $[AY,X,AX,h^{Y}(U)]$ where $AY$ is $N\times 2$ matrix, $X$ is $N\times 5$, $AX$ is $N\times 5$ and dimension of $h^{Y}(U)$ varies across the 3 scenarios. The peer influence matrix $D$ used in the data generation for the latent space model with covariates is fixed at $\begin{bmatrix}
0.8 &0.2  \\
0.3 & 0.6
\end{bmatrix}$. Table \ref{simsfinallsmcov} reports the mean squared error (MSE) across 50 simulations for each scenario as the sample size increases from 100 to 300 to 500. It shows that the MSE consistently declines as the sample size increases for all four parameters of the peer influence $D$ matrix for all three scenarios. 
\section{Results}
\label{results}

\subsection{Predict Recidivism using Text Messages}
We first investigate the ability of text exchanges to predict recidivism. Recall, we estimate the LLM embeddings for the messages and use those embeddings as predictors in a high-dimensional regression LASSO prediction method. LASSO is implemented with cross-fitting to do the entire prediction exercise out-of-sample.\footnote{Cross-fitting divides the sample into K-folds, and the prediction for the residents in the $k^{th}$ fold uses the model trained on the data from the remaining $ K-1 $ folds.} 
The penalty parameter is chosen using cross-validation on the entire dataset, and then it is saved and passed into the estimation and prediction for each fold in cross-fitting.

\begin{figure}[h] 
  \caption{Predict Recidivism with and without text embeddings Aggregated by Sender}
    \label{predictembeddings} 

 \begin{minipage}[b]{0.5\linewidth}
       \begin{center}

    \includegraphics[width=\linewidth]{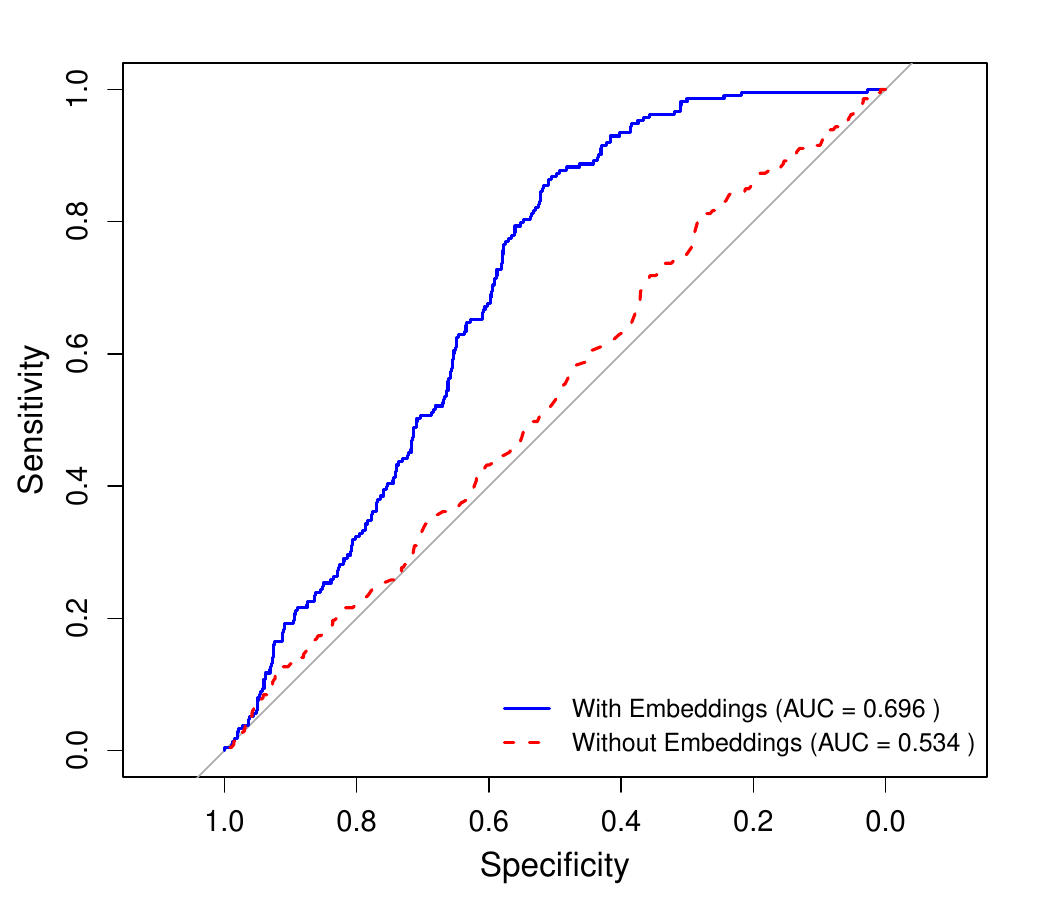}
    \caption*{a. Female} 
        \end{center}

  \end{minipage}%%
  \begin{minipage}[b]{0.5\linewidth}
       \begin{center}

    \includegraphics[width=\linewidth]{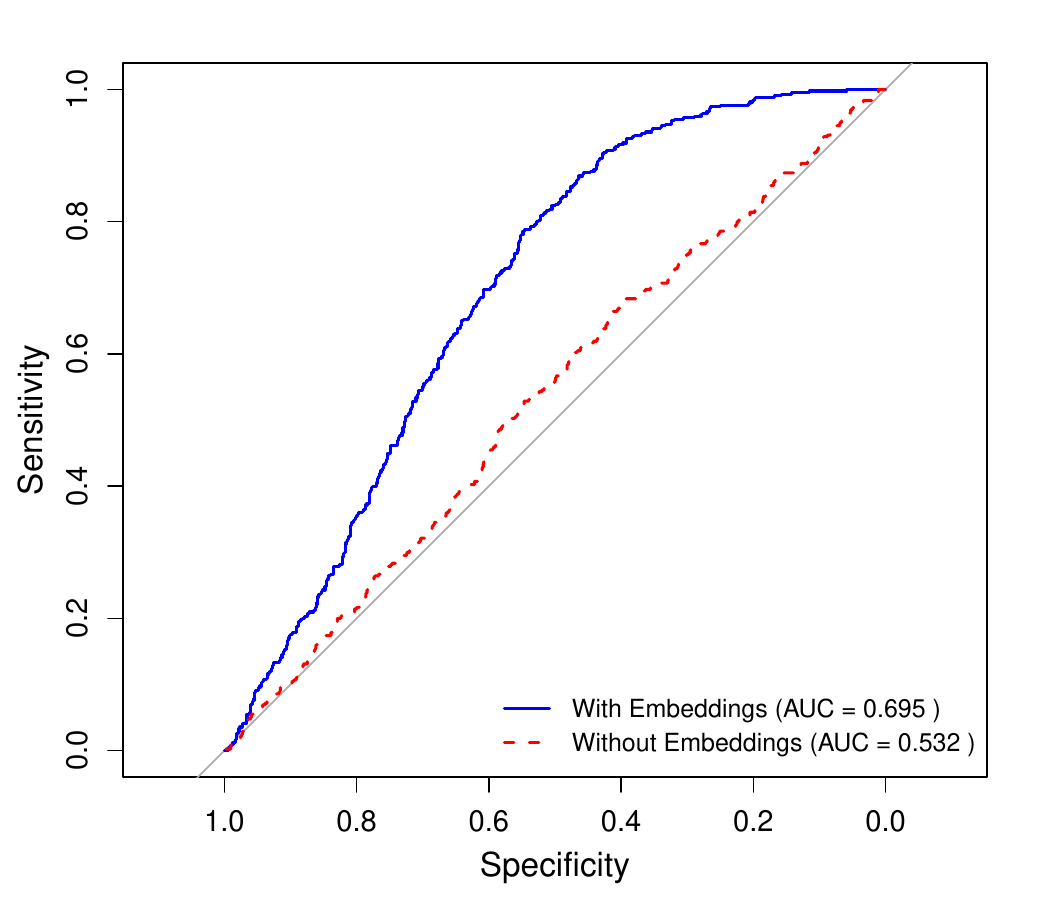}
    \caption*{b. Male} 
        \end{center}

  \end{minipage}
  \begin{minipage}{15.0 cm}{\footnotesize{Notes: Five-fold cross-fitting is used to estimate the out-of-sample AUC for all models. Logistic LASSO is used for the prediction models because the text embeddings are high-dimensional.}}
\end{minipage} 
\end{figure}

Figure \ref{predictembeddings} provides the Receiver Operator Characteristic (ROC) curves and the Area Under the Curve (AUC) for two models, one with both pre-entry covariates ($X$) and embeddings ($T$), and an alternative model without the text embeddings (only $X$) as predictors. Panel (a) shows that the prediction AUC is 30\% higher for recidivism in the female unit when we incorporate the LLM-based text embeddings relative to the model with only pre-entry covariates $X$. The dimension reduction is substantial for the former model, as about 26-37 predictors out of text embeddings (768) and pre-entry covariates remain across the five folds in LASSO model fit. Panel (b) compares the prediction accuracy for the male units with and without the embeddings. The prediction accuracy improvement for the males is 30.6\%. Figure \ref{predictembeddings_receiver} repeats the same analysis but for the case when the text embeddings are aggregated by the receiver ID. We find comparable improvements (32.6\% for females and 32\% for males) in predictive accuracy when the text embeddings are aggregated by the receiver profiles and included as predictors in the LASSO model along with the pre-entry covariates. Recall from section \ref{empiricalsetup} that the sender profile indicates how an individual interacts with their peers, while the receiver profiles indicate how others perceive an individual.

The out-of-sample AUC values from our model with embeddings are in the range of 0.67-0.70, which is generally considered a good predictive accuracy in the context of recidivism prediction \citep{laqueur2024algorithmic}. We further note that our pre-entry covariates include an aggregate measurement known as Level of Service Inventory-Revised (LSI-R), a widely used measure in the context of criminal psychology \citep{andrews2014psychology}. The LSI-R survey contains 54 questions pertaining to criminal history, education, employment, family, alcohol and drug problems, living conditions, and emotional health. Several questions that are part of the LSI-R are known to be predictors of recidivism \citep{dressel2018accuracy}. Yet, in our sample, the pre-entry covariates, including LSI-R, were only mildly predictive of recidivism. This could be because LSI-R is an aggregated measure, and we do not observe the component scores in our data. Nevertheless, this underscores the importance of AI-based text measurements as potential predictors of recidivism, especially in low-security correctional facilities, where observed covariates have limited ability to predict recidivism.

\begin{figure}[!h] 
  \caption{Predict Recidivism with and without class probabilities Aggregated by Sender}
    \label{predictscores} 

 \begin{minipage}[b]{0.5\linewidth}
       \begin{center}

    \includegraphics[width=\linewidth]{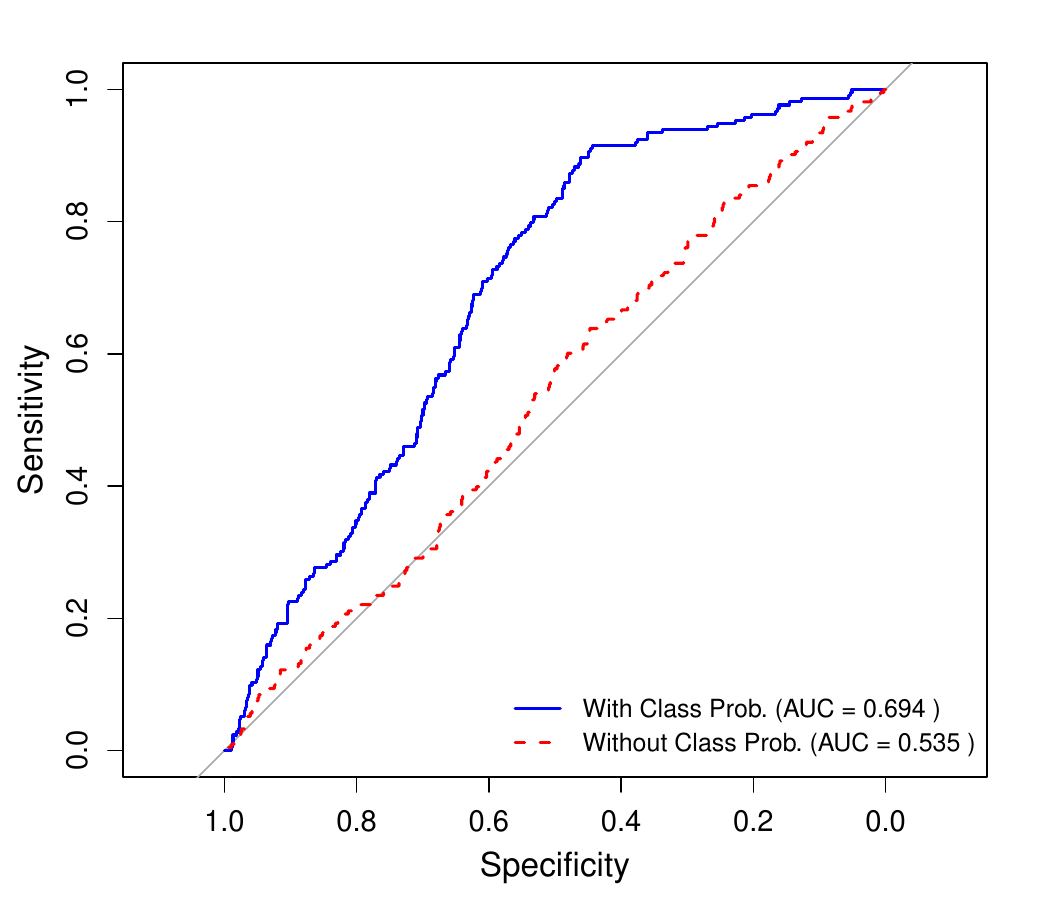}
    \caption*{a. Female} 
        \end{center}

  \end{minipage}%%
  \begin{minipage}[b]{0.5\linewidth}
       \begin{center}

    \includegraphics[width=\linewidth]{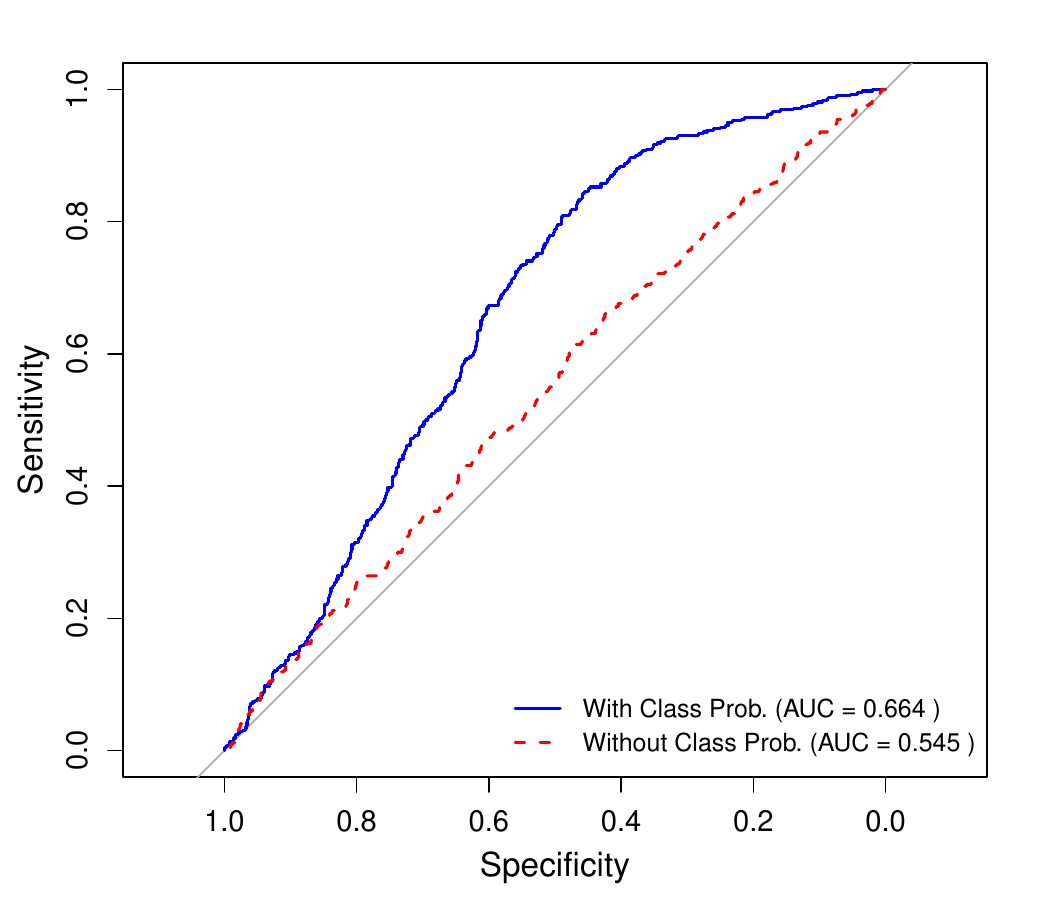}
    \caption*{b. Male} 
        \end{center}

  \end{minipage}
  \begin{minipage}{15.0 cm}{\footnotesize{Notes: Five-fold cross-fitting is used to estimate the out-of-sample AUC for all models. The class probabilities are obtained using Zero-Shot classification, and we use the additive log ratios for this analysis, with disruptive behavior as the baseline category.}}
\end{minipage} 
\end{figure}

Next, prediction accuracy is assessed using the class probabilities generated by the transformer-based Zero-Shot classifier. This analysis uses the classes ``Personal Growth'', ``Community Support'', ``Rule Violations'', and ``Disruptive Conduct''. Each of these classes is chosen to be indicative of behaviors that are either meaningfully detrimental to community peace and cohesion or supportive of personal growth and community well-being.  We use the class probabilities as predictors instead of the above text embeddings. Considering there are only four classes in our specification, there is no need to use LASSO. Instead, we use a logistic regression model. However, since the class probabilities are compositional data, we transform the probabilities into additive log ratios, using disruptive conduct as the reference category. We again obtain substantial improvement in out-of-sample prediction accuracy (29.7\% for females and 22\% for males) when including class probabilities along with the matrix of observed covariates (Figure \ref{predictscores}).

For each cross-fitting fold, we assess variable importance. For this, McFadden's pseudo $R^{2}$ is computed first for the full model as $1-\frac{\text{deviance all covariates}}{\text{deviance null model}}$. Next we iteratively drop one covariate at a time and compute McFadden's pseudo $R^{2}$ for each model. The reduction in pseudo $R^{2}$ relative to the full model, which includes the class probabilities as predictors, is displayed in Figure \ref{varimportance}. The interpretation of variable importance should account for the reference category (Disruptive Conduct). Here, the reduction is shown as box plots, since we use five-fold cross-validation for prediction, which provides us with a different model for each fold. In addition, panels (c) and (d) provide the marginal effects of increasing community support relative to the disruptive conduct baseline category on predicted probability of recidivism, holding everything else at their median values and fixing the white dummy at white and the education level to its lowest category. We see a sharp reduction in the likelihood of recidivism with increasing community support for both male and female residents relative to disruptive conduct (baseline category)\footnote{The estimates used for this analysis correspond to the model saved for the 5th fold of the 5-fold cross-validation exercise.}. Note by definition, 1 unit increase in the ALR means $e \approx 2.718$ increase in the ratio of community support proportion to disruptive conduct proportion in the classification of messages sent by an individual. Since this is a non-linear model, the changes in the marginal effects for a 1 unit change in ALR vary with the value of ALR and can be represented as the marginal curves in panels (c) and (d). For example, panel (c) shows that as the ratio of community support proportion to disruptive conduct proportion increases from $\exp(-1)\approx 0.36$ to $\exp(0)=1$, the predicted probability of recidivism reduces from around 0.22 to 0.12 for the female unit. Similarly, panel (d) shows that as the ratio of community support proportion to disruptive conduct proportion increases from $\exp(-2)\approx 0.14$ to $\exp(0)=1$, the predicted probability of recidivism reduces from around 0.40 to 0.16 for the male unit.

\begin{figure}[!h] 
\begin{center}

  \caption{Variable Importance and Marginal Effects}
    \label{varimportance} 

 \begin{minipage}[b]{0.5\linewidth}
       \begin{center}

    \includegraphics[width=\linewidth]{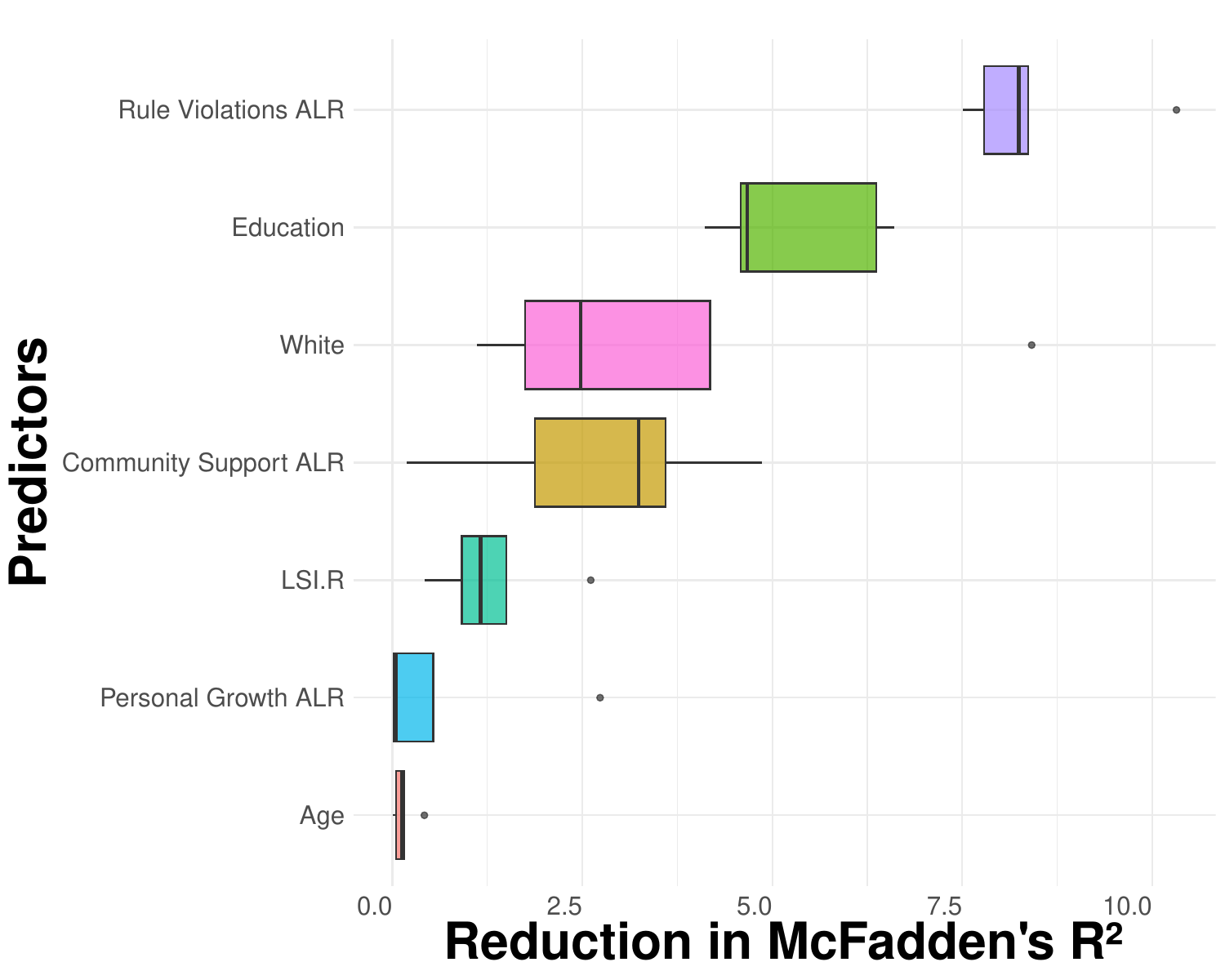}
    \caption*{a. Female} 
        \end{center}

  \end{minipage}%%
  \begin{minipage}[b]{0.5\linewidth}
       \begin{center}

    \includegraphics[width=\linewidth]{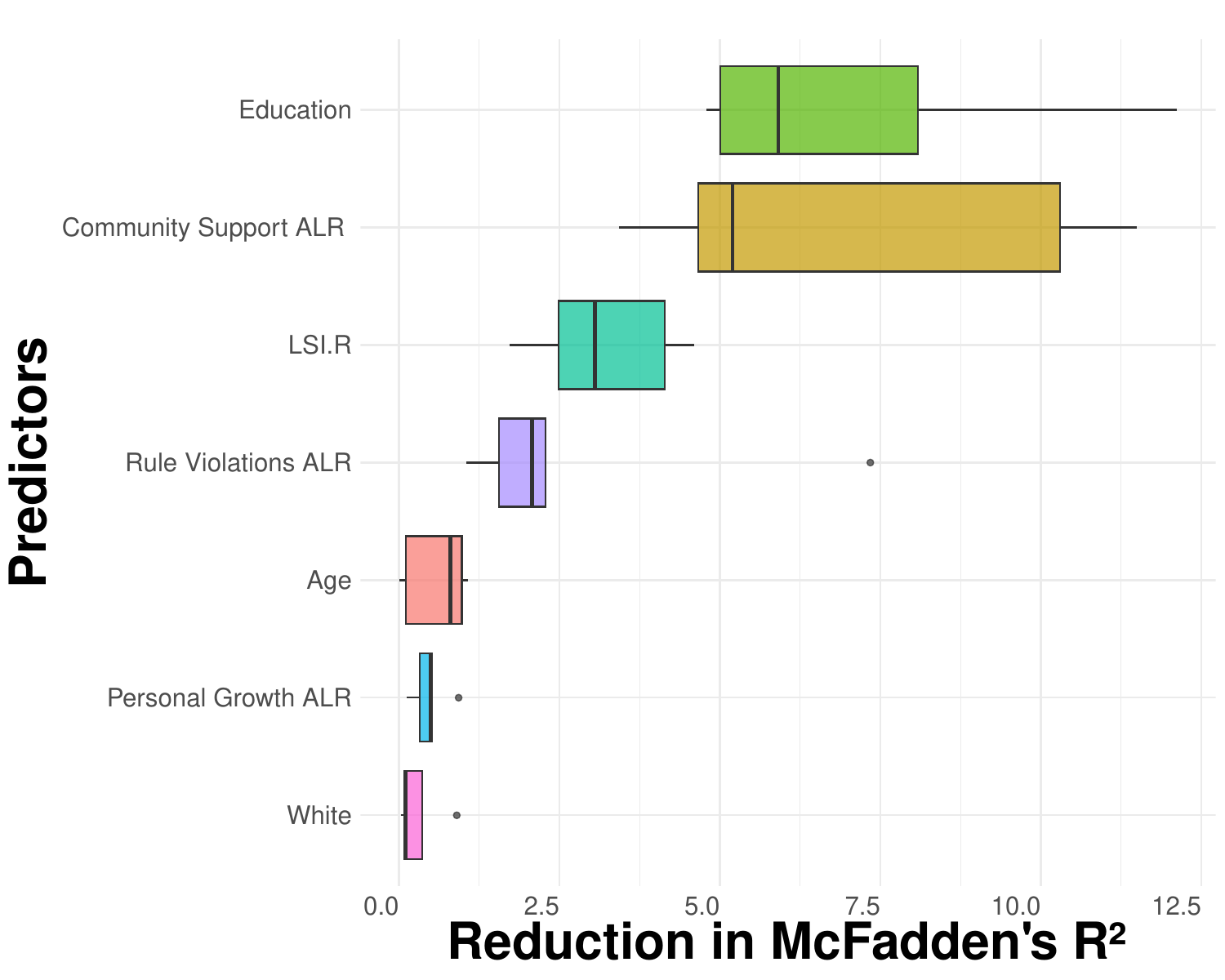}
    \caption*{b. Male} 
        \end{center}

  \end{minipage}
   \begin{minipage}[b]{0.5\linewidth}
       \begin{center}

    \includegraphics[width=\linewidth]{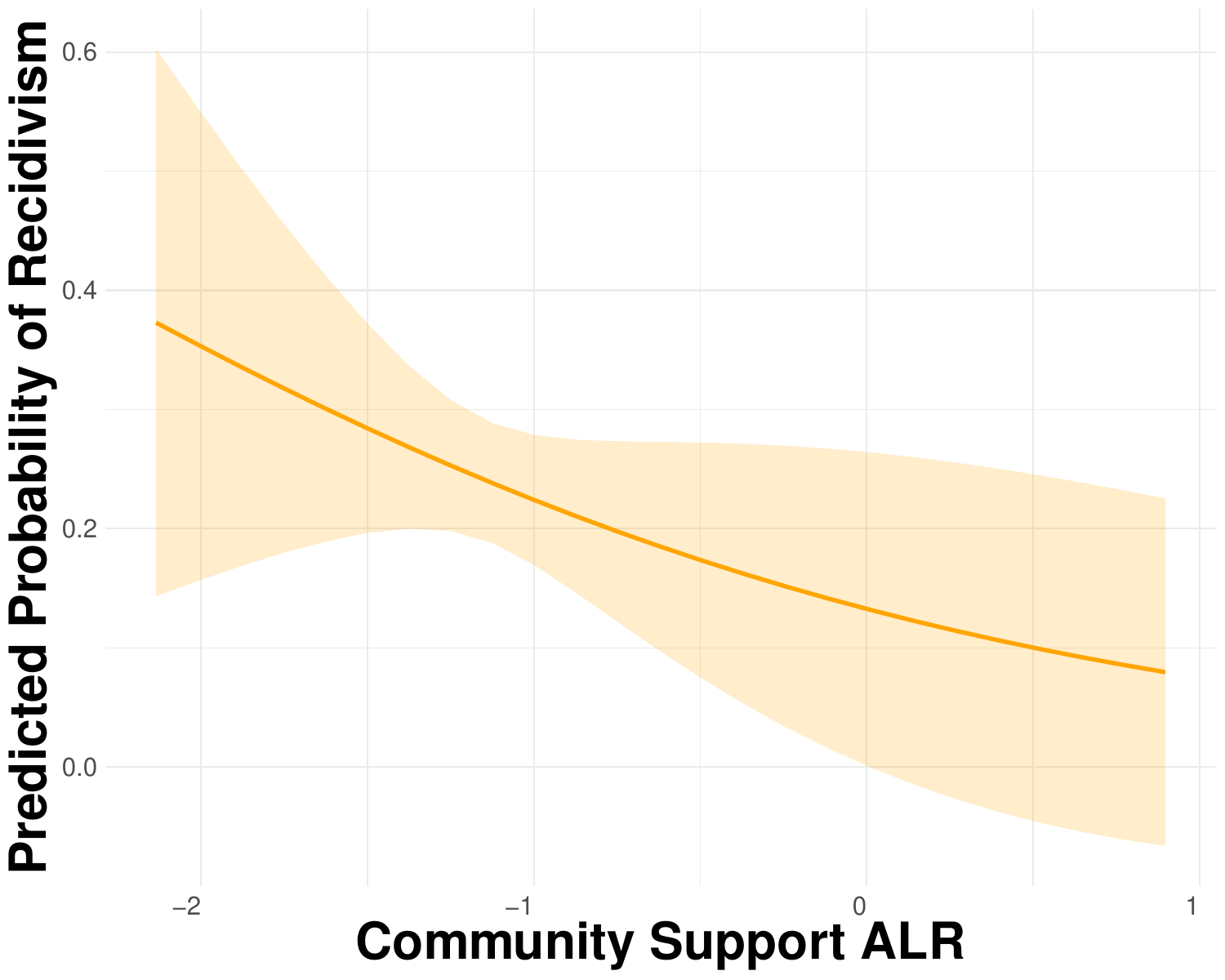}
    \caption*{c. Female} 
        \end{center}

  \end{minipage}%%
  \begin{minipage}[b]{0.5\linewidth}
       \begin{center}

    \includegraphics[width=\linewidth]{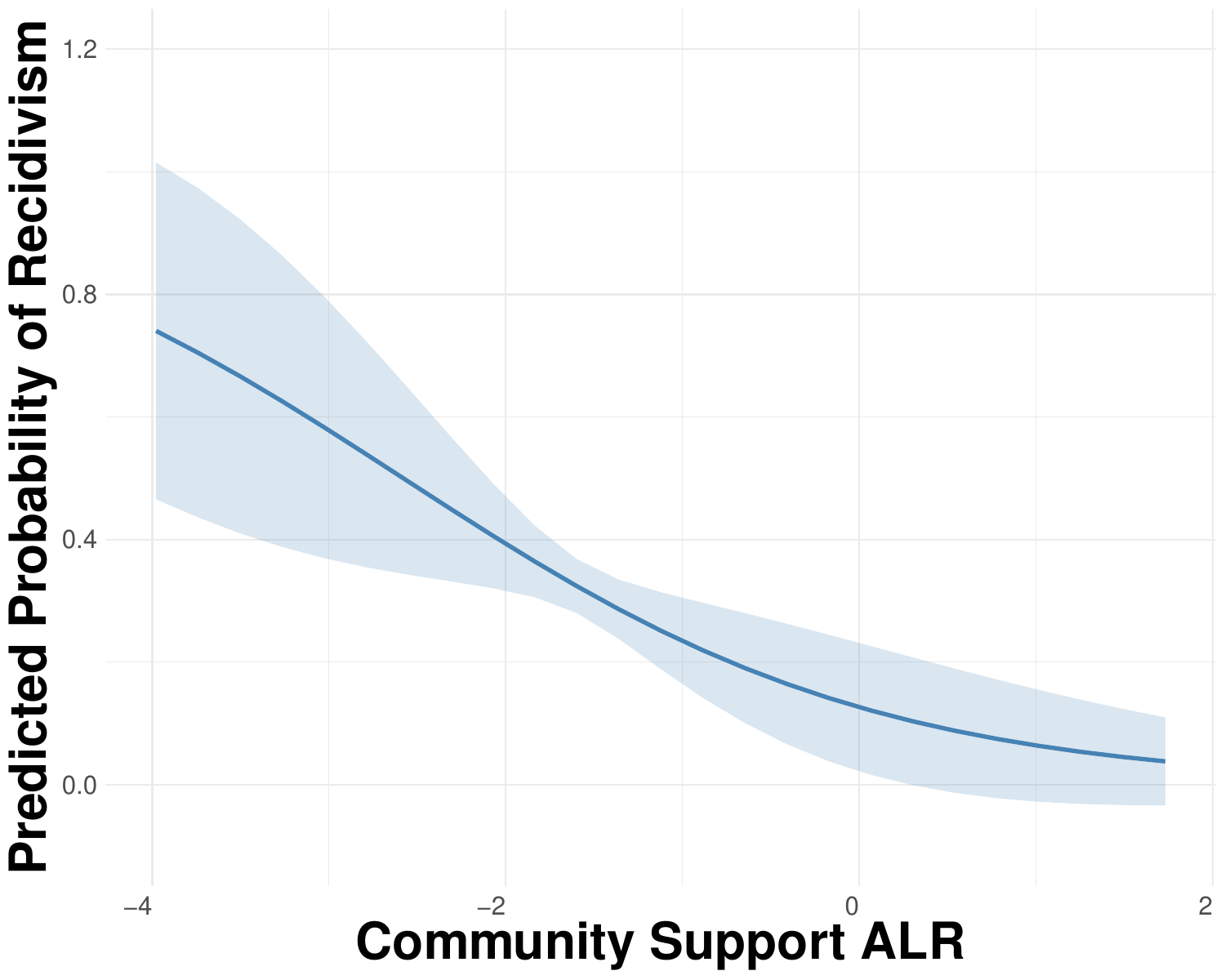}
    \caption*{d. Male} 
        \end{center}

  \end{minipage}
  \begin{minipage}{15.0 cm}{\footnotesize{Notes: Panel (a) and (b) display the reduction in McFadden's pseudo $R^{2}$ as one covariate is omitted at a time from the full model that includes the covariates and the class probabilities. Panels (c) and (d) display community support ALR's marginal effect on the predicted recidivism probability for female and male units, respectively.}}
\end{minipage} 
\end{center}
\end{figure}

Further, in Figure \ref{predictembeddings2}(a), we assess the (out-of-sample) predicted probabilities from our compositional logistic regression model with zero-shot classes against the true recidivism outcome. The comparison of the two histograms shows that the logistic model assigns a higher probability to residents who eventually recidivate. In Figure \ref{predictembeddings2}(b), we show a calibration plot between the mean predicted probability and the mean observed outcome in 20 quantiles. To do so, we divide the predicted probabilities into 20 quantiles. In each quantile, we compute the mean of the predictions and plot it against the mean of the observed outcomes for individuals in that quantile. If our prediction model is accurate, we expect these points to lie close to a 45-degree line. The figure shows that most points lie close to the red 45-degree line, indicating good calibration of our model's predictions. Similar exercise is done for the male unit, and the results are provided in Figure \ref{predictembeddings2male} in the Appendix. 

Finally, we provide prediction results on recidivism aggregated by receiver profiles with and without the class probabilities (Figure \ref{predictscoresreceiver_F}). We find that the prediction accuracy improves by 35\% and 23\% for the female and male units, respectively.

\begin{figure}[!h] 
  \caption{Calibration Plots Predicted Probability and True Recidivism (Female)}
    \label{predictembeddings2} 

 \begin{minipage}[b]{0.5\linewidth}
       \begin{center}

    \includegraphics[width=\linewidth]{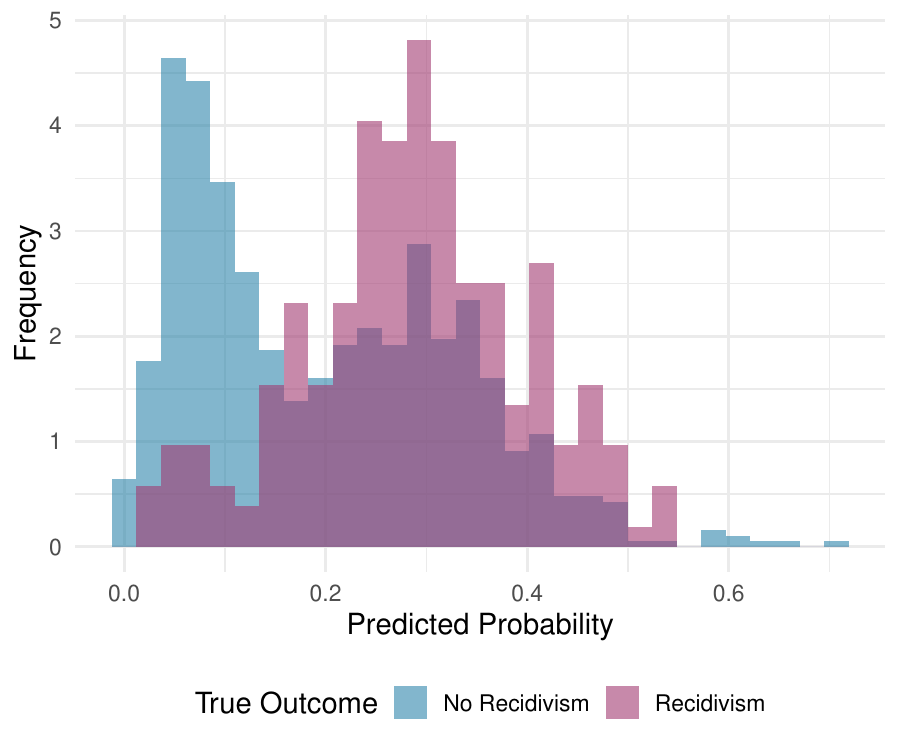}
    \caption*{a. Histogram True vs. Predicted Outcome} 
        \end{center}

  \end{minipage}%%
  \begin{minipage}[b]{0.5\linewidth}
       \begin{center}

    \includegraphics[width=\linewidth]{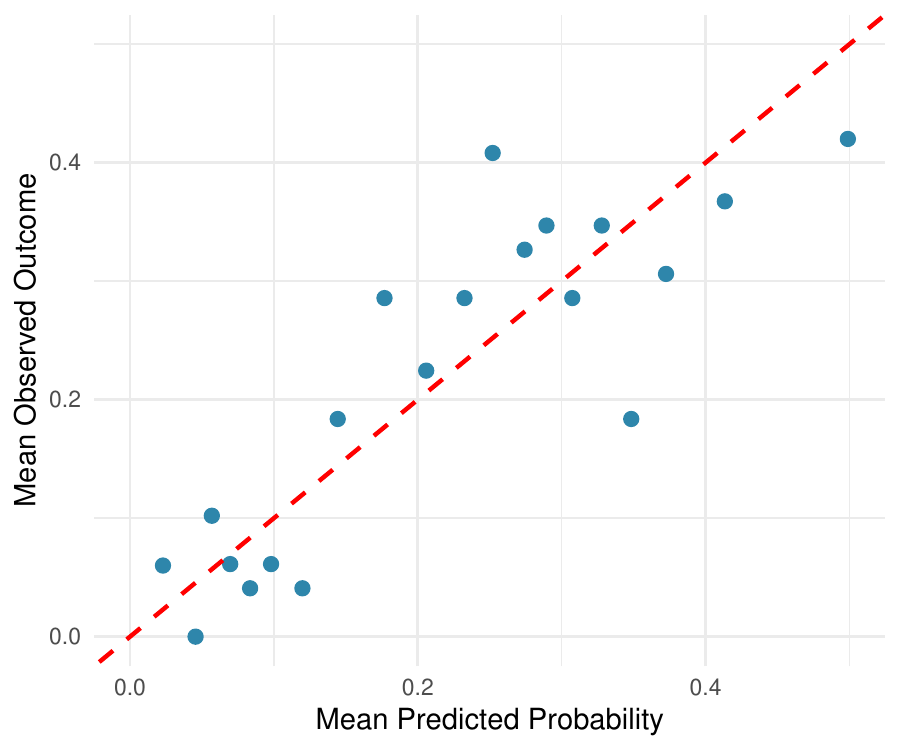}
    \caption*{b. Calibration Plot} 
        \end{center}

  \end{minipage}
\end{figure}

\subsection{Endogenous Network Formation}
\label{comparisonofnet}

\begin{figure}[!htbp]
  \centering
  \caption{Comparison of Structural Properties of Networks}
\label{comparisonnetwork}
  \begin{subfigure}[t]{0.45\textwidth}
    \centering
    \includegraphics[width=\linewidth]{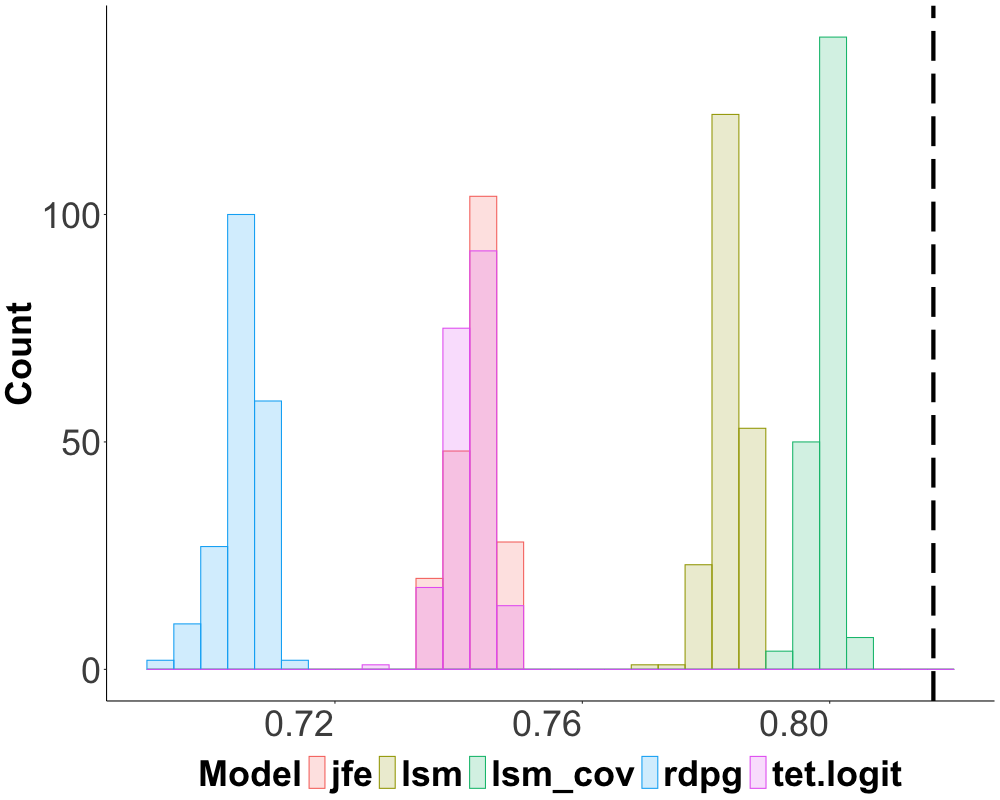}
    \caption{Modularity value (Male)}
  \end{subfigure}
  %\hfill
  \begin{subfigure}[t]{0.45\textwidth}
    \centering
    \includegraphics[width=\linewidth]{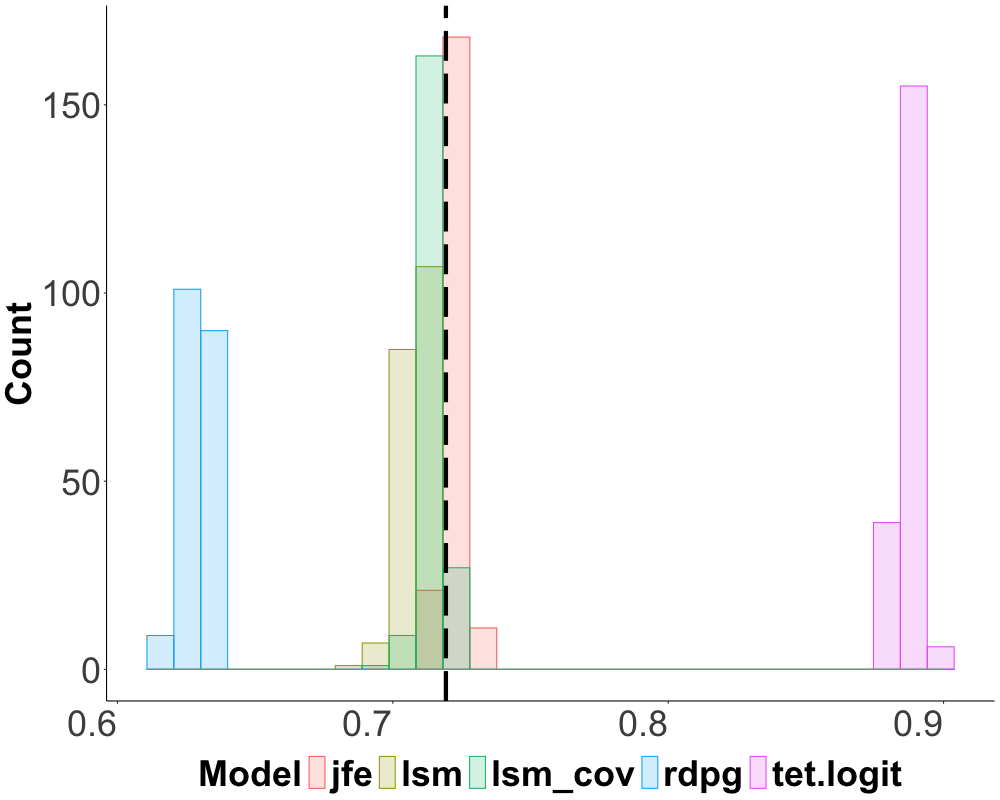}
    \caption{Modularity value (Female)}
  \end{subfigure}

  \vspace{1em}

  \begin{subfigure}[t]{0.45\textwidth}
    \centering
    \includegraphics[width=\linewidth]{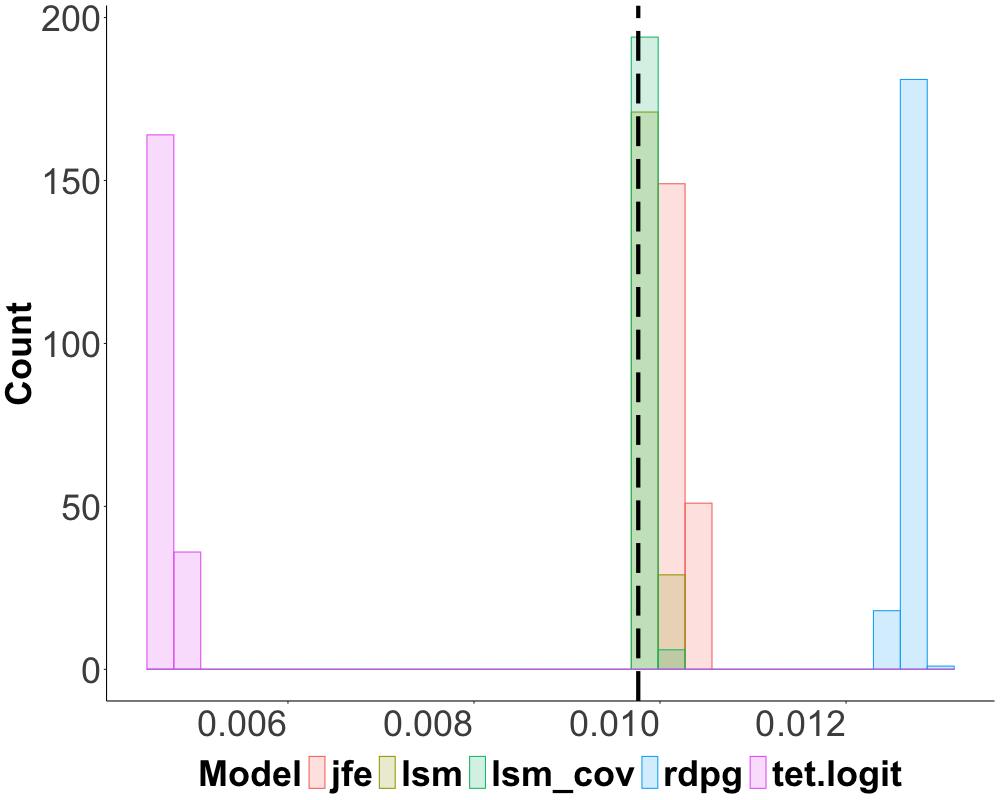}
    \caption{Std. deviation of row means (Male)}
  \end{subfigure}
  %\hfill
  \begin{subfigure}[t]{0.45\textwidth}
    \centering
    \includegraphics[width=\linewidth]{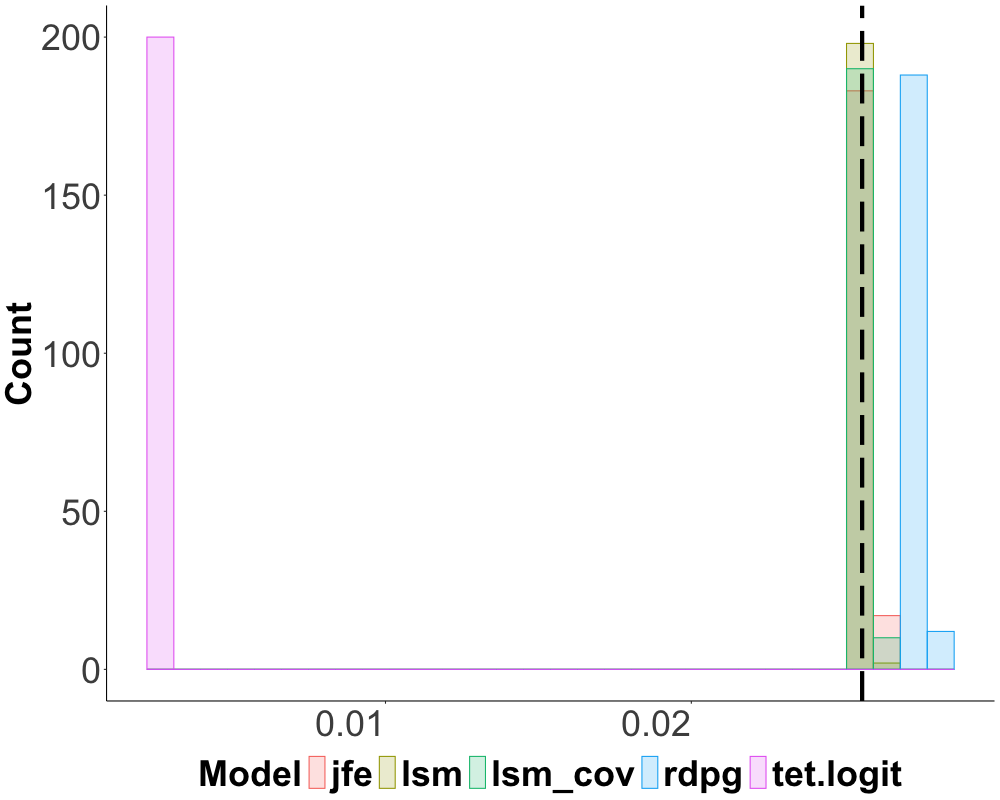}
    \caption{Std. deviation of row means (Female)}
  \end{subfigure}

  \vspace{1em}

  \begin{subfigure}[t]{0.450\textwidth}
    \centering
    \includegraphics[width=\linewidth]{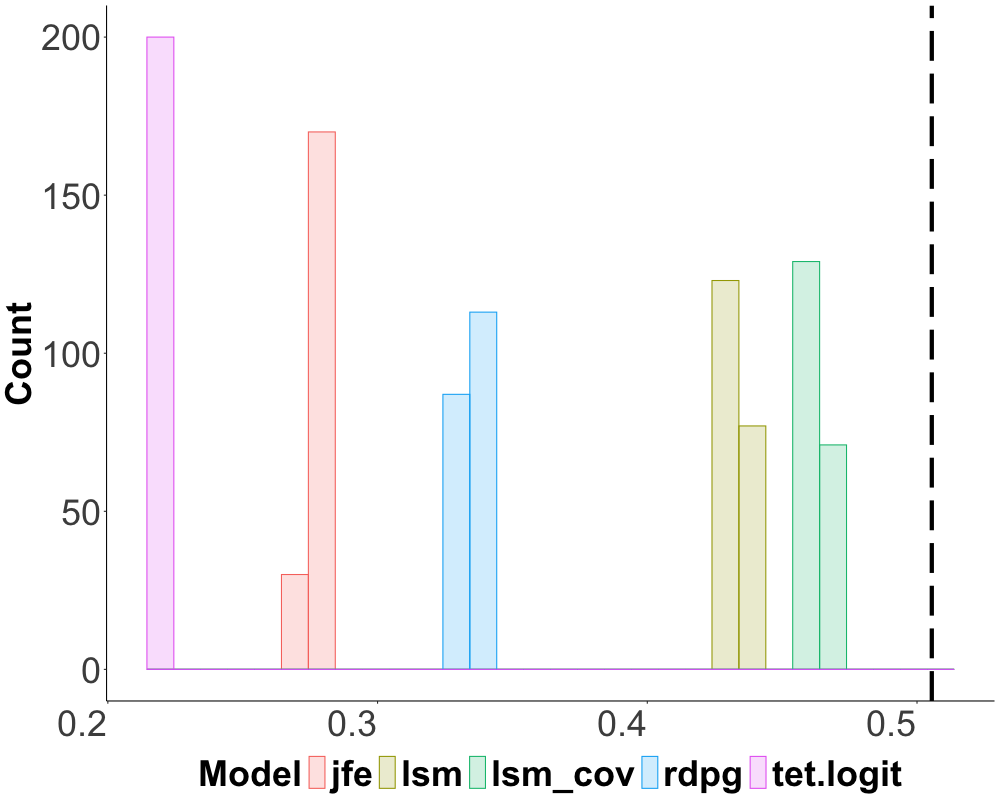}
    \caption{Transitivity of clustering coef. (Male)}
  \end{subfigure}
  %\hfill
  \begin{subfigure}[t]{0.450\textwidth}
    \centering
    \includegraphics[width=\linewidth]{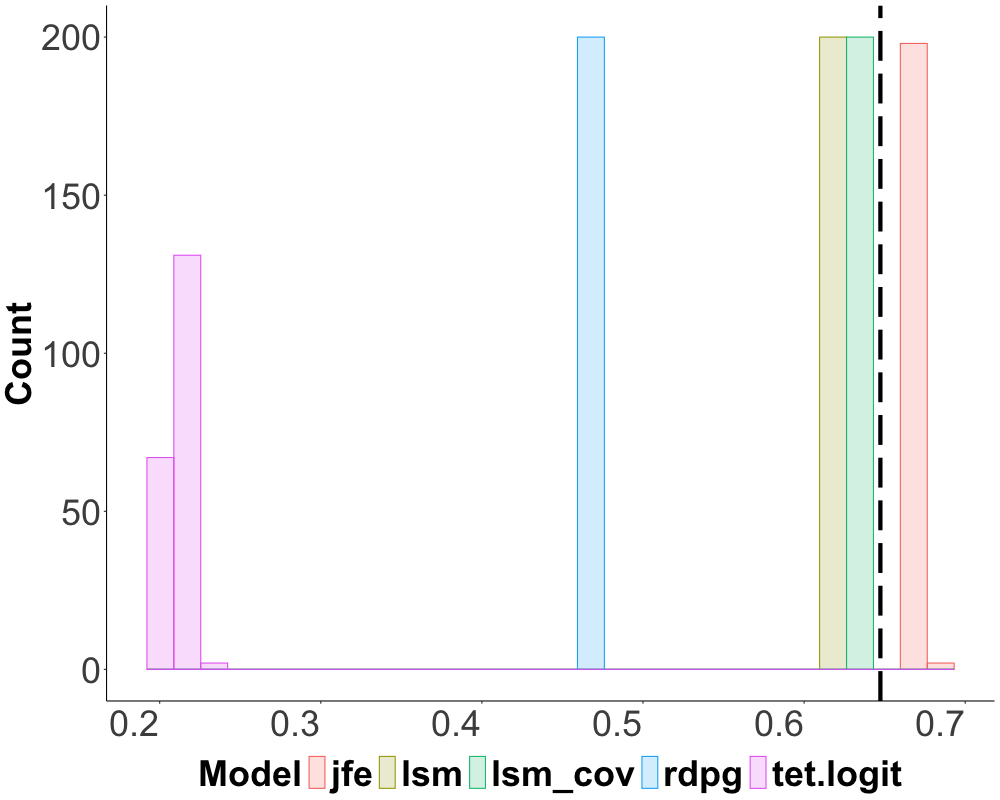}
    \caption{Transitivity of clustering coef. (Female)}
  \end{subfigure}
\begin{minipage}{15.0 cm}{\footnotesize{Notes: The dotted black vertical line displays the true network, and the distribution is obtained using 200 networks simulated from the fitted model, which varies for each of the estimators.}}
\end{minipage} 
\end{figure}

We compare five estimators for four network models in terms of their ability to replicate observed salient features of the TC networks. These estimators/models are (1) tetra logit estimator (tet.logit) and the (2) joint fixed effect maximum likelihood estimator (jfe) in \cite{graham2017econometric} that allow for single latent variable (additive) and covariates, 3) Spectral embedding in RDPG model \citep{athreya2017statistical} which allows for multivariate latent factors but no covariates (rdpg), 4) MLE in latent space model that allows for both multiplicative and additive latent factors but no observed covariates \cite{hoff2002latent} (lsm) and finally 5) MLE in latent space model from \cite{ma2020universal} that allows for both multiplicative and additive latent factors and observed covariates (\text{lsm}-\text{cov}).

Among the above-listed estimators that allow for observed covariates for network formation, we use the time difference (in days) between the residents' entry dates as a node-pair level covariate. The entry dates of residents can be assumed to be random and not explained by their observed or unobserved characteristics, since they are determined by a complex function of court decisions and the availability of beds in TCs. However, the proximity of entry dates is an important determinant of whether residents will exchange messages since overlap in their time in the TC is a necessary condition to exchange messages. We do not use the proximity in exit dates or time in TC as a predictor of network formation, as the unobserved heterogeneity of residents likely impacts these variables, and they are not pre-determined to network formation. 
\begin{table}[!h]
\caption{Cragg-Donald Weak Instruments Test}
\label{tab:weak_instruments}
\begin{center}
\resizebox{0.6\textwidth}{!}{%  % Reduces to 90% of text width
\begin{tabular}{lccc}
\hline\hline
Gender & Profile Type & CD Statistic & Critical Value \\
\midrule
\textbf{Female} & Sender & 13.16 & 10.25 \\
                & Receiver & 13.61 & 10.25 \\
\textbf{Male}   & Sender & 14.68 & 10.25 \\
                & Receiver & 12.00 & 10.25 \\
\hline\hline
\end{tabular}%
}
\end{center}
\end{table}
We compare the estimators in terms of their ability to replicate three widely observed structural network properties: modularity \citep{girvan2002community}, standard deviation of row means, and transitivity or the clustering coefficient \citep{newman2018networks}. The steps deployed for comparison are as follows. Let us consider the network data for the male unit. We implement each estimator on the observed male unit network to estimate the respective fitted models. Next, we simulate 200 networks using these fitted models (200 for each estimator listed above). The structural properties over these simulated networks are then compared with the true network to assess how well the simulated data can replicate the truth. Figure \ref{comparisonnetwork} displays the performance. We observe that the latent space model that allows for both multiplicative and additive latent factors, along with covariates (proximity in entry dates), outperforms all other models for every property in the male unit. For the female unit, it performs the best for the standard deviation of row means and transitivity of the clustering coefficient, and at least as good as the other estimators for modularity.

\subsection{Peer Effects in Text}
\label{peereffectlowsec}

\begin{table}[!h]
\begin{center}
\caption{Peer Effects in Class Probabilities (Female)}
\label{rhotablewithadjust}
\resizebox{0.85\textwidth}{!}{%
\begin{tabular}{lP{2.5cm}P{3cm}P{3cm}}
\hline\hline
 & Personal Growth & Community Support & Rule Violations \\
\midrule

\textbf{a. Sender Profiles}&&&\\
Peer Personal Growth ALR   & 0.882   & -0.629  & 0.606   \\
                           & (0.429) & (0.433) & (0.306) \\
Peer Community Support ALR & 0.194   & 1.955   & -0.665  \\
                           & (0.529) & (0.533) & (0.377) \\
Peer Rule Violations ALR   & -0.130  & -0.021  & 1.612   \\
                           & (0.490) & (0.494) & (0.349) \\
 \textbf{b. Receiver Profiles}                       &&&\\
Peer Personal Growth ALR   & 2.321   & 1.381   & 0.448   \\
                           & (0.413) & (0.436) & (0.276) \\
Peer Community Support ALR & -1.252  & -0.475  & -0.476  \\
                           & (0.487) & (0.515) & (0.326) \\
Peer Rule Violations ALR   & 0.536   & -0.003  & 1.526   \\
                           & (0.350) & (0.370) & (0.234) \\
\hline\hline
\multicolumn{4}{c}{ \begin{minipage}{15 cm}{\footnotesize{Notes: The outcome variables are mean class probabilities for each resident's text exchanges aggregated by sender profiles in panel (a) and by receiver profiles in panel (b). The adjustment is done using a latent space model that allows for the inclusion of covariates (proximity in entry dates).}}
\end{minipage}} \\
\end{tabular}%
}
%\end{center}
\end{center}
\end{table}

This section provides our results on the peer effects model, where the outcome variables are the class probabilities generated by Zero-Shot classification. Our preferred specification involves estimating the network formation model using a latent space model, which allows for both multiplicative and additive latent variables and observed covariates (\text{lsm}-\text{cov}). This selection is based on the results obtained in the previous section, which show that the latent space model with covariates performs the best in replicating key structural properties of the observed male and female unit networks.

\begin{table}[!h]
\caption{Peer Effects in Class Probabilities (Male)}
\label{rhotablewithadjustmale}
\begin{center}
\resizebox{0.85\textwidth}{!}{%
\begin{tabular}{lP{2.5cm}P{3cm}P{3cm}}
\hline\hline
 & Personal Growth & Community Support & Rule Violations\\
\midrule

\textbf{a. Sender Profiles}&&&\\
Peer Personal Growth ALR   & 0.305   & -0.839  & 0.046   \\
                           & (0.464) & (0.468) & (0.302) \\
Peer Community Support ALR & 0.699   & 1.741   & -0.009  \\
                           & (0.471) & (0.475) & (0.306) \\
Peer Rule Violations ALR   & -0.559  & -0.305  & 0.879   \\
                           & (0.371) & (0.375) & (0.241) \\
 \textbf{b. Receiver Profiles}                       &&&\\
Peer Personal Growth ALR   & 1.615   & 0.644   & -0.034  \\
                           & (0.409) & (0.439) & (0.270) \\
Peer Community Support ALR & -0.461  & 0.434   & -0.025  \\
                           & (0.393) & (0.422) & (0.259) \\
Peer Rule Violations ALR   & -0.134  & -0.088  & 1.126   \\
                           & (0.289) & (0.310) & (0.191) \\
\hline\hline
\multicolumn{4}{c}{ \begin{minipage}{15.0 cm}{\footnotesize{Notes: The outcome variables are mean class probabilities for each resident's text exchanges aggregated by sender profiles in panel (a) and by receiver profiles in panel (b). The adjustment is done using a latent space model that allows for the inclusion of covariates (proximity in entry dates).}}
\end{minipage}} 
\end{tabular}%
}
\end{center}
\end{table}

As a first step, we assess the strength of the instruments for explaining the variability in the three endogenous regressors using the Cragg-Donald F statistic (CD statistic) and compare it with the Stock Yogo critical values \citep{stock2002testing}. Table \ref{tab:weak_instruments} shows the CD statistic is between 12 to 14.68 while the critical value at the 10\% maximal IV relative bias is 10.25. This indicates the instruments are moderately strong. We note that maximal size-based Stock Yogo critical values are not available when the number of endogenous regressors is greater than 2 (three in our setup, one for each ALR), and hence we compare with only the relative bias critical values.

Tables \ref{rhotablewithadjust} and \ref{rhotablewithadjustmale} provide the results for IV 2SLS in text class probabilities after adjusting for the unobserved heterogeneity in the outcome model (Algorithm \ref{algorithm1}) for females and males, respectively. The dimension of the multiplicative factor in the latent space model is chosen using data-driven cross-validation methods for networks in \cite{li2020network}. The choice of data-driven dimensions for the multiplicative factors is 14 for the female and 19 for the male units, respectively. In addition, the model for network formation contains one additive nodal latent variable and an observed edge-level covariate that captures the difference in the residents' entry dates.

First, we discuss the diagonal elements in Table \ref{rhotablewithadjust}, which inform us about the spillovers of peer probabilities on female residents for the same class. There is evidence of a peer effect in all three classes relative to disruptive conduct at the 5\% significance level. Moreover, there is also evidence of peer spillovers from personal growth and community support categories on rule violations category. We can interpret these results as follows, for example. An individual is more likely to send messages of peer community support if their peers are also sending messages of community support. As a comparison, Table \ref{rhotablewithoutadjust} presents the results when unobserved heterogeneity in the outcome model and network formation are not accounted for. 

Next, the same analysis is repeated for the female unit, but the aggregation is done by the receiver profiles (panel b in Table \ref{rhotablewithadjust}). When aggregated by the receiver profiles, we find evidence of direct spillover effects only in personal growth ALR and rule violations ALR at the 95\% confidence interval. As for the indirect spillovers, we see a positive indirect spillover of peer personal growth on community support. On the contrary, peer community support negatively impacts personal growth. This negative coefficient might emanate from highly positive correlations between peer personal growth and community support ALR relative to disruptive conduct.

The peer effects estimation is next done for the residents in the male units (Table \ref{rhotablewithadjustmale}). The estimates, when aggregated by sender profiles, suggest that there are statistically significant positive direct spillovers of community support and rule violations ALR at the 95$\%$ confidence level. For all of the indirect spillovers, the 95\% confidence interval includes 0. In panel (b) of Table \ref{rhotablewithadjustmale}, estimates are provided for receiver profile aggregation. The direct spillovers are all positive, but only the ones for personal growth and rule violations are statistically significant at the 5\% level. The indirect spillovers are again small relative to the standard errors. For comparison, Table \ref{rhotablewithoutadjustmale} provides the estimates without adjusting for unobserved heterogeneity. 

Overall, we find evidence of peer influence in the ALR of text class probabilities both when aggregated by the sender profiles and when aggregated by the receiver profiles. Interestingly, the magnitude and standard errors change considerably once we account for unobserved confounding in outcome and network formation using latent variables. This underscores the importance of controlling for latent confounding in peer effect studies. We also note from our results that the same category peer influence spillovers generally have a larger impact than across-category spillovers for both male and female units. Finally, we note that even though some of the values in our peer effect parameter matrix are greater than 1, the matrix $(I-D \otimes G)$  is invertible, which makes the structural multivariate peer effect model in Equation \ref{MSARmodel} stable (by Lemma 1 in \cite{zhu2020multivariate}).

\paragraph{Robustness of Peer Effect Estimates:}
\begin{table}[h]
\centering
\caption{Network Permutation Test and Empirical P-Values}
\label{tab:permutation_test}
\resizebox{0.85\textwidth}{!}{%
\begin{tabular}{lp{2.5cm}p{2.5cm}p{2.5cm}}
\hline\hline
 & Personal Growth & Community Support & Rule Violations \\
\hline
\multicolumn{4}{l}{\textbf{Panel A: Female Unit (Sender Profiles)}} \\
Peer Personal Growth ALR & 0.000 & 1.000 & 0.380 \\
Peer Community Support ALR & 0.880 & 0.040 & 0.500 \\
Peer Rule Violations ALR & 1.000 & 0.520 & 0.000 \\
\\
\multicolumn{4}{l}{\textbf{Panel B: Male Unit (Sender Profiles)}} \\
Peer Personal Growth ALR & 0.600 & 0.240 & 0.860 \\
Peer Community Support ALR & 0.080 & 0.000 & 0.980 \\
Peer Rule Violations ALR & 0.100 & 0.580 & 0.000 \\
\hline\hline
\end{tabular}%
}
\begin{minipage}{14cm}
\footnotesize{\textit{Notes:} 
Peers are randomly shuffled 50 times to estimate the empirical p-values for peer effects from network permutation. Diagonal elements (own peer effects) are the primary coefficients of interest.}
\end{minipage}
\end{table}

We validate the peer effect estimates obtained in Section \ref{peereffectlowsec} by comparing them against the distribution of peer effects obtained by randomly permuting peers. Intuitively, the ``significant'' peer effects at conventional levels should be much higher in standardized absolute magnitude than this distribution since the peer effects with random peer assignment should be null. This is implemented by randomly shuffling the network connections, preserving each resident's unique number of connections and the total number of messages exchanged. This process is repeated 50 times, yielding 50 simulated random networks. Once these networks are obtained, we repeat the steps outlined in Algorithm \ref{algorithm1} to estimate the peer effects matrix. Note that, since the network changes with each simulation, we re-estimate the latent variables using our preferred latent-space model with covariates. The 50 networks and peer effect estimation can be used to construct the entire empirical distribution of the t-statistic of the peer effects parameter. The final step is to compute the empirical p-value as the proportion of test statistics that exceed the observed statistic value.

The empirical p-values for the observed test statistics of peer effects are reported in Table \ref{tab:permutation_test}. The diagonal elements provide same-category peer effects, and the off-diagonal elements provide the cross-category effects. The table shows that each diagonal element has a very small empirical p-value (<0.05). For the male unit, the observed t-statistics for peer effects on Community Support ($p = 0.00$) and Rule Violations ($p = 0.00$) are significant at conventional levels, whereas the peer effect on Personal Growth is not statistically significant ($p = 0.60$). This is consistent with the observations from the main results in Table \ref{rhotablewithadjustmale}.

\section{Discussion}

The results in Section \ref{results} show that there are substantial peer effects in language within the correctional units. The prediction analysis provides evidence that language profiles contain meaningful signals for predicting recidivism, especially in settings where conventional pre-entry covariates have limited predictive power. However, we consider a limitation of our analysis. For policy implications, one has to move beyond predictive power. To establish the value of language profiles for recidivism, it is necessary to estimate the causal link between language profiles and post-release outcomes. 

The directed acyclic graph for this causal effect is provided in Figure \ref{fig:dag_step2}. The primary objective is to estimate the direct effect of the $i^{th}$ resident's language profile ($L_i$) on the $i's$ recidivism ($R_{i}$), conditional on the observed covariates $X_i$. However, the unobserved heterogeneity $U_{i}$ confounds this relationship. These unobservables can capture multiple factors such as behaviors, personality traits, motivation levels, and aspirations. 

We estimate the linear relationship between $R_i$ and $L_i$ controlling for the observed covariates. Next, we do sensitivity analysis using two complementary methods \cite{oster2019unobservable} and \cite{cinelli2020making} to get the bounds on how much omitted variable bias is needed relative to the observables to explain away the entire effect of $L_i$ on $R_i$.

\begin{figure}[!htbp]
\centering
\caption{Causal Impact of Own Language on Recidivism}
\label{fig:dag_step2}
\begin{tikzpicture}[
    node distance=2.5cm and 3.5cm,
    >={Stealth[length=3mm]},
    every node/.style={font=\normalsize},
    box/.style={rectangle, draw, thick, minimum width=2cm, minimum height=0.8cm, fill=blue!10},
    circ/.style={circle, draw, thick, minimum size=1cm, fill=orange!10}
]
\node[box] (L) {$L_i$};
\node[box, right=of L] (R) {$R_i$};
\node[circ, above=1.8cm of L] (U) {$U_i$};
\node[box, below=1.8cm of L] (X) {$X_i$};

\draw[->, very thick, red] (L) -- (R) node[midway, above] {$\beta$};

\draw[->, dashed, thick, gray] (U) -- (L);
\draw[->, dashed, thick, gray] (U) to[bend right=20] (R);

\draw[->, thick] (X) -- (L);
\draw[->, thick] (X) to[bend left=20] (R);

\end{tikzpicture}
\end{figure}

First, we implement the approach outlined in \cite{oster2019unobservable}. \cite{oster2019unobservable} defines the ``coefficient of proportionality" $\delta$ that informs the extent of the selection on unobservable relative to the observed covariates needed to completely explain the estimated effect of $L_i$ on $R_i$. To calculate $\delta$, the researcher has to fix the value of $R_{\text{max}}$, which is the true $R^2$ if we were to regress $R_i$ on $L_i, X_i$, and the unobservable $U_i$. For our analysis we set $R_{\text{max}}=1.3\tilde{R}$ using the recommendations in \cite{oster2019unobservable} where $\tilde{R}$ is the $R^2$ from the regression of $R_i$ on the observables $L_i,X_i$.

Table \ref{tab:oster_bounds_fullapp} provides the results of this analysis for the female sender, female receiver, male sender, and receiver in panels A--D, respectively. Column (1) reports the estimated $\beta$ from the regression of $R_i$ on $L_i$ controlling for the observed covariates (LSI-R, age, education, and race). In Column (2), we report the $\beta$ for the benchmark scenario where the strength of the selection on unobservables is equivalent to that of the selection on observed covariates. Column (3) reports the $\delta^*$ that will eliminate the effect of $L_i$ on $R_i$. Finally, the last column reports the $R^2$ from the model when $R_i$ is regressed on both the language profiles and the covariates.

For the female unit, all three language profile coefficients remain economically meaningful even when considering that the selection on unobservables is as strong as the selection on observables. For instance, in the female sender profiles, Community Support reduces recidivism by 2.4 percentage points (versus 3.7 in OLS), while Rule Violations increases it by 7.6 percentage points (versus 11.8 in OLS). We see similar patterns, but only for Community support and Rule Violation ALRs in the male unit. Lastly, the results in column (3) suggest that the selection on unobservables has to be in the range of 2.7--2.9 times the selection on observables to completely explain away the impact of language profiles on recidivism.

For the second bounding analysis presented in Table \ref{tab:sensemakr}, we use the ``sensmakr'' method in \cite{cinelli2020making}.  Unlike our previous bounding analysis,  sensemakr R package does not support simultaneous sensitivity analysis for multiple correlated treatments. Therefore, we construct a single language treatment variable, which we call the ``positive language" variable. To do so, we add the positive zero-shot classes Personal Growth and Community Support to create a positive language class, and the negative zero-shot classes Rule Violations and Disruptive Conduct to create a negative class. We then take the log ratio of these two new classes to obtain our positive language variable. 

From  the Table \ref{tab:sensemakr}, we see the Robustness Values $RV(q=1)$ ranges from 0.22 to 0.24. This indicates that, with the assumption of equal strength of association with the outcome and the treatment (in terms of partial $R^2$, i.e., explaining the residual variance), a confounder needs to explain 22-24\% of the residual variation in both the outcome and the treatment in a model that also contains the observed covariates. A similar observation can be made from the column of $RV(\alpha=0.05)$ in terms of the strength of confounders needed to completely invalidate the statistical significance of the effects of positive language. Benchmarking against the observed covariate LSI-R, we see that the statistical significance of the effect of positive language will remain (for female senders) even if we have an unobserved confounder that is 3 times as strong as LSI-R in terms of association with both the outcome and the treatment.

The bounding analysis above provides an estimate of the strength of unobserved confounders required to invalidate our estimate. However, we note that to satisfactorily establish the causal impact of language profiles on recidivism, one needs either to implement a randomized controlled trial that creates random variation in language or to use natural experiments to tease out the causal effect. We see this as a future direction of research, building on our work on the underlying mechanisms of peer effects in language interactions in these facilities. 

In conclusion, we develop a framework that combines AI-based measurements of unstructured text with econometric methods to investigate peer effects in language. This framework can be applied more generally, outside our application to recidivism, to study peer effects in contexts such as education and worker productivity, where written or spoken language exchange data are available. Our results also highlight the predictive ability of AI measurements to complement traditional covariates for predicting economic outcomes.

\bibliographystyle{apalike}
\setlength{\bibsep}{-1.2pt} 

\bibliography{typ}
\newpage

\appendix
\renewcommand{\thesection}{A\arabic{section}}
\setcounter{section}{0}
\renewcommand{\thefigure}{A\arabic{section}.\arabic{figure}}
\renewcommand{\thetable}{A\arabic{section}.\arabic{table}}

\section{Mathematical Appendix}
\label{mathproofs}
\setcounter{figure}{0}
\setcounter{table}{0}

\paragraph{Notations:} For a matrix $A$ we denote its spectral norm as $\|A\|$ and the Frobenius norm as $\|A\|_F$.  We interchangeably use the notation $A'$ and $A^T$ to represent the transpose of a matrix $A$. We use the stochastic order notation $o_p(1)$ to mean that if $x_N = o_p(1)$, then $x_N \overset{p}{\to} 0$.

\subsection{Proof of Proposition \ref{identify}}
\begin{proof}
We start from the moment condition:
\begin{align*}
    & \E [(\mathbf{K}_i- \E[\mathbf{K}_i|\mathbf{U}_i])^T(\mathbf{Y}_i - \E [ \mathbf{Y}_i|U_i] -(\mathbf{Z}_i- \E[\mathbf{Z}_i|\mathbf{U}_i])\beta)] \\
    & =   \E [(\mathbf{K}_i- \E[\mathbf{K}_i|\mathbf{U}_i])^T((\mathbf{Z}_i- \E[\mathbf{Z}_i|\mathbf{U}_i])(\beta_0-\beta) + (\mathbf E_i-\E[\bm E_i|\mathbf U_i]))]\\
    & = \E [(\mathbf{K}_i- \E[\mathbf{K}_i|\mathbf{U}_i])^T(\mathbf{Z}_i- \E[\mathbf{Z}_i|\mathbf{U}_i])(\beta_0-\beta)] + \E [(\mathbf{K}_i- \E[\mathbf{K}_i|\mathbf{U}_i])^T(\mathbf E_i-\E[\bm E_i|\mathbf U_i])]
\end{align*}
 We can easily see that the conditional expectation of the second part, conditioning on $\bm U_i$ is (adapting the arguments from \cite{johnsson2021estimation}) 
 \begin{align*}
      \E [(\mathbf{K}_i- E[\mathbf{K}_i|\mathbf{U}_i])^T(\mathbf E_i-\E[\bm E_i|\mathbf U_i])| \bm U_i] & = \E [\mathbf{K}_i^T \bm E_i |\bm U_i] - \E[ \bm K_i|\bm U_i]^T \E[\bm E_i| \bm U_i]\\
      & = \E [ \E [\mathbf{K}_i^T \bm E_i |\bm U_i, \bm X,\bm A|\bm U_i]] - \E[ \bm K_i|\bm U_i]^T \E[\bm E_i| \bm U_i]\\
      & = \E [ \bm K_i^T \E [ \bm E_i |\bm U_i, \bm X,\bm A|\bm U_i]] - \E[ \bm K_i|\bm U_i]^T \E[\bm E_i| \bm U_i]\\
      & = \E [ \bm K_i^T \E [ \bm E_i |\bm U_i]|\bm U_i]] - \E[ \bm K_i|\bm U_i]^T \E[\bm E_i| \bm U_i]=0.
 \end{align*}
 Note that in the above, in line 2, we have used the fact that $\bm K_i$ is a function of $\bm A$ and $\bm X$, and therefore, conditional on them, we can take $\bm K_i$ outside the expectation. Further, in line 4, we have used the fact that under the assumptions on the model $\E [\bm E_i |\bm U_i, \bm X,\bm A] = \E [\bm E_i|\bm U_i]$. 
 Therefore, by the law of iterated expectations,
 \[
 \E [(\mathbf{K}_i- E[\mathbf{K}_i|\mathbf{U}_i])^T(\mathbf E_i-\E[\bm E_i|\mathbf U_i])] =0.
 \]
 Therefore, the moment condition implies, $\beta=\beta_0$, as long as the $3p \times (m+2p)$ matrix $\E [(\mathbf{K}_i- E[\mathbf{K}_i|\mathbf{U}_i])^T((\mathbf{Z}_i- E[\mathbf{Z}_i|\mathbf{U}_i])]$ is invertible.
\end{proof}

\subsection{Proof of Theorem \ref{thm:clt}}
We derive the asymptotic distribution of the 2SLS estimator \( \hat{\boldsymbol{\beta}}_{v,\text{IV}} \) in three steps similar to  the structure in \cite{johnsson2021estimation}.

We start from the IV-2SLS estimator described in \ref{eq-2sls}. Substituting $\mathbf{Y}_N=\mathbf{Z}_N\beta_0 + \mathbf{E}_N$ in Equation \ref{eq-2sls}, we have 
\begin{align*}
       \sqrt{N}(\hat{\bm\beta}_{2SLS}-\beta_0) = \left(\bm Z_N^T \bm M_{\hat{\bm \Phi}_N} \bm K_N \left(\bm K_N^T \bm M_{\hat{\bm \Phi}_N} \bm K_N\right)^{-1} \bm K_N^T \bm M_{\hat{\bm \Phi}_N} \bm Z_N\right)^{-1}\nonumber \\
    \times  \bm Z_N^T \bm M_{\hat{\bm \Phi}_N} \bm K_N \left(\bm K_N^T \bm M_{\hat{\bm \Phi}_N} \bm K_N\right)^{-1}\bm K_N^T \bm M_{\hat{\bm \Phi}_N} \left( \tilde{\bm E}_N+ \bm h^E(\bm U_N) -  \hat{\bm \Phi}_N \bm \alpha_{L_N}^{E} \right).
\end{align*}
This is because $\tilde{\bm E}_N = \bm E_N - h^E(\bm U_N)$, and $\bm M_{\hat{\bm \Phi}_N} \hat{\bm \Phi}_N=\bm 0$. 
The next lemma is our first step in developing an asymptotic normality result.
\begin{Lemma}
Assume the assumptions in section \ref{sec:assmp}. Then, we have
\begin{enumerate}
    \item $\frac{1}{N} (\bm{K}_N' \bm{P}_{\hat{\bm{\Phi}}_N} \bm{Z}_N - \bm{K}_N' \bm{P}_{\bm{\Phi}_N} \bm{Z}_N) = o_p(1)$.
    \item $\frac{1}{N} (\bm{K}_N' \bm{P}_{\hat{\bm{\Phi}}_N} \bm{K}_N - \bm{K}_N' \bm{P}_{\bm{\Phi}_N} \bm{K}_N) = o_p(1)$.
    \item $\frac{1}{\sqrt{N}} (\bm{K}_N' \bm{P}_{\hat{\bm{\Phi}}_N} \tilde{\bm E}_N - \bm{K}_N' \bm{P}_{\bm{\Phi}_N} \tilde{\bm E}_N) = o_p(1)$.
    \item $\frac{1}{\sqrt{N}} \{\bm{K}_N' \bm{M}_{\hat{\bm{\Phi}}_N} (\bm{h}^E(\bm U_N) - \hat{\bm{\Phi}}_N \bm{\alpha}_{L_N}^E)\} = o_p(1)$.
\end{enumerate}
Consequently, 
\begin{align*}
    \sqrt{N} \left( \hat{\bm\beta}_{2SLS} - \bm\beta^0 \right)
    &= \left( \frac{1}{N} \bm{Z}_N^T \bm{M}_{\bm{\Phi}_N} \bm{K}_N 
        \left( \frac{1}{N}\bm{K}_N^T \bm{M}_{\bm{\Phi}_N} \bm{K}_N \right)^{-1}\frac{1}{N} \bm{K}_N^T \bm{M}_{\bm{\Phi}_N} \bm{Z}_N \right)^{-1} \\
    & \times \frac{1}{N} \bm{Z}_N^T \bm{M}_{\bm{\Phi}_N} \bm{K}_N 
        \left( \frac{1}{N}\bm{K}_N^T \bm{M}_{\bm{\Phi}_N} \bm{K}_N \right)^{-1} 
        \frac{1}{\sqrt{N}} \bm{K}_N^T \bm{M}_{\bm{\Phi}_N} \tilde{\bm E}_N + o_p(1)
\end{align*}
\label{lem:samplingerror}
\end{Lemma}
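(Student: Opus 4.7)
All four sub-claims reduce to controlling how much replacing the estimated projector $\bm P_{\hat{\bm\Phi}_N}$ by its population counterpart $\bm P_{\bm\Phi_N}$ costs us when sandwiched between $\bm K_N,\bm Z_N$, or $\tilde{\bm E}_N$. The workhorse is Lemma \ref{lmm:sieve:est}, which gives $\frac{1}{N}\|\hat{\bm\Phi}_N-\bm\Phi_N\|_F^2=o_{hp}(1/\zeta_0(L_N)^2)$ in both network regimes after plugging in Assumption \ref{assmp:lipsch}. Combined with Assumption \ref{assmp:sieve}(ii), which forces the smallest eigenvalue of $\E[\bm\phi^{L_N}(\bm u_1)\bm\phi^{L_N}(\bm u_1)^\top]$ to stay bounded away from zero, a standard concentration/perturbation argument shows that both $\bm\Phi_N^\top\bm\Phi_N/N$ and $\hat{\bm\Phi}_N^\top\hat{\bm\Phi}_N/N$ are invertible with high probability and their inverses are $O_p(1)$ in spectral norm. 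The algebraic consequence at the end is then just a Slutsky-style substitution.

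\textbf{Parts 1--3: projection swap.} I would write
\[
\bm P_{\hat{\bm\Phi}_N}-\bm P_{\bm\Phi_N}=(\hat{\bm\Phi}_N-\bm\Phi_N)(\hat{\bm\Phi}_N^\top\hat{\bm\Phi}_N)^{-1}\hat{\bm\Phi}_N^\top+\bm\Phi_N\bigl[(\hat{\bm\Phi}_N^\top\hat{\bm\Phi}_N)^{-1}-(\bm\Phi_N^\top\bm\Phi_N)^{-1}\bigr]\hat{\bm\Phi}_N^\top+\bm P_{\bm\Phi_N}(\hat{\bm\Phi}_N-\bm\Phi_N)^\top,
\]
bound each summand in Frobenius/operator norm using Lemma \ref{lmm:sieve:est} and the well-conditioning above, and then control $\|\bm K_N\|,\|\bm Z_N\|$ via moment bounds derived from Assumption \ref{assmp:jnm} (the bounded covariates/errors plus the row-normalization of $\bm G$ make entries of $\bm K_N=[\bm X,\bm G\bm X,\bm G^2\bm X]$ uniformly bounded). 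For parts 1 and 2 this gives $O_p(1)$ times $o_p(1)$ and yields $o_p(1)$ after dividing by $N$. For part 3 the extra factor $\tilde{\bm E}_N$ is conditionally mean-zero given $\bm U$, so I would condition on $\bm U$ and apply a Chebyshev bound to pick up an extra $N^{-1/2}$ (using $\Sigma_{\tilde E}=V$ finite); together with the projection-swap bound this delivers $o_p(1)$ after scaling by $1/\sqrt N$. The rate condition $\zeta_0(L_N)^2 L_N/N=o(1)$ from Assumption \ref{assmp:sieve}(iii) is precisely what is needed so that the total $L_N$-dependent error beats the $\sqrt N$ scaling.

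\textbf{Part 4: the approximation bias.} Because $\bm M_{\hat{\bm\Phi}_N}\hat{\bm\Phi}_N=\bm 0$, the expression inside collapses to $\bm K_N^\top\bm M_{\hat{\bm\Phi}_N}\bm h^E(\bm U_N)$. Add and subtract $\bm\Phi_N\bm\alpha_{L_N}^E$ to split this into a pure sieve-approximation piece $\bm K_N^\top\bm M_{\hat{\bm\Phi}_N}\bigl[\bm h^E(\bm U_N)-\bm\Phi_N\bm\alpha_{L_N}^E\bigr]$ and a projection-perturbation piece $\bm K_N^\top\bm M_{\hat{\bm\Phi}_N}(\bm\Phi_N-\hat{\bm\Phi}_N)\bm\alpha_{L_N}^E$. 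The first is $o_p(\sqrt N)$ directly from the uniform sieve approximation error $L_N^{-\kappa}=o(1/\sqrt N)$ in Assumption \ref{assmp:sieve}(iii) and $\|\bm M_{\hat{\bm\Phi}_N}\|\le 1$. The second is handled by Lemma \ref{lmm:sieve:est} and a norm bound on $\bm\alpha_{L_N}^E$ (which follows from boundedness of $\bm h^E$ through Assumption \ref{assmp:jnm} and the Bessel-type lower eigenvalue bound in Assumption \ref{assmp:sieve}(ii)).

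\textbf{Main obstacle and wrap-up.} The delicate step is calibrating the three rates --- Lipschitz constants $\zeta_1(k)$, the sieve radius $\zeta_0(L_N)$, and the graph density ($\rho_N$ or $\omega_N$) --- so that all four $o_p$ bounds hold simultaneously at the required scaling; this is precisely the function of Assumptions \ref{assmp:sieve}--\ref{assmp:lipsch}, and essentially no slack remains. Once the four parts are in hand, I would substitute into the expansion of $\sqrt N(\hat{\bm\beta}_{2SLS}-\bm\beta^0)$ displayed just above the lemma, use parts 1 and 2 to swap $\bm M_{\hat{\bm\Phi}_N}$ for $\bm M_{\bm\Phi_N}$ inside the Gram matrices, use parts 3 and 4 to swap and eliminate terms in the noise factor, and apply continuous-mapping/Slutsky (invoking that the population Gram matrix is invertible by Lemma \ref{identify} plus Assumption \ref{assmp:sieve}(ii)) to obtain the stated consequence with a single asymptotically-negligible $o_p(1)$ remainder.
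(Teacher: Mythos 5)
Your proposal follows essentially the same route as the paper's own proof: the same projector-difference decomposition controlled by Lemma \ref{lmm:sieve:est} and the Gram-matrix eigenvalue bounds (Lemma \ref{lmm:sieve:rank}), the same conditional-variance argument exploiting $\E[\tilde{\bm E}_N\mid \bm U]=\bm 0$ to gain the extra $N^{-1/2}$ in part 3, the same use of $\bm M_{\hat{\bm\Phi}_N}\hat{\bm\Phi}_N=\bm 0$ together with the uniform sieve rate $L_N^{-\kappa}=o(1/\sqrt{N})$ for part 4 (your split into approximation-bias and perturbation pieces is algebraically equivalent to the paper's split into $\bm K_N'(\bm M_{\hat{\bm\Phi}_N}-\bm M_{\bm\Phi_N})\bm h^E$ plus $\bm K_N'\bm M_{\bm\Phi_N}(\bm h^E-\bm\Phi_N\bm\alpha_{L_N}^E)$), and the same Slutsky substitution at the end. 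The only discrepancy is a dimensional slip in your displayed identity, whose third summand should read $\bm\Phi_N(\bm\Phi_N^\top\bm\Phi_N)^{-1}(\hat{\bm\Phi}_N-\bm\Phi_N)^\top$ rather than $\bm P_{\bm\Phi_N}(\hat{\bm\Phi}_N-\bm\Phi_N)^\top$; otherwise your three-term expansion is just a regrouping of the paper's four-term one.
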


The next lemma further shows that under our assumptions, the sieve estimators effectively approximate the various conditional mean functions.
This is done in the next lemma.
\begin{Lemma}
Under the assumptions in section \ref{sec:assmp}, we have the following results
\begin{enumerate}
    \item $\frac{1}{N} \bm{K}_N' \bm{P}_{\bm{\Phi}_N} \bm{Z}_N = \frac{1}{N}\sum_{i\le N}[\bm k_i-\bm h^K(\bm u_i)][\bm z_i-\bm h^Z(\bm u_i)]'+o_p(1)$.
    \item $\frac{1}{N} \bm{K}_N' \bm{P}_{\bm{\Phi}_N} \bm{K}_N = \frac{1}{N}\sum_{i\le N}[\bm k_i-\bm h^K(\bm u_i)][\bm k_i-\bm h^K(\bm u_i)]'+o_p(1)$.
    \item $\frac{1}{\sqrt{N}}\bm{K}_N' \bm{P}_{\bm{\Phi}_N} \tilde{\bm E}_N=\frac{1}{N}\sum_{i\le N}[\bm k_i-\bm h^K(\bm u_i)]{\tilde{\bm e}_i}'+o_p(1)$.
\end{enumerate}
As a consequence of these results,
$\sqrt{N} \left( \hat{\bm\beta}_{2SLS} - \bm\beta^0 \right)$ can be written as
\begin{align}\label{eq:clt-obj}
    \left( \frac{1}{N} \tilde{\bm Z}_N'\tilde{\bm K}_N \left( \tilde{\bm K}_N'\tilde{\bm K}_N \right)^{-1}
        \tilde{\bm K}_N'\tilde{\bm Z}_N \right)^{-1}\times \frac{1}{\sqrt{N}}\tilde{\bm Z}_N'\tilde{\bm K}_N \left( \tilde{\bm K}_N'\tilde{\bm K}_N \right)^{-1}
        \tilde{\bm K}_N'\tilde{\bm E}_N + o_p(1)
\end{align}
\label{lem:sieveapprox}
\end{Lemma}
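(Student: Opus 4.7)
My plan is to treat all three claims uniformly by exploiting the fact that $\bm M_{\bm \Phi_N} = \bm I_N - \bm P_{\bm \Phi_N}$ is symmetric and idempotent, so that $\bm A' \bm M_{\bm \Phi_N} \bm B = (\bm M_{\bm \Phi_N} \bm A)'(\bm M_{\bm \Phi_N} \bm B)$ for any conformable matrices $\bm A, \bm B$. The guiding idea is that applying $\bm M_{\bm \Phi_N}$ to each of $\bm K_N$, $\bm Z_N$ should reproduce, up to a vanishing remainder, the population conditional-mean residuals $\tilde{\bm K}_N = \bm K_N - \bm h^K(\bm U_N)$ and $\tilde{\bm Z}_N = \bm Z_N - \bm h^Z(\bm U_N)$; applied to $\tilde{\bm E}_N$ it should be essentially the identity, because $\tilde{\bm E}_N$ is already conditionally mean zero. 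Once these three residual approximations are in hand, each claim of the lemma follows by expanding the inner product and bounding cross terms with Cauchy--Schwarz.

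The crucial approximation step writes $\bm K_N = \bm h^K(\bm U_N) + \bm \eta^K_N$ with $\E[\bm \eta^K_N \mid \bm U_N] = 0$, and further decomposes $\bm h^K(\bm U_N) = \bm \Phi_N \bm \alpha^K_{L_N} + \bm r^K_N$, where $\bm \alpha^K_{L_N}$ is the sieve coefficient vector from Assumption \ref{assmp:sieve}(iii) and $\|\bm r^K_N\|_{\infty} = O(L_N^{-\kappa})$. Since $\bm M_{\bm \Phi_N}\bm \Phi_N = \bm 0$, we get
\begin{equation*}
    \bm M_{\bm \Phi_N}\bm K_N - \bm \eta^K_N \;=\; \bm M_{\bm \Phi_N}\bm r^K_N - \bm P_{\bm \Phi_N}\bm \eta^K_N.
\end{equation*}
The first term has squared Frobenius norm at most $N \cdot O(L_N^{-2\kappa}) = o(1)$ by $L_N^{-\kappa}=o(N^{-1/2})$. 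The second is the projection of a conditionally mean-zero array onto an $L_N$-dimensional subspace, so a conditional second-moment calculation gives $\|\bm P_{\bm \Phi_N}\bm \eta^K_N\|_F^2 = O_p(L_N)$, which is $o_p(N)$ by Assumption \ref{assmp:sieve}(i). The same bounds hold verbatim with $\bm K$ replaced by $\bm Z$, and an analogous but simpler bound applies to $\bm P_{\bm \Phi_N}\tilde{\bm E}_N$.

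For parts (1) and (2) I will expand $(\bm M_{\bm \Phi_N}\bm K_N)'(\bm M_{\bm \Phi_N}\bm Z_N)$ as $\bm \eta^K_N{}' \bm \eta^Z_N$ plus three cross-remainders, and show each remainder is $o_p(N)$ via Cauchy--Schwarz using the Frobenius-norm bounds above together with $\|\bm \eta^K_N\|_F,\|\bm \eta^Z_N\|_F = O_p(\sqrt N)$ (which follow from boundedness in Assumption \ref{assmp:jnm}). For part (3), the scaling $1/\sqrt N$ is tighter, so I must verify that the one genuinely new term $\frac{1}{\sqrt N}\bm \eta^K_N{}'\bm P_{\bm \Phi_N}\tilde{\bm E}_N$ is $o_p(1)$; this follows from a conditional second-moment calculation conditional on $\bm U_N$, giving a variance of order $L_N/N$, which is $o(1)$ under Assumption \ref{assmp:sieve}(i). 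The term $\frac{1}{\sqrt N}(\bm M_{\bm \Phi_N}\bm r^K_N)'\tilde{\bm E}_N$ is controlled by Cauchy--Schwarz using the $L_N^{-\kappa}=o(N^{-1/2})$ rate.

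The deduction of \eqref{eq:clt-obj} then follows by substituting the three approximations into the representation from Lemma \ref{lem:samplingerror}: the bracketed inverse converges in probability to a non-degenerate limit by the rank condition of Lemma \ref{identify}, so continuity of matrix inversion together with Slutsky's lemma delivers the stated equality up to $o_p(1)$. I expect the main technical obstacle to lie in part (3), specifically in controlling $\bm P_{\bm \Phi_N}\bm \eta^K_N$ and $\bm P_{\bm \Phi_N}\tilde{\bm E}_N$ at the $\sqrt N$-scale, because $\bm K_N = [\bm X_N,\bm G\bm X_N,\bm G^2\bm X_N]$ depends on $\bm A$ and hence indirectly on $\bm U_N$ through the endogenous network-formation model. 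Handling this requires first conditioning on $\bm U_N$, using Assumption \ref{assmp:jnm} (which delivers $\E[\bm E_i \mid \bm X_i, \bm A, \bm U_i] = \E[\bm E_i \mid \bm U_i]$) to kill the conditional cross-expectation, and only then invoking the second-moment bounds on the sieve projections.
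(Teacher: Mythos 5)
Your proposal is correct and takes essentially the same route as the paper: decompose $\bm M_{\bm\Phi_N}\bm K_N$ into the conditional-mean residual $\tilde{\bm K}_N$ plus a sieve-bias term controlled by Assumption \ref{assmp:sieve}(iii) and a projected-noise term, clean up with Cauchy--Schwarz, and substitute into Lemma \ref{lem:samplingerror}. If anything, your write-up is more complete than the paper's one-line reduction to Lemma \ref{lmm:sieve:apprx}: the explicit $O_p(L_N)$ bound on $\|\bm P_{\bm\Phi_N}\bm\eta^K_N\|_F^2$ and the conditional second-moment argument for $\frac{1}{\sqrt N}(\bm\eta^K_N)'\bm P_{\bm\Phi_N}\tilde{\bm E}_N$ supply exactly the steps that the paper's assertion that ``3 follows from 2'' leaves implicit.
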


Now it remains to be shown that the first term in \eqref{eq:clt-obj} is asymptotically normal.
For this, we show that the vectorized estimator $\sqrt{N}\left( \hat{\bm\beta}_{v,2SLS} - \bm\beta_v^0 \right)\in\mathbb{R}^{m(m+2p)}$ is asymptotically normal.
First, by vectorizing  \eqref{eq:clt-obj}, we note that,
\begin{align}
    &\sqrt{N}\left( \hat{\bm\beta}_{v,2SLS} - \bm\beta_v^0 \right)\nonumber \\
    &= \frac{1}{N}\left[\bm I_m\otimes\left\{\tilde{\bm Z}_N'\tilde{\bm K}_N \left( \tilde{\bm K}_N'\tilde{\bm K}_N \right)^{-1}\tilde{\bm K}_N'\tilde{\bm Z}_N\right\}^{-1}\tilde{\bm Z}_N'\tilde{\bm K}_N \left( \tilde{\bm K}_N'\tilde{\bm K}_N \right)^{-1}\right]\nonumber \\
    &\qquad\times\frac{1}{\sqrt{N}}\operatorname{vec}(\tilde{\bm K}_N'{\tilde{\bm E}_N})+o_p(1).\label{eq:clt1}
\end{align}
Let $\bm Q\in\mathbb{R}^{3p\times m}$ be an arbitrary matrix with $\|\bm Q\|_F^2=1$ and $\bm q:=\operatorname{vec}(\bm Q)$.
Writing the $i$-th row of $\tilde{\bm E}_N$ by $\tilde{\bm e}_i$ and the $i$-th row of $\tilde{\bm K}_N$ by $\tilde{\bm k}_i$,
\[\operatorname{vec}(\tilde{\bm K}_N'{\tilde{\bm E}_N})'\bm q=\tr({\tilde{\bm E}_N}'\tilde{\bm K}_N\bm Q)=\sum_{i\le N}\tilde{\bm k}_i'\bm Q\tilde{\bm e}_i.\]
Now, let ${\cal F}_{Ni}:=\sigma\left(\bm X_N,\bm A, \bm u_s,\bm e_s;\,s\le i\right)$ be a filtration. 
Then, $T_{Ni}:=\frac{1}{\sqrt{N}}\tilde{\bm k}_i'\bm Q\tilde{\bm e}_i$ is a martingale difference adapted to ${\cal F}_{Ni}$.
For simplicity, let $\bm\iota_i^E:=\bm Q\tilde{\bm e}_i$, which are i.i.d from $(0_{3p},\bm Q\Sigma_{\tilde E}\bm Q')$.
We check the sufficient conditions for Corollary 3.1 of \cite{hall2014martingale} to apply the Martingale central limit theorem.
\begin{itemize}
    \item[(i).] For any $\epsilon>0$, we have
    \[\sum_i\E[T_{Ni}^21_{\{|T_{Ni}|>\epsilon\}}|{\cal F}_{N,i-1}]=o_p(1).\]
    \begin{proof}
        We have
        \begin{align*}
            \sum_i\E[T_{Ni}^21_{\{|T_{Ni}|>\epsilon\}}|{\cal F}_{N,i-1}]&\le
            \frac{1}{\epsilon^2}\sum_i\E[T_{Ni}^4|{\cal F}_{N,i-1}] \\
            &\qquad=\frac{1}{N^2\epsilon^2}\sum_i\E[({\bm\iota_i^E}'\tilde{\bm k}_i\tilde{\bm k}_i'\bm\iota_i^E)^2|{\cal F}_{N,i-1}].
        \end{align*}
        By \citet[equation 45 in Lemma A.8]{chen2023community},
        \begin{align*}
            \E[({\bm\iota_i^E}'\tilde{\bm k}_i\tilde{\bm k}_i'\bm\iota_i^E)^2|{\cal F}_{N,i-1}]=\V({\bm\iota_i^E}'\tilde{\bm k}_i\tilde{\bm k}_i'\bm\iota_i^E|{\cal F}_{N,i-1})+\tr(\tilde{\bm k}_i\tilde{\bm k}_i'\bm Q\Sigma_{\tilde E}\bm Q')^2\\
            \le c\tr[(\tilde{\bm k}_i\tilde{\bm k}_i')^2]+\tr(\tilde{\bm k}_i\tilde{\bm k}_i'\bm Q\Sigma_{\tilde E}\bm Q')^2\\
            \le \tr[(\tilde{\bm k}_i\tilde{\bm k}_i')^2]\left\{c+3p\tr(\bm Q\Sigma_{\tilde E}\bm Q')^2\right\},
        \end{align*}
        so it suffices to verify $\frac{1}{N^2}\sum_i\tr[(\tilde{\bm k}_i\tilde{\bm k}_i')^2]=o(1)$.
        Note that
        \[\frac{1}{N^2}\sum_i(\tilde{\bm k}_i'\tilde{\bm k}_i)^2=\frac{1}{N^2}\sum_i\left(\sum_{j\le3p}\tilde k_{ij}^2\right)^2\le\frac{3p}{N^2}\sum_{i,j}\tilde k_{ij}^4=o_p(1)\]
        which can be shown similarly as Lemma \ref{lem:cov-cons}.
        For this, we show that
        \[\frac{1}{N^2}\sum_{i\le N,j\le p}(\bm g_i'\bm G\bm X_j)^4=o_p(1)\]
        for example. As $\frac{1}{N^2}\sum_j\|\bm X_j\|^4=\frac{1}{N^2}\sum_j(\sum_{i\le N}X_{ij}^2)^2=O_p(1)$, it remains to show that $\sum_i\|\bm g_i'\bm G\|^4=o_p(1)$.
        Observe that
        \[\sum_i\|\bm g_i'\bm G\|^4\le N\|\bm G^2\|_{2,\infty}^4\le N\|\bm G\|_{2,\infty}^4\le\frac{N\|\bm A\|_{2,\infty}^4}{\delta_{\min}^4}=o_p(1)\]
        as $\|\bm G\|_{2,\infty}^2\le\frac{\|\bm A\|_{2,\infty}^2}{\delta_{\min}^2}$.
    \end{proof}
    \item[(ii).] 
    \[\sum_i\E[T_{Ni}^2|{\cal F}_{N,i-1}]=\bm q'(\Sigma_{\tilde E}\otimes\Sigma_{\tilde K})\bm q+o_p(1).\]
    \begin{proof}
        By Lemma \ref{lem:cov-cons}, we have
        \[\frac{1}{N}\sum_i\E({\bm\iota_i^E}'\tilde{\bm k}_i\tilde{\bm k}_i'\bm\iota_i^E|{\cal F}_{N,i-1})=\tr\left[\frac{1}{N}\sum_i\tilde{\bm k}_i\tilde{\bm k}_i'\E({\bm\iota_1^E}{\bm\iota_1^E}')\right]=\tr(\Sigma_{\tilde K}\bm Q\Sigma_{\tilde E}\bm Q')+o_p(1).\]
    \end{proof}
\end{itemize}
This shows that $\frac{1}{\sqrt{N}}\operatorname{vec}(\tilde{\bm K}_N'{\tilde{\bm E}_N})$ has the limiting distribution ${\cal N}(0,\Sigma_{\tilde E}\otimes\Sigma_{\tilde K})$.
Lemma \ref{lem:cov-cons} implies that
\begin{align*}
    \frac{1}{N}\bm I_m\otimes\left[\left\{\tilde{\bm Z}_N'\tilde{\bm K}_N \left( \tilde{\bm K}_N'\tilde{\bm K}_N \right)^{-1}\tilde{\bm K}_N'\tilde{\bm Z}_N\right\}^{-1}\tilde{\bm Z}_N'\tilde{\bm K}_N \left( \tilde{\bm K}_N'\tilde{\bm K}_N \right)^{-1}\right] \\
    =\bm I_m\otimes\left\{\left(\Sigma_{\tilde Z\tilde K}\Sigma_{\tilde K}^{-1}\Sigma_{\tilde K\tilde Z}\right)^{-1}\Sigma_{\tilde Z\tilde K}\Sigma_{\tilde K}^{-1}\right\}+o_p(1),
\end{align*}
so the conclusion of Theorem \ref{thm:clt} follows by applying Slutsky's theorem to \eqref{eq:clt1}. For the asymptotic covariance form, we can use the relationship
\[(\bm I\otimes\bm B)(\bm V_1\otimes\bm V_2)(\bm I\otimes\bm B)'=\bm V_1\otimes(\bm B\bm V_2\bm B')\]
for $\bm B,\bm V_1,\bm V_2$ with appropriate dimensions.
\subsection{Proof of additional lemmas}

\subsubsection{Proof of Lemma \ref{lem:samplingerror}}

This part of the proof follows similar logic as in \cite{johnsson2021estimation}; however, since our terms are matrix-valued instead of vector-valued, our bounds involve the matrix spectral and Frobenius norms, as well as the matrix trace. We first prove the four results claimed in the lemma.

\begin{proof}
\begin{enumerate}
    \item \label{lem1:prf1} We start with the following expansion for part 1 of the lemma.
    \begin{align*}
        & \frac{1}{N} (\bm{K}_N' \bm{P}_{\hat{\bm{\Phi}}_N} \bm{Z}_N - \bm{K}_N' \bm{P}_{\bm{\Phi}_N} \bm{Z}_N) \\
        & = \bm{K}_N' \left\{ \frac{\hat{\bm{\Phi}}_N}{N} \left( \frac{\hat{\bm{\Phi}}_N'\hat{\bm{\Phi}}_N}{N} \right)^{-1} \frac{\hat{\bm{\Phi}}_N'}{N} - \frac{\bm{\Phi}_N}{N} \left( \frac{\bm{\Phi}_N'\bm{\Phi}_N}{N}\right)^{-1} \frac{\bm{\Phi}_N'}{N} \right\}\bm{Z}_N \\
        &= \bm{K}_N' \left\{ \frac{\hat{\bm{\Phi}}_N - \bm{\Phi}_N}{N} \left( \frac{\hat{\bm{\Phi}}_N'\hat{\bm{\Phi}}_N}{N} \right)^{-1} \frac{\hat{\bm{\Phi}}_N'}{N} + \frac{\bm{\Phi}_N}{N} \left( \left( \frac{\hat{\bm{\Phi}}_N'\hat{\bm{\Phi}}_N}{N} \right)^{-1} - \left( \frac{\bm{\Phi}_N'\bm{\Phi}_N}{N} \right)^{-1} \right) \frac{\bm{\Phi}_N'}{N} \right. \\
        &\qquad\qquad \left. + \frac{\bm{\Phi}_N}{N} \left( \frac{\hat{\bm{\Phi}}_N'\hat{\bm{\Phi}}_N}{N} \right)^{-1} \frac{\hat{\bm{\Phi}}_N' - \bm{\Phi}_N'}{N} \right\}\bm{Z}_N \\
        & = \bm{K}_N'\left\{ \frac{\hat{\bm{\Phi}}_N - \bm{\Phi}_N}{N} \left( \frac{\hat{\bm{\Phi}}_N'\hat{\bm{\Phi}}_N}{N} \right)^{-1} \frac{\hat{\bm{\Phi}}_N - \bm{\Phi}_N}{N} + \frac{\hat{\bm{\Phi}}_N' - \bm{\Phi}_N'}{N} \left( \frac{\hat{\bm{\Phi}}_N'\hat{\bm{\Phi}}_N}{N} \right)^{-1} \frac{\bm{\Phi}_N'}{N} \right. \\
        &\qquad\qquad + \frac{\bm{\Phi}_N}{N} \left( \left( \frac{\hat{\bm{\Phi}}_N'\hat{\bm{\Phi}}_N}{N} \right)^{-1} - \left( \frac{\bm{\Phi}_N'\bm{\Phi}_N}{N} \right)^{-1} \right) \frac{\bm{\Phi}_N'}{N} \\
        &\qquad\qquad \left. + \frac{\bm{\Phi}_N}{N} \left( \frac{\hat{\bm{\Phi}}_N'\hat{\bm{\Phi}}_N}{N} \right)^{-1} \frac{\hat{\bm{\Phi}}_N' - \bm{\Phi}_N'}{N} \right\}\bm{Z}_N \\
        &= I_1 + I_2 + I_3 + I_4, \text{ say.}
    \end{align*}
    We show that all these terms $\|I_1\|, \|I_2\|, \|I_3\|, \|I_4\| = o_p(1)$ which collectively imply
    \[
    \left\| \frac{1}{N} (\bm{K}_N' \bm{P}_{\hat{\bm{\Phi}}_N} \bm{Z}_N - \bm{K}_N' \bm{P}_{\bm{\Phi}_N} \bm{Z}_N) \right\| = o_p(1).
    \]
    As we assume that $m,p$ are fixed, the spectral-norm being $o_p(1)$, is sufficient for all entries to be $o_p(1)$. First note that,
    \begin{align*}
        \| I_1 \| &\leq \left\| \frac{\bm{K}_N}{\sqrt{N}} \right\|_F \left\| \frac{\hat{\bm{\Phi}}_N - \bm{\Phi}_N}{\sqrt{N}} \right\|_F^2 \left\| \left( \frac{\hat{\bm{\Phi}}_N' \hat{\bm{\Phi}}_N}{N} \right)^{-1} \right\| \left\| \frac{\bm{Z}_N}{\sqrt{N}} \right\|_F \\
        &\qquad= O_p(1) \underbrace{O_{hp}\left(\sum_{k\le L_N}\zeta_1(k)^2\frac{\log^{2c}N}{N}\right)}_{\rm Lemma \ \ref{lmm:sieve:est}} \underbrace{O_p(1)}_{\rm Lemma\ \ref{lmm:sieve:rank}} O(1) = o_p(1)
    \end{align*}
    under Assumption \ref{assmp:lipsch} using the results of Lemmas \ref{lmm:sieve:est} and \ref{lmm:sieve:rank}. Next,
    \begin{align*}
    \| I_2 \| &\leq \left\| \frac{\bm{K}_N}{\sqrt{N}} \right\|_F \left\| \frac{\hat{\bm{\Phi}}_N - \bm{\Phi}_N}{\sqrt{N}} \right\|_F \left\| \left( \frac{\hat{\bm{\Phi}}_N' \hat{\bm{\Phi}}_N}{N} \right)^{-1} \right\| \left\| \frac{\bm{\Phi}_N}{\sqrt{N}} \right\|_F \left\| \frac{\bm{Z}_N}{\sqrt{N}} \right\|_F\\
    &\qquad \le O_p(1)O_{hp}\left(\sqrt{\zeta_0(L_N)^2\sum_{k\le L_N}\zeta_1(k)^2\frac{\log^{2c}N}{N}}\right)=o_p(1)
    \end{align*}
    by $\|\bm\Phi_N\|_F^2=\sum_{i\le N}\|\bm\phi^{L_N}(\bm u_i)\|^2\le N\zeta_0(L_N)^2$. $\|I_4\|=o_p(1)$ follows in a similar way.
    Finally noting that,
    \begin{align*}
    I_3 &= \frac{\bm{K}_N' \bm{\Phi}_N}{N} \left( \frac{\hat{\bm{\Phi}}_N' \hat{\bm{\Phi}}_N}{N} \right)^{-1}
    \left\{ \left( \frac{{\bm{\Phi}}_N'{\bm{\Phi}}_N}{N} \right) - \left(\frac{\hat{\bm{\Phi}}_N'\hat{\bm{\Phi}}_N}{N} \right) \right\}
    \left( \frac{\bm{\Phi}_N' \bm{\Phi}_N}{N} \right)^{-1} \frac{\bm{\Phi}_N' \bm{Z}_N}{N} \\
    &= \frac{\bm{K}_N' \bm{\Phi}_N}{N} \left( \frac{\hat{\bm{\Phi}}_N' \hat{\bm{\Phi}}_N}{N} \right)^{-1}
    \left( \frac{{\bm{\Phi}}_N' ({\bm{\Phi}}_N - \hat{\bm{\Phi}}_N)}{N} \right)
    \left( \frac{\bm{\Phi}_N' \bm{\Phi}_N}{N} \right)^{-1} \frac{\bm{\Phi}_N' \bm{Z}_N}{N} \\
    &\quad + \frac{\bm{K}_N' \bm{\Phi}_N}{N} \left( \frac{\hat{\bm{\Phi}}_N' \hat{\bm{\Phi}}_N}{N} \right)^{-1}
    \left( \frac{({\bm{\Phi}}_N - \hat{\bm{\Phi}}_N)' \hat{\bm{\Phi}}_N}{N} \right)
    \left( \frac{\bm{\Phi}_N' \bm{\Phi}_N}{N} \right)^{-1} \frac{\bm{\Phi}_N' \bm{Z}_N}{N}
    \end{align*}
    and we get $\|I_3\|=o_p(1)$. 
    
    \item This can be shown in the same way as \ref{lem1:prf1}.

    \item Using the similar expansion as \ref{lem1:prf1}, we have
    \begin{align*}
        &\frac{1}{\sqrt{N}} (\bm K_N' \bm P_{\hat{\bm\Phi}_N}\tilde{\bm E}_N - \bm K_N' \bm P_{\bm \Phi_N}\tilde{\bm E}_N) \\
        &= \bm{K}_N' \left\{ \frac{\hat{\bm{\Phi}}_N - \bm{\Phi}_N}{N} \left( \frac{\hat{\bm{\Phi}}_N'\hat{\bm{\Phi}}_N}{N} \right)^{-1} \frac{\hat{\bm{\Phi}}_N - \bm{\Phi}_N}{N} + \frac{\hat{\bm{\Phi}}_N - \bm{\Phi}_N}{N} \left( \frac{\hat{\bm{\Phi}}_N'\hat{\bm{\Phi}}_N}{N} \right)^{-1} \frac{\bm{\Phi}_N}{N} \right. \\
        &\qquad\qquad + \frac{\bm{\Phi}_N}{N} \left( \left( \frac{\hat{\bm{\Phi}}_N'\hat{\bm{\Phi}}_N}{N} \right)^{-1} - \left( \frac{\bm{\Phi}_N'\bm{\Phi}_N}{N} \right)^{-1} \right) \frac{\bm{\Phi}_N'}{N} \\
        &\qquad\qquad \left. + \frac{\bm{\Phi}_N}{N} \left( \frac{\hat{\bm{\Phi}}_N'\hat{\bm{\Phi}}_N}{N} \right)^{-1} \frac{\hat{\bm{\Phi}}_N - \bm{\Phi}_N}{N} \right\}\tilde{\bm E}_N \\
        &= J_1+J_2+J_3+J_4, \text{ say.}
    \end{align*}
    First,
    \begin{align*}
        \| J_1 \| &\leq \left\| \frac{\bm{K}_N}{\sqrt{N}} \right\|_F \left\| \frac{\hat{\bm{\Phi}}_N - \bm{\Phi}_N}{\sqrt{N}} \right\|_F \left\| \left( \frac{\hat{\bm{\Phi}}_N' \hat{\bm{\Phi}}_N}{N} \right)^{-1} \right\| \left\| \frac{(\hat{\bm{\Phi}}_N - \bm{\Phi}_N)'\tilde{\bm E}_N}{\sqrt{N}} \right\|_F \\
        &\qquad = O_p(1) O_{hp}\left(\sqrt{\sum_{k\le L_N}\frac{\zeta_1(k)^2\log^{2c}N}{N}}\right) O_p(1) \left\| \frac{(\hat{\bm{\Phi}}_N - \bm{\Phi}_N)'\tilde{\bm E}_N}{\sqrt{N}} \right\|_F.
    \end{align*}
    We have
    \begin{align*}
        \E&\left[\left\| \frac{(\hat{\bm{\Phi}}_N - \bm{\Phi}_N)'\tilde{\bm E}_N}{\sqrt{N}} \right\|_F^2 \Big| \bm X,\bm A,\bm U\right]\\
        &=\frac{1}{N}\tr[(\hat{\bm{\Phi}}_N - \bm{\Phi}_N)(\hat{\bm{\Phi}}_N - \bm{\Phi}_N)'\E(\tilde{\bm E}_N{\tilde{\bm E}_N}'|\bm X, \bm A,\bm U)]
        = \frac{\|\hat{\bm{\Phi}}_N - \bm{\Phi}_N\|_F^2}{N}\tr(\Sigma_{\tilde E})
    \end{align*}
    where 
    \[\Sigma_{\tilde E}:=\E(\tilde{\bm e}_1\tilde{\bm e}_1'|\bm U)=\operatorname{Var}(\tilde{\bm e}_1|\bm U)=O_p(m)\]
    which in turn verifies $\|J_1\|=o_p(1)$. 
    We omit the rest of the proof as it works in a similar way.
    
    \item Note that
    \begin{align*}
        \frac{1}{\sqrt{N}} & \bm K_N' \bm M_{\hat{\bm \Phi}_N} (\mathbf{h}^E(\bm U_N) - \hat{\bm \Phi}_N \bm\alpha_{L_N}^{E}) 
        = \frac{1}{\sqrt{N}} \bm K_N' \bm M_{\hat{\bm \Phi}_N} \mathbf{h}^E(\bm U_N) \\
        &= \frac{1}{\sqrt{N}} \bm K_N' \left( \bm M_{\hat{\bm \Phi}_N}-\bm M_{{\bm \Phi}_N} \right) \mathbf{h}^E(\bm U_N) + \frac{1}{\sqrt{N}} \bm K_N' \bm M_{{\bm \Phi}_N} \left( \mathbf{h}^E(\bm U_N) - \bm\Phi_N \bm\alpha_{L_N}^{E} \right) \\
        &= I_1'+I_2', \text{ say}.
    \end{align*}
    $\|I_1'\|=o_p(1)$ follows in the similar way as \ref{lem1:prf1}. We finish the proof by observing that
    \begin{align*}
        \| I_2' \| &\leq \frac{\| \bm K_N \|_F}{\sqrt{N}} \| \mathbf{h}^E(\bm U_N) - \bm\Phi_N \bm\alpha_{L_N}^{E} \|_F \\
        &\qquad= \frac{\| \bm K_N \|_F}{\sqrt{N}} \sqrt{\sum_{j\le m}\| \bm h_j^E(\bm U_N) - \bm\Phi_N\bm\alpha_{L_N}^{j} \|^2} \\
        &\qquad\le O_p(1) \sup_{i\le N,j\le m}\sqrt{mN\{\bm{h}_j^E(\bm u_i) - \bm\phi^{L_N}(\bm u_i)'\bm\alpha_{L_N}^{j}\}^2}= O_p({L_N}^{-\kappa}\sqrt{N}) = o_p(1)
    \end{align*}
    by the Assumption \ref{assmp:sieve}-(iii).
\end{enumerate}
\end{proof}

\subsubsection{Proof of Lemma \ref{lem:sieveapprox}}

\begin{proof}
    We only need to show that
    \begin{enumerate}
        \item $\frac{1}{N}\sum_{i\le N}[\hat{\bm h}^K(\bm u_i)-\bm h^K(\bm u_i)][\hat{\bm h}^Z(\bm u_i)-\bm h^Z(\bm u_i)]'=o_p(1)$
        \item $\frac{1}{N}\sum_{i\le N}[\hat{\bm h}^K(\bm u_i)-\bm h^K(\bm u_i)][\hat{\bm h}^K(\bm u_i)-\bm h^K(\bm u_i)]'=o_p(1)$
        \item $\frac{1}{\sqrt{N}}\sum_{i\le N}[\hat{\bm h}^K(\bm u_i)-\bm h^K(\bm u_i)]{\tilde{\bm e}_i}'=o_p(1)$.
    \end{enumerate}
    As 3 follows from 2, it suffices to address 1 and 2, and these follow by Lemma \ref{lmm:sieve:apprx}.
\end{proof}

\subsection{Additional Technical Lemmas}
First, we realize that
\[
\bm Y_v = (\bm D' \otimes \bm G)\bm Y_v + (\bm B_1' \otimes \bm I)\, \bm X_v + (\bm B_2'\otimes\bm G)\, \bm X_v + \bm E_v
\]
which has a solution of the form
\begin{align*}
    \begin{bmatrix}
        \bm Y_1 \\ \vdots \\ \bm Y_m
    \end{bmatrix} = \left\{\bm I_{Nm} - (\bm D' \otimes \bm G)\right\}^{-1} \left[\left\{(\bm B_1' \otimes \bm I_N) + (\bm B_2' \otimes \bm G)\right\} \bm X_v + \bm E_v\right]
    \\ = \sum_{l=0}^\infty(\bm D' \otimes \bm G)^l \left[\left\{(\bm B_1' \otimes \bm I_N) + (\bm B_2' \otimes \bm G)\right\} \bm X_v + \bm E_v\right]
    \\ = \sum_{l=0}^\infty\{(\bm D^l)' \otimes \bm G^l\} \left[\left\{(\bm B_1' \otimes \bm I_N) + (\bm B_2' \otimes \bm G)\right\} \bm X_v + \bm E_v\right]
\end{align*}
provided that $\rho(\bm D'\otimes \bm G)<1$, i.e., $\rho(\bm D)<1,\rho(\bm G)<1$.
\begin{Lemma}\label{lem:cov-cons}
    Assume that
    \[\max_{i_1,\dots,i_4\le N}\E\left(\frac{\|\bm a_{i_1}\|^2\|\bm a_{i_2}\|^2\|\bm a_{i_3}\|\|\bm a_{i_4}\|\|\bm A\|_{2,\infty}^2}{\delta_{i_1}^2\delta_{i_2}^2\delta_{\min}^4}\right)=o(1/N^2).\]
    Write $\Sigma_{\tilde K\tilde Z}:=\lim_{N\to\infty}\frac{1}{N}\E(\tilde{\bm K}'\tilde{\bm Z})$, $\Sigma_{\tilde Z\tilde K}:=\lim_{N\to\infty}\frac{1}{N}\E(\tilde{\bm Z}'\tilde{\bm K})$, and $\Sigma_{\tilde K}:=\lim_{N\to\infty}\frac{1}{N}\E(\tilde{\bm K}'\tilde{\bm K})$.
    Then, $\frac{1}{N}\tilde{\bm K}_N'\tilde{\bm Z}_N=\Sigma_{\tilde K\tilde Z}+o_p(1)$, $\frac{1}{N}\tilde{\bm Z}_N'\tilde{\bm K}_N=\Sigma_{\tilde Z\tilde K}+o_p(1)$, and $\frac{1}{N}\tilde{\bm K}_N'\tilde{\bm K}_N=\Sigma_{\tilde K}+o_p(1)$.
\end{Lemma}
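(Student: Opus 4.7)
I will establish each of the three convergences by combining a mean computation with a variance bound and invoking Chebyshev's inequality. The principal challenge is that the rows of $\tilde{\bm K}_N$ and $\tilde{\bm Z}_N$ are not i.i.d.\ under the latent-variable network model: entries are weighted by $\bm G$, $\bm G^2$, and (through $\bm Y$) an infinite series of powers of $\bm G$, so the dependence propagates through the network. The role of the assumption in the lemma is precisely to control the variance contributed by this network-induced dependence.

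\textbf{Step 1: Block decomposition.} Split $\bm K = [\bm X, \bm G\bm X, \bm G^2\bm X]$ and $\bm Z = [\bm G\bm Y, \bm X, \bm G\bm X]$ into their component blocks, yielding at most nine blocks of the form $(\bm G^a\bm X - \E[\bm G^a\bm X\mid \bm U])'(\bm G^b\bm X - \E[\bm G^b\bm X\mid \bm U])/N$ with $a,b\in\{0,1,2\}$ for $\tilde{\bm K}_N'\tilde{\bm K}_N/N$, together with mixed blocks containing $\bm G\bm Y$ for $\tilde{\bm K}_N'\tilde{\bm Z}_N/N$. For the $\bm G\bm Y$ blocks I substitute the reduced form
\[
\bm Y \;=\; \sum_{\ell=0}^{\infty}\bm G^\ell\bigl[\bm X\bm B_1 + \bm G\bm X\bm B_2 + \bm E\bigr](\bm D')^\ell,
\]
which converges under $\rho(\bm D)<1$ and $\|\bm G\|_\infty\le 1$, and truncate at a finite order after which the tail is negligible in norm. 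This reduces every block to expressions built from products of $\bm G^\ell\bm X$ and $\bm G^\ell\bm E$, $\ell\ge 0$.

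\textbf{Step 2: Convergence of expectations.} Under Assumption \ref{assmp:jnm}, the triples $(\bm X_i,\bm U_i,\bm E_i)$ are i.i.d., and $A_{ij}=f(\bm U_i,\bm U_j,\xi_{ij})$ with i.i.d.\ $\xi_{ij}$. The resulting exchangeability implies that $\E[\tilde k_{i,a}\tilde k_{i,b}]$ and $\E[\tilde k_{i,a}\tilde z_{i,b}]$ do not depend on $i$ (up to asymptotically negligible boundary contributions), which identifies the limits $\Sigma_{\tilde K},\Sigma_{\tilde K\tilde Z},\Sigma_{\tilde Z\tilde K}$ and shows $\E[\tilde{\bm K}_N'\tilde{\bm K}_N]/N\to\Sigma_{\tilde K}$ and similarly for the cross terms. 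Boundedness of $\bm X_1$ and $\bm E_1$ together with $\|\bm G\|_\infty\le 1$ prevents any moments from blowing up.

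\textbf{Step 3: Variance bound (the main obstacle).} For a generic entry, write
\[
\operatorname{Var}\!\Bigl(\tfrac{1}{N}\!\sum_i\tilde k_{i,a}\tilde z_{i,b}\Bigr)
\;=\;\tfrac{1}{N^2}\!\sum_{i_1,i_2}\operatorname{Cov}\!\bigl(\tilde k_{i_1,a}\tilde z_{i_1,b},\,\tilde k_{i_2,a}\tilde z_{i_2,b}\bigr).
\]
Diagonal terms ($i_1=i_2$) contribute $O(1/N)$ after bounding $\|\bm g_i\|^2\le \|\bm a_i\|^2/\delta_i^2$ by $\|\bm A\|_{2,\infty}^2/\delta_{\min}^2$. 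The off-diagonal terms are the crux: expanding products of $\bm g_{i_k}$, $\bm g_{i_k}'\bm G$, and (after reducing $\bm G\bm Y$) $\bm g_{i_k}'\bm G^\ell$ and taking absolute values produces fourth-moment expressions of the form
\[
\E\!\left[\tfrac{\|\bm a_{i_1}\|^2\|\bm a_{i_2}\|^2\|\bm a_{i_3}\|\|\bm a_{i_4}\|\|\bm A\|_{2,\infty}^2}{\delta_{i_1}^2\delta_{i_2}^2\delta_{\min}^4}\right],
\]
where the $\|\bm A\|_{2,\infty}^2/\delta_{\min}^2$ factor absorbs an extra power of $\bm G$ (needed for the $\bm G^2\bm X$ block and the reduced-form $\bm G\bm Y$). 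The assumption states precisely that this quantity is $o(1/N^2)$ uniformly in indices, so summing over the $O(N^2)$ index pairs yields $o(1)$. Chebyshev's inequality then gives each entry of $\frac{1}{N}\tilde{\bm K}_N'\tilde{\bm K}_N-\Sigma_{\tilde K}$ is $o_p(1)$; the arguments for $\tilde{\bm K}_N'\tilde{\bm Z}_N/N$ and $\tilde{\bm Z}_N'\tilde{\bm K}_N/N$ are identical after handling the reduced form for $\bm G\bm Y$.

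\textbf{Main obstacle.} The delicate part is the fourth-moment expansion for the blocks involving $\bm G^2\bm X$ and, via the reduced form, $\bm G\bm Y$: one must carefully track which index pairs produce nonzero covariances under the conditional independence structure, and then bound the surviving terms by exactly the quantity postulated in the assumption. The combinatorics of matching neighborhoods in $\bm A$ and handling the infinite Neumann series are the parts that require the most care; everything else reduces to standard second-moment calculations.
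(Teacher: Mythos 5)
Your proposal follows essentially the same route as the paper's proof: block decomposition of $\tilde{\bm K}_N'\tilde{\bm Z}_N$, the Neumann-series reduced form for $\bm G\bm Y$ (note the series should read $\sum_\ell \bm G^\ell[\bm X\bm B_1+\bm G\bm X\bm B_2+\bm E]\bm D^\ell$, not $(\bm D')^\ell$), convergence of the mean via the i.i.d.\ structure of $(\bm X_i,\bm U_i,\bm E_i)$ together with $\rho(\bm G),\rho(\bm D)<1$, and a variance bound in which the off-diagonal covariances are dominated by exactly the fourth-moment quantity postulated in the assumption, followed by Chebyshev. The only substantive presentational difference is that the paper keeps the full infinite series and controls it via explicit bounds on the coefficient arrays ${\cal B}_{ik},{\cal C}_{ik}$ plus dominated convergence rather than truncating, but this does not change the argument.
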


\begin{proof}
    We only show
    \begin{align*}
        \frac{1}{N}\tilde{\bm K}'\tilde{\bm Z}=\frac{1}{N}\begin{bmatrix}
        \bm X'\tilde{(\bm G\bm Y)} & 
        \bm X'\bm X & \bm X'\tilde{(\bm G\bm X)}\\
        \tilde{(\bm G\bm X)}'\tilde{(\bm G\bm Y)} & \tilde{(\bm G\bm X)}'\bm X & \tilde{(\bm G\bm X)}'\tilde{(\bm G\bm X)} \\
        \tilde{(\bm G^2\bm X)}'\tilde{(\bm G\bm Y)} & \tilde{(\bm G^2\bm X)}'\bm X & \tilde{(\bm G^2\bm X)}'\tilde{(\bm G\bm X)}
    \end{bmatrix}\\
    =\begin{bmatrix}
        \Sigma_{X,\tilde{(GY)}} & 
        \Sigma_{X} & \Sigma_{X,\tilde{(GX)}}\\
        \Sigma_{\tilde{(GX)},\tilde{(GY)}} & \Sigma_{\tilde{(GX)},X} & \Sigma_{\tilde{(GX)}} \\
        \Sigma_{\tilde{(G^2X)},\tilde{(GY)}} & \Sigma_{\tilde{(G^2X)},X} & \Sigma_{\tilde{(G^2X)},\tilde{(GX)}}
        \end{bmatrix}+o_p(1)
    \end{align*}
    where $\tilde{(\bm G\bm Y)}:=\bm G\bm Y-\E(\bm G\bm Y|\bm U)$ (and the rest terms are defined analogously) as the convergence of $\frac{1}{N}\tilde{\bm Z}'\tilde{\bm K}$, $\frac{1}{N}\tilde{\bm K}'\tilde{\bm K}$ can be proved in the same way.
    The block entries are defined as $\Sigma_{X,\tilde{(GY)}}:=\frac{1}{N}\E[\bm X'\tilde{(\bm G\bm Y)}]$, $\Sigma_{X}:=\E(\bm x_1\bm x_1')$, and the rest are defined analogously.
    We only show that $\frac{1}{N}\tilde{(\bm G^2\bm X)}'\tilde{(\bm G\bm Y)}-\Sigma_{\tilde{(G^2X)},\tilde{(GY)}}=o_p(1)$ as the rest can be shown in the same way. i.e.,
    \[\left|\frac{1}{N}\tilde{(\bm G^2\bm X_j)}'\tilde{(\bm G\bm Y_i)}-\sigma_{ji,\tilde{(G^2X)},\tilde{(GY)}}\right|=o_p(1)\]
    for all $i\le m$ and $j\le p$.
    We only need to show that $\sigma_{ji,\tilde{(G^2X)},\tilde{(GY)}}=\lim_{N\to\infty}\frac{1}{N}\E[\tilde{(\bm G^2\bm X_j)}'\tilde{(\bm G\bm Y_i)}]$ exists and $\V\left[\frac{1}{N}\tilde{(\bm G^2\bm X_j)}'\tilde{(\bm G\bm Y_i)}\right]=o(1)$.
    Observe that $\tilde{(\bm G^2\bm X)}=\bm G^2\bm X$ and
    \[\frac{1}{N}\E\left[\tilde{(\bm G^2\bm X_j)}'\tilde{(\bm G\bm Y_i)}\right]=\frac{1}{N}\sum_{k\le N}\E[(\bm g_k'\bm G\bm X_j)\tilde{(\bm g_k'\bm Y_i)}]\]
    as $(\bm x_i, \bm u_i, \bm e_i)$ are i.i.d. and $X_{ij}$ is uncorrelated with $\bm u_i$.
    Second, it suffices to show that
    \begin{align*}
        \frac{1}{N^2}\E&\left[(\bm G^2\bm X_j)'\tilde{(\bm G\bm Y_i)}\tilde{(\bm G\bm Y_i)}'\bm G^2\bm X_j\right]\\
        &=\frac{1}{N^2}\sum_{k,h\le N}\E\left[(\bm g_k'\bm G\bm X_j)(\bm g_h'\bm G\bm X_j)\tilde{(\bm g_k'\bm Y_i)}\tilde{(\bm g_h'\bm Y_i)}\}\right]=o(1).
    \end{align*}
    Using the block selector $\bm L_i:=\bm e_i'\otimes\bm I_N$, by $\bm L_i\bm Y_v=\bm Y_i$ we get
    \begin{align*}
        \bm g_k'\bm Y_i&=\sum_{l=0}^\infty\bm g_k'\{(\bm D^l\bm e_i)' \otimes \bm G^l\} \left[\left\{(\bm B_1' \otimes \bm I_N) + (\bm B_2' \otimes \bm G)\right\} \bm X_v + \bm E_v\right]\\
        &=\sum_{l=0}^\infty\{(\bm B_1\bm D^l\bm e_i)' \otimes (\bm g_k'\bm G^l)+(\bm B_2\bm D^l\bm e_i)' \otimes (\bm g_k'\bm G^{l+1})\}\bm X_v\\
        &\qquad+\sum_{l=0}^\infty\{(\bm D^l\bm e_i)' \otimes \bm (\bm g_k'\bm G^l)\}\bm E_v\\
        &=\sum_{l=0}^\infty \sum_{o=1}^p \sum_{s=1}^N 
        \left[ 
        \left( \bm e_o'\bm B_1 \bm D^l \bm e_i \right) \left( \bm e_k' \bm G^{l+1} \bm e_s\right) 
        + 
        \left( \bm e_o'\bm B_2 \bm D^l \bm e_i \right) \left( \bm e_k' \bm G^{l+2} \bm e_s\right)
        \right] X_{so}\\
        &\qquad+\sum_{l=0}^\infty \sum_{o=1}^p \sum_{s=1}^N 
        \left( \bm e_o'\bm D^l \bm e_i \right) \left( \bm e_k' \bm G^{l+1} \bm e_s\right) E_{so}\\
        &=\sum_{o=1}^p \sum_{s=1}^N \underbrace{\sum_{l=0}^\infty
        \left(\bm B_1\bm D^l\bm e_i\bm e_k'\bm G^{l+1} + 
        \bm B_2\bm D^l\bm e_i\bm e_k'\bm G^{l+2}\right)_{os}}_{=:({\cal B}_{ik})_{os}} X_{so}\\
        &\qquad+\sum_{o=1}^p \sum_{s=1}^N\underbrace{\sum_{l=0}^\infty 
        (\bm D^l\bm e_i\bm e_k'\bm G^{l+1})_{os}}_{=:({\cal C}_{ik})_{os}} E_{so}=\sum_{o=1}^p \sum_{s=1}^N\{({\cal B}_{ik})_{os}X_{so}+({\cal C}_{ik})_{os}E_{so}\}
    \end{align*}
    where the penultimate line follows by $(\bm b'\otimes\bm a')\operatorname{vec}(\bm X)=\bm{a'Xb}=\sum_{s,o}a_sb_oX_{so}$. So,
    \begin{align*}
        \E(\bm g_k'\bm Y_i|\bm U)&=\sum_{o=1}^p \sum_{s=1}^N\E\{({\cal B}_{ik})_{os}X_{so}+({\cal C}_{ik})_{os}E_{so}|\bm U\}=\sum_{o=1}^p \sum_{s=1}^N\E\{({\cal C}_{ik})_{os}E_{so}|\bm U\}
    \end{align*}
    hence
    \begin{align*}
        \frac{1}{N}\sum_{k\le N}\E[&(\bm g_k'\bm G\bm X_j)\tilde{(\bm g_k'\bm Y_i)}]
        =\frac{1}{N}\sum_{k,o,s}\E\left[(\bm g_k'\bm G\bm X_j)\{({\cal B}_{ik})_{os}X_{so}+({\cal C}_{ik})_{os}E_{so}-\E(({\cal C}_{ik})_{os}E_{so}|\bm U)\}\right]\\
        &=\frac{1}{N}\sum_{k,o,s}\E\left[ ({\cal B}_{ik})_{os}X_{so}\bm X_j'\bm G'\bm g_k\right]
        =\frac{1}{N}\sum_{k,o,s}\E(X_{so}\bm X_j')\E[({\cal B}_{ik})_{os}\bm G'\bm g_k]\\
        &=\frac{1}{N}\sum_{k,o,s}\sigma_{oj,X}\bm e_s'\E[({\cal B}_{ik})_{os}\bm G'\bm g_k]=\frac{1}{N}\sum_{k,o,s}\sigma_{oj,X}\E[({\cal B}_{ik})_{os}\bm G_s'\bm g_k]
    \end{align*}
    as $\bm X$ and $\bm U$ are uncorrelated.
    We have $\sum_{s,k}({\cal B}_{ik})_{os}(\bm G'\bm G')_{sk}=\sum_k({\cal B}_{ik}\bm G'\bm G')_{ok}=(\bm 1\otimes\bm e_o)'\operatorname{vec}({\cal B}_{ik}\bm G'\bm G')$ and
    \begin{align*}
        {\cal B}_{ik}(\bm G')^2&=\sum_{l=0}^\infty(\bm B_1\bm D^{l}\bm e_i\bm e_k'\bm G^{l+1}+\bm B_2\bm D^{l}\bm e_i\bm e_k'\bm G^{l+2})(\bm G')^2\\
        &=\sum_{l=0}^\infty\bm B_1(\bm D^{l}\bm e_i\bm e_k'\bm G^{l})\bm G(\bm G')^2+\sum_{l=0}^\infty\bm B_2(\bm D^{l}\bm e_i\bm e_k'\bm G^{l})\bm G^2(\bm G')^2,\\
        \operatorname{vec}({\cal B}_{ik}\bm G'\bm G')&=\{((\bm G^2\bm G')\otimes\bm B_1)+((\bm G^2\bm G'\bm G')\otimes\bm B_2)\}\sum_{l}((\bm G')\otimes\bm D)^l\operatorname{vec}(\bm e_i\bm e_k')\\
        &=\{((\bm G^2\bm G')\otimes\bm B_1)+((\bm G^2\bm G'\bm G')\otimes\bm B_2)\}(\bm I_{Nm}-(\bm G')\otimes\bm D)^{-1}(\bm e_k\otimes\bm e_i).
    \end{align*}
    As $\rho(\bm G),\rho(\bm D)<1$, and by the dominated convergence theorem, $\sigma_{ji,\tilde{(G^2X)},\tilde{(GY)}}$ exists.

    Next, writing
    \begin{align*}
        \eta_{os,k}&:=({\cal C}_{ik})_{os}E_{so}-\E(({\cal C}_{ik})_{os}E_{so}|\bm U)\\
        &=\sum_{l=0}^\infty \{(\bm D^l\bm e_i\bm e_k'\bm G^{l+1})_{os} E_{so}-\E((\bm D^l\bm e_i\bm e_k'\bm G^{l+1})_{os} E_{so}|\bm U)\},
    \end{align*}
    we have
    \begin{align*}
        &\frac{1}{N^2}\sum_{k,h\le N}\E\left[(\bm g_k'\bm G\bm X_j)(\bm g_h'\bm G\bm X_j)\tilde{(\bm g_k'\bm Y_i)}\tilde{(\bm g_h'\bm Y_i)}\}\right]\\
        &=\frac{1}{N^2}\sum_{k,h,s_1,s_2,o_1,o_2}\E\left[ (\bm g_k'\bm G\bm X_j)(\bm g_h'\bm G\bm X_j)\{({\cal B}_{ik})_{o_1s_1}X_{s_1o_1}+\eta_{o_1s_1,k}\}\{({\cal B}_{ih})_{o_2s_2}X_{s_2o_2}+\eta_{o_2s_2,h}\}\right].
    \end{align*}
    Note that
    \begin{align*}
        ({\cal B}_{ik})_{os}&=\sum_{l_1=0}^\infty(\bm b_{1,o}'(\bm D^{l_1})_i\bm g_k'\bm G^{l_1}+\bm b_{2,o}'(\bm D^{l_1})_i\bm g_k'\bm G^{l_1+1})_{s}\\
        &\qquad\le(\|\bm b_{1,o}\|+\|\bm b_{2,o}\|)\sum_{l_1=0}^\infty\bm g_k'(\bm G^{l_1-1}+\bm G^{l_1})\bm G_{s}\le2C\|\bm g_k\|\|\bm G_{s}\|
    \end{align*}
    for some constant $C>0$ and
    \begin{align*}
        \|\bm g_k\|^2=\frac{\|\bm a_{k}\|^2}{\delta_k^2},\quad\|\bm G_s\|^2=\sum_{j\le N}\frac{a_{sj}^2}{\delta_j^2}\le\frac{\|\bm a_s\|^2}{\delta_{\min}^2}\text{ so }\|\bm g_k\|\|\bm G_s\|\le\frac{\|\bm a_k\|\|\bm a_s\|}{\delta_k\delta_{\min}}.
    \end{align*}
    Then,
    \begin{align*}
        \frac{1}{N^2}&\sum_{k,h,s_1,s_2,o_1,o_2}\E[({\cal B}_{ik})_{o_1s_1}({\cal B}_{ih})_{o_2s_2}X_{s_1o_1}X_{s_2o_2}(\bm g_k'\bm G\bm X_j\bm X_j'\bm G'\bm g_h)]\\
        &\lesssim\frac{1}{N^2}\sum_{k,h,s_1,s_2,o_1,o_2}\E(\|\bm g_k\|^2\|\bm g_h\|^2\|\bm G_{s_1}\|\|\bm G_{s_2}\|)\E(X_{s_1o_1}X_{s_2o_2}\|\bm g_k'\bm G\|\|\bm g_h'\bm G\|\|\bm X_j\|^2)\\
        &\le\max_{k,h,s_1,s_2}\E\left(\frac{\|\bm a_k\|^2\|\bm a_h\|^2\|\bm a_{s_1}\|\|\bm a_{s_2}\|\|\bm G\|_{2,\infty}^2}{\delta_k^2\delta_h^2\delta_{\min}^2}\right)\sum_{s_1,s_2,o_1,o_2}\E\left(X_{s_1o_1}X_{s_2o_2}\|\bm X_j\|^2\right)\\
        &\le\max_{k,h,s_1,s_2}\E\left(\frac{\|\bm a_k\|^2\|\bm a_h\|^2\|\bm a_{s_1}\|\|\bm a_{s_2}\|\|\bm A\|_{2,\infty}^2}{\delta_k^2\delta_h^2\delta_{\min}^4}\right)\sum_{s_1,s_2,o_1,o_2}\E\left(X_{s_1o_1}X_{s_2o_2}\|\bm X_j\|^2\right)
    \end{align*}
    which is $o(1)$ by the fact that
    \begin{align*}
        \sum_{1\le s_1,s_2,s_3\le N}\E\left(X_{s_1o_1}X_{s_2o_2}X_{s_3j}^2\right)\le N\{\E(X_{1o_1}X_{1o_2}X_{1j}^2)+N\E(X_{1o_1}X_{1o_2})\E(X_{1j}^2)\}\\
        =O(N^2).
    \end{align*}
    Next, noting that $\E[({\cal C}_{ik})_{os}E_{so}|\bm U]=({\cal C}_{ik})_{os}\E[E_{so}|\bm U]$ and
    \[\sum_{l=0}^\infty\sum_{k,s,o}(\bm D^l\bm e_i\bm e_k'\bm G^{l+1})_{os}\tilde E_{so}\le\sum_l\sum_{k,s,o}C\bm g_k'\bm G^{l-1}\bm G_{s}\tilde E_{so},\]
    we have
    \begin{align*}
        \frac{1}{N^2}&\sum_{k,h,s_1,s_2,o_1,o_2}\E(\eta_{o_1s_1,k}\eta_{o_2s_2,h})=\frac{1}{N^2}\sum_{k,h,s_1,s_2,o_1,o_2}\E[({\cal C}_{ik})_{o_1s_1}({\cal C}_{ih})_{o_2s_2}\tilde E_{s_1o_1}\tilde E_{s_2o_2}]\\
        &=\frac{1}{N^2}\sum_{k,h,s_1,s_2,o_1,o_2}\E(\tilde E_{s_1o_1}\tilde E_{s_2o_2})\E[({\cal C}_{ik})_{o_1s_1}({\cal C}_{ih})_{o_2s_1}]=o(1)
    \end{align*}
    by similar arguments.
    This completes the proof.
\end{proof}

\begin{Lemma}\label{lmm:sieve:est2}
    Let Assumption \ref{assmp:lipsch} hold.
    Under the conditions on RDPG model in \cite{rubin2022statistical} and sparsity parameter $\rho_N=\omega(\frac{\log^4N}{N})$, we have
    \[\frac{1}{N}\|\hat{\bm\Phi}_N-\bm\Phi_N\|_F^2=O_{hp}\left(\frac{\log^{2c}N}{N}\sum_{k\le L_N}\zeta_1(k)^2\right)\]
    and under the conditions on the sparse LSM model with sparsity parameter $\omega_N=\omega(N^{-1/2})$ in \cite{li2023statistical}, we have
    \[\frac{1}{N}\|\hat{\bm\Phi}_N-\bm\Phi_N\|_F^2=O_{hp}\left(\frac{1}{N\omega_N}\sum_{k\le L_N}\zeta_1(k)^2\right).\]
\end{Lemma}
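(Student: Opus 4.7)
\textbf{Proof proposal for Lemma \ref{lmm:sieve:est2}.} The plan is to reduce the concentration of the sieve design matrix to the concentration of the estimated latent positions, via the Lipschitz assumption on the basis functions, and then invoke the two available rates for $\hat{\bm U}$. First I would expand
\[
\|\hat{\bm\Phi}_N-\bm\Phi_N\|_F^2
= \sum_{i\le N}\|\bm\phi^{L_N}(\hat{\bm u}_i)-\bm\phi^{L_N}(\bm u_i)\|^2
= \sum_{i\le N}\sum_{k\le L_N}\bigl(\phi_k(\hat{\bm u}_i)-\phi_k(\bm u_i)\bigr)^{2}.
\]
By Assumption \ref{assmp:lipsch}, each summand is bounded by $\zeta_1(k)^{2}\|\hat{\bm u}_i-\bm u_i\|^{2}$, so interchanging sums gives
\[
\|\hat{\bm\Phi}_N-\bm\Phi_N\|_F^2
\;\le\;\Bigl(\sum_{k\le L_N}\zeta_1(k)^{2}\Bigr)\,\|\hat{\bm U}-\bm U\|_F^{2}.
\]
This is the structural step; everything else is model-specific control of $\|\hat{\bm U}-\bm U\|_F^{2}/N$.

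For the RDPG case I would invoke the $2\to\infty$ bound of \cite{rubin2022statistical}, which under $\rho_N=\omega(\log^{4}N/N)$ yields $\|\hat{\bm U}-\bm U\bm H\|_{2,\infty}=O_{hp}(\log^{c}N/N^{1/2})$ for some orthogonal $\bm H$. Since $\|A\|_F^{2}\le N\|A\|_{2,\infty}^{2}$, this promotes to
\[
\tfrac{1}{N}\|\hat{\bm U}-\bm U\bm H\|_F^{2}\;=\;O_{hp}(\log^{2c}N/N).
\]
The orthogonal ambiguity is harmless here because $\bm U$ appears in the model only through the unknown conditional-mean functions $\bm h^{W}$; absorbing $\bm H$ into the basis functions (or equivalently redefining $\bm U\leftarrow \bm U\bm H$) preserves Assumption \ref{assmp:lipsch} and the identifiability arguments used in the two-stage estimator. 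Combining this with the Lipschitz step delivers the first claim. For the sparse LSM case I would instead plug in the Frobenius-norm MLE rate $\frac{1}{N}\|\hat{\bm U}-\bm U\|_F^{2}=O_{p}(1/(N\omega_N))$ established in \cite{li2023statistical} under $\omega_N=\omega(N^{-1/2})$; no rotation argument is needed once the identifiability constraints in the parameter space $\Xi_d$ are imposed, and the same Lipschitz inequality produces the stated $O_{hp}\bigl(\tfrac{1}{N\omega_N}\sum_{k\le L_N}\zeta_1(k)^{2}\bigr)$ rate.

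The main obstacle I anticipate is the orthogonal-equivalence subtlety in the RDPG step: the Lipschitz bound is written for a fixed coordinate representation of $\bm u$, but the spectral embedding only identifies $\bm U$ up to $\bm H\in O(d)$. I would handle this by either (i) defining the population latent positions as the limiting rotated version to which $\hat{\bm U}$ converges, so that the Lipschitz constants $\zeta_1(k)$ apply unchanged, or (ii) absorbing $\bm H$ into the basis by replacing $\phi_k(\bm u)$ with $\phi_k(\bm H'\bm u)$, which is still Lipschitz with the same constants since $\bm H$ is an isometry. A secondary (routine) point is verifying that the rate $O_{hp}$ is preserved under the inequality $\|\cdot\|_F^{2}\le N\|\cdot\|_{2,\infty}^{2}$; this is immediate because high-probability bounds of polynomial decay in $N$ are stable under multiplication by $N$. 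With these two observations the result follows by direct substitution into the Lipschitz inequality.
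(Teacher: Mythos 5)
Your proposal is correct and follows essentially the same route as the paper's proof: the Lipschitz bound from Assumption \ref{assmp:lipsch} reduces the design-matrix error to $\sum_{k\le L_N}\zeta_1(k)^2$ times the latent-position error, after which the $2\to\infty$ RDPG rate (promoted to an averaged Frobenius bound via $\frac{1}{N}\sum_i\|\hat{\bm u}_i-\bm H'\bm u_i\|^2\le\|\hat{\bm U}-\bm U\bm H\|_{2,\infty}^2$) and the LSM Frobenius MLE rate are substituted in. The orthogonal ambiguity you flag is handled in the paper exactly as you propose, by writing the population design matrix entries as $\phi_k(\bm H'\bm u_i)$, i.e., absorbing the rotation into the definition of $\bm\Phi_N$.
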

\begin{proof}
    \begin{align*}
        \frac{1}{N}\|\hat{\bm\Phi}_N-\bm\Phi_N\|_F^2=\frac{1}{N}\sum_{i\le N,k\le L_N}|\phi_k(\hat {\bm u}_i)-\phi_k(\bm H'\bm u_i)|^2\le\frac{1}{N}\sum_{i\le N,k\le L_N}\zeta_1(k)^2\|\hat {\bm u}_i-\bm H'\bm u_i\|^2
    \end{align*}
    By the results on Adjacency Spectral Embedding in RDPG model in \cite{rubin2022statistical}, the last term is less than
    \[\|\hat {\bm U}-\bm U\bm H\|_{2,\infty}^2\sum_{k\le L_N}\zeta_1(k)^2=O_{hp}\left(\frac{\log^{2c}N}{N}\sum_{k\le L_N}\zeta_1(k)^2\right)\]
    if $\bm A$ follows the RDPG. If we instead assume that $\bm A$ is generated from LSM with sparsity factor $\omega_n$, the last term is bounded by
    \[\frac{1}{N}\|\hat {\bm U}-\bm U\|_F^2\sum_{k\le L_N}\zeta_1(k)^2=O_p\left(\frac{1}{N\omega_n}\sum_{k\le L_N}\zeta_1(k)^2\right)\]
    by the result of the concentration of the MLE in LSM \cite{li2023statistical}.
\end{proof}

Clearly, Lemma \ref{lmm:sieve:est2} provides a result on the concentration of the sieve design matrix when the true latent positions $\bm U$ are replaced with their estimated counterparts $\hat{\bm U}$. Both the RDPG and the LSM models can accommodate sparse networks through the sparsity parameters $\rho_N$ and $\omega_N$ respectively. However, our result with the RDPG model has a better concentration of the $\hat{\bm \Phi}_N$ to $\bm \Phi_N$, for whenever $\omega_N$ is smaller than $\frac{1}{\log^{2c}N}$, i.e., if the network is even slightly sparse. However, we emphasize that our results for both models accommodate sparse networks with density requirement for RDPG being $\frac{\log^4N}{N}$ and LSM being $\frac{N^{1/2}}{N}$.

\begin{Lemma}\label{lmm:sieve:rank}
    Given Assumptions \ref{assmp:sieve}-(ii),(iii), and \ref{assmp:lipsch},
    \[\underset{N\to\infty}{\operatorname{plim}}\lambda_{\min}\left(\frac{\mathbf{\Phi}_N^\prime \mathbf{\Phi}_N}{N}\right)=\underset{N\to\infty}{\operatorname{plim}}\lambda_{\min}\left(\frac{\hat{\bm \Phi}_N^\prime \hat{\bm \Phi}_N}{N}\right)=\lim_{N\to\infty}\lambda_{\min}\left(\E[\bm\phi^{L_N}(\bm u_1)\bm\phi^{L_N}(\bm u_1)']\right).\]
\end{Lemma}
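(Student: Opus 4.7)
The plan is to establish both equalities by showing operator-norm convergence in two steps and then invoking Weyl's inequality, which gives $|\lambda_{\min}(A) - \lambda_{\min}(B)| \le \|A-B\|$. Writing $\bm M_N := \E[\bm\phi^{L_N}(\bm u_1)\bm\phi^{L_N}(\bm u_1)']$, Assumption \ref{assmp:sieve}-(ii) ensures $\lim_N \lambda_{\min}(\bm M_N) > 0$, so once we prove operator-norm closeness the three quantities must share the same limit in probability.

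For the first identification I will show $\|\frac{\bm\Phi_N'\bm\Phi_N}{N} - \bm M_N\| = o_p(1)$. Since $\bm u_1,\dots,\bm u_N$ are i.i.d.\ (Assumption \ref{assmp:jnm}) and $\frac{\bm\Phi_N'\bm\Phi_N}{N} = \frac{1}{N}\sum_{i=1}^N \bm\phi^{L_N}(\bm u_i)\bm\phi^{L_N}(\bm u_i)'$, this is a law of large numbers for a triangular array of i.i.d.\ rank-one matrices whose summands are uniformly bounded in operator norm by $\zeta_0(L_N)^2$ (Assumption \ref{assmp:sieve}-(iii)). The crude Frobenius Markov bound
\[
\E\Big\|\tfrac{\bm\Phi_N'\bm\Phi_N}{N} - \bm M_N\Big\|_F^2 \;\le\; \frac{\E\|\bm\phi^{L_N}(\bm u_1)\|^4}{N} \;\le\; \frac{\zeta_0(L_N)^4}{N}
\]
already delivers Frobenius convergence when it is small, and a non-commutative matrix Bernstein inequality upgrades this to operator-norm convergence under the full growth budget $\zeta_0(L_N)^2 L_N/N = o(1)$.

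For the second identification I will use the factorization
\[
\hat{\bm\Phi}_N'\hat{\bm\Phi}_N - \bm\Phi_N'\bm\Phi_N \;=\; (\hat{\bm\Phi}_N - \bm\Phi_N)'\hat{\bm\Phi}_N + \bm\Phi_N'(\hat{\bm\Phi}_N - \bm\Phi_N),
\]
so the operator norm is bounded by $\|\hat{\bm\Phi}_N - \bm\Phi_N\|_F \,(\|\hat{\bm\Phi}_N\| + \|\bm\Phi_N\|)$. Lemma \ref{lmm:sieve:est2} controls the first factor, and $\|\bm\Phi_N\| \le \|\bm\Phi_N\|_F \le \sqrt{N}\,\zeta_0(L_N)$ gives the second, with the same bound transferring to $\hat{\bm\Phi}_N$ by the triangle inequality. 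After dividing by $N$, the resulting rate is of order $\sqrt{\zeta_0(L_N)^2 \log^{2c}N \sum_k \zeta_1(k)^2 / N}$ in the RDPG case and the analogous LSM version with $\log^{2c}N$ replaced by $1/\omega_N$. Both are $o_p(1)$ exactly by the quantitative condition on $\{1\vee\zeta_0(L_N)^2\}\sum_k\zeta_1(k)^2$ in Assumption \ref{assmp:lipsch}.

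The main technical obstacle I foresee is the operator-norm concentration in the first step: the naive Frobenius--Markov estimate requires $\zeta_0(L_N)^4/N \to 0$, which is strictly stronger than the assumed $\zeta_0(L_N)^2 L_N/N \to 0$ whenever $L_N$ grows more slowly than $\sqrt{N}$. Invoking matrix Bernstein for rank-one summands closes this gap and exploits the full strength of Assumption \ref{assmp:sieve}-(iii). Combining the two operator-norm bounds with Weyl's inequality then yields the stated chain of equalities for $\lambda_{\min}$.
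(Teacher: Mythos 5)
Your proof is correct and follows essentially the same route as the paper's: an eigenvalue-perturbation bound $|\lambda_{\min}(A)-\lambda_{\min}(B)|\le\|A-B\|$ (the paper uses the Frobenius-norm version via Lemma 5 of \cite{johnsson2021estimation}, you use the operator-norm Weyl form, which is implied), a law of large numbers for $\frac{1}{N}\sum_i\bm\phi^{L_N}(\bm u_i)\bm\phi^{L_N}(\bm u_i)'$, and for the second equality the same factorization of $\hat{\bm\Phi}_N'\hat{\bm\Phi}_N-\bm\Phi_N'\bm\Phi_N$ into cross terms controlled by Lemma \ref{lmm:sieve:est2} together with $\|\bm\Phi_N\|_F^2\le N\zeta_0(L_N)^2$, yielding the identical rate $O_{hp}\bigl(\{\zeta_0(L_N)^2\sum_k\zeta_1(k)^2\log^{2c}N/N\}^{1/2}\bigr)=o_p(1)$ under Assumption \ref{assmp:lipsch}. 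The one place you diverge is the first step: you correctly observe that the crude Frobenius--Markov bound gives $\zeta_0(L_N)^4/N$, which is not directly implied by $\zeta_0(L_N)^2L_N/N=o(1)$ (indeed the paper's own proof of this lemma simply asserts $\zeta_0(L_N)^4/N=o(1)$), and you propose matrix Bernstein to close the gap. Two remarks on that. First, the gap can be closed without any non-commutative machinery: keeping one factor as $\E\|\bm\phi^{L_N}(\bm u_1)\|^2=\tr(\bm Q_{L_N})$ rather than bounding it by $\zeta_0(L_N)^2$ gives $\E\|\frac{1}{N}\bm\Phi_N'\bm\Phi_N-\bm Q_{L_N}\|_F^2\le\zeta_0(L_N)^2\tr(\bm Q_{L_N})/N=O(\zeta_0(L_N)^2L_N/N)$ when $\E[\phi_k(\bm u_1)^2]=O(1)$ uniformly in $k$; this is exactly how the paper's companion Lemma \ref{lmm:linracine} proceeds. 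Second, matrix Bernstein is not a free upgrade here: its variance proxy involves $\|\E[\|\bm\phi^{L_N}(\bm u_1)\|^2\bm\phi^{L_N}(\bm u_1)\bm\phi^{L_N}(\bm u_1)']\|\le\zeta_0(L_N)^2\lambda_{\max}(\bm Q_{L_N})$, so you only beat the Frobenius bound under an additional control on $\lambda_{\max}(\bm Q_{L_N})$ that the paper does not assume. So your plan is sound, but the simpler trace-based refinement is the one consistent with the stated assumptions.
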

\begin{proof}
    We first show
    \[
    \left| \lambda_{\min}\left(\frac{\mathbf{\Phi}_N^\prime \mathbf{\Phi}_N}{N}\right) - \lambda_{\min}\left(\E[\bm\phi^{L_N}(\bm u_1)\bm\phi^{L_N}(\bm u_1)^\prime]\right) \right| = o_p(1).
    \]
    Temporarily write $\bm W^{(i)}:=\bm\phi^{L_N}(\bm u_i)\bm\phi^{L_N}(\bm u_i)^\prime$.
    By Lemma 5 of \cite{johnsson2021estimation}, we have
    \begin{align*}
        &\left| \lambda_{\min}\left(\frac{\mathbf{\Phi}_N^\prime \mathbf{\Phi}_N}{N}\right) - \lambda_{\min}\left(\E\bm W^{(1)}\right) \right| \leq \left\lVert \frac{\mathbf{\Phi}_N^\prime \mathbf{\Phi}_N}{N} - \E(\bm W^{(1)}) \right\rVert_F \\
        &= \left\lVert \frac{1}{N} \sum_{i=1}^N \left\{\bm W^{(i)} - \E(\bm W^{(i)})\right\} \right\rVert_F.
    \end{align*}
    Then for $\tilde{\bm W}^{(i)}:=\bm W^{(i)} - \E(\bm W^{(i)})=(\tilde w_{kl}^{(i)})_{k,l\le L_N}$, by assumption, we have
    \begin{align*}
        \E&\left\lVert \frac{1}{N} \sum_{i=1}^N \tilde{\bm W}^{(i)} \right\rVert_F^2 = \E\sum_{1\le k,l\le L_N}\left( \frac{1}{N} \sum_{i=1}^N \tilde w_{kl}^{(i)} \right)^2 = \sum_{1\le k,l\le L_N}\frac{N}{N^2}\E(\tilde w_{kl}^{(1)2}) \\
        & \leq \frac{1}{N}\sum_{1\le k,l\le L_N}\E(w_{kl}^{(1)2}) = \frac{1}{N}\sum_{1\le k,l\le L_N}\E[\phi_k(\bm u_1)^2\phi_l(\bm u_1)^2] \\
        &\leq \frac{1}{N}\left[\E\sum_{k\le L_N}\phi_k(\bm u_1)^2\right]^2 \le \frac{1}{N} \left(\E\sup_{\bm u\in\cal U} \|\bm\phi^{L_N}(\bm u)\|^2\right)^2\leq \frac{\zeta_0(L_N)^4}{N} = o(1).
    \end{align*}
    Next,
    \begin{align*}
        &\left| \lambda_{\min}\left( \frac{\hat{\mathbf{\Phi}}_N^\prime \hat{\mathbf{\Phi}}_N}{N} \right) - \lambda_{\min}\left( \frac{\mathbf{\Phi}_N^\prime \mathbf{\Phi}_N}{N} \right) \right|
        \leq \left\lVert \frac{\hat{\mathbf{\Phi}}_N^\prime \hat{\mathbf{\Phi}}_N}{N} - \frac{\mathbf{\Phi}_N^\prime \mathbf{\Phi}_N}{N} \right\rVert_F \\
        &\leq \left\lVert \frac{\hat{\mathbf{\Phi}}_N - \mathbf{\Phi}_N}{\sqrt{N}} \right\rVert_F^2
        + \left\lVert \frac{\mathbf{\Phi}_N^\prime (\hat{\mathbf{\Phi}}_N - \mathbf{\Phi}_N)}{N}\right\rVert_F + \left\lVert\frac{(\hat{\mathbf{\Phi}}_N - \mathbf{\Phi}_N)^\prime \mathbf{\Phi}_N}{N} \right\rVert_F.
    \end{align*}
    Since $\|\bm\Phi_N\|_F^2=\sum_{i\le N}\|\bm\phi^{L_N}(\bm u_i)\|^2\le N\zeta_0(L_N)^2$,
    \[\left\lVert\frac{(\hat{\mathbf{\Phi}}_N - \mathbf{\Phi}_N)^\prime \mathbf{\Phi}_N}{N} \right\rVert_F=O_{hp}\left(\sqrt{\zeta_0(L_N)^2\sum_{k\le L_N}\zeta_1(k)^2\frac{\log^{2c}N}{N}}\right)\] 
    and by the assumptions, we have the conclusion.
\end{proof}

In the following lemma, we write
\[\bm Q_{L_N}:=\E[\bm\phi^{L_N}(\bm u_1)\bm\phi^{L_N}(\bm u_1)'].\]
\begin{Lemma}\label{lmm:linracine}
    Under Assumption \ref{assmp:sieve} in section \ref{sec:assmp}, we have $\lambda_{\min}(\frac{1}{N}\bm\Phi_N'\bm\Phi_N)=\lambda_{\min}(\bm Q_{L_N})+o_p(1)$.
\end{Lemma}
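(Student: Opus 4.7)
The plan is to reduce the eigenvalue comparison to a matrix-norm comparison, then control the norm by an i.i.d.\ sample-mean argument. First, I would apply Weyl's inequality (equivalently, Lemma 5 of \cite{johnsson2021estimation} as invoked in the proof of Lemma \ref{lmm:sieve:rank}) to the symmetric matrices $\tfrac{1}{N}\bm\Phi_N'\bm\Phi_N$ and $\bm Q_{L_N}$, which yields
\[
\bigl|\lambda_{\min}(\tfrac{1}{N}\bm\Phi_N'\bm\Phi_N)-\lambda_{\min}(\bm Q_{L_N})\bigr|\le\bigl\|\tfrac{1}{N}\bm\Phi_N'\bm\Phi_N-\bm Q_{L_N}\bigr\|_F.
\]
Hence it suffices to show that the Frobenius-norm difference is $o_p(1)$.

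Second, write $\tfrac{1}{N}\bm\Phi_N'\bm\Phi_N-\bm Q_{L_N}=\tfrac{1}{N}\sum_{i=1}^N(\bm W^{(i)}-\E\bm W^{(i)})$, where $\bm W^{(i)}:=\bm\phi^{L_N}(\bm u_i)\bm\phi^{L_N}(\bm u_i)'$. Since $\bm u_1,\ldots,\bm u_N$ are i.i.d.\ by Assumption \ref{assmp:jnm}, the centered matrices $\bm W^{(i)}-\E\bm W^{(i)}$ are i.i.d.\ and mean-zero. A direct variance calculation then gives
\[
\E\Bigl\|\tfrac{1}{N}\sum_i(\bm W^{(i)}-\E\bm W^{(i)})\Bigr\|_F^2=\tfrac{1}{N}\E\|\bm W^{(1)}-\E\bm W^{(1)}\|_F^2\le\tfrac{1}{N}\E\|\bm W^{(1)}\|_F^2,
\]
and Markov's inequality converts a vanishing mean-square bound into the desired $o_p(1)$ statement.

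Third, expand $\|\bm W^{(1)}\|_F^2=\sum_{k,l\le L_N}\phi_k(\bm u_1)^2\phi_l(\bm u_1)^2=\|\bm\phi^{L_N}(\bm u_1)\|^4$, which by the uniform envelope in Assumption \ref{assmp:sieve}(iii) is bounded almost surely by $\zeta_0(L_N)^4$. The mean-square bound is therefore at most $\zeta_0(L_N)^4/N$, which is $o(1)$ under the sieve growth conditions used at the same step in the proof of Lemma \ref{lmm:sieve:rank}. The plan essentially restates the first block of that proof, so no new ingredient is needed; the only bookkeeping point, and the closest thing to a genuine obstacle, is verifying that $\zeta_0(L_N)^4/N=o(1)$ follows from the posited growth condition $\zeta_0(L_N)^2L_N/N=o(1)$, which the paper already implicitly relies on.
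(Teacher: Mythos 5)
Your proposal follows the same route as the paper's proof: Weyl's inequality (the paper cites Lemma 5 of \cite{johnsson2021estimation} for this in the proof of Lemma \ref{lmm:sieve:rank}) reduces the claim to $\|\tfrac{1}{N}\bm\Phi_N'\bm\Phi_N-\bm Q_{L_N}\|_F=o_p(1)$, and the i.i.d.\ variance computation gives the bound $\tfrac{1}{N}\E\|\bm\phi^{L_N}(\bm u_1)\|^4$ on the mean-squared Frobenius error, exactly as in the paper. The one substantive difference is the final envelope step. You bound $\E\|\bm\phi^{L_N}(\bm u_1)\|^4\le\zeta_0(L_N)^4$ and therefore need $\zeta_0(L_N)^4/N=o(1)$; as you yourself flag, this does \emph{not} follow from the stated condition $\zeta_0(L_N)^2L_N/N=o(1)$ unless $\zeta_0(L_N)^2=O(L_N)$ --- true for uniformly bounded bases such as cosines, but false for polynomial bases where $\zeta_0(L_N)\asymp L_N$. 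The paper instead peels off only one factor of the envelope, writing $\E\|\bm\phi^{L_N}(\bm u_1)\|^4\le\zeta_0(L_N)^2\,\E\|\bm\phi^{L_N}(\bm u_1)\|^2=\zeta_0(L_N)^2\tr(\bm Q_{L_N})=O(\zeta_0(L_N)^2L_N)$, which lands exactly on the rate $\zeta_0(L_N)^2L_N/N$ controlled by Assumption \ref{assmp:sieve}(iii), at the cost of the implicit normalization $\tr(\bm Q_{L_N})=O(L_N)$, i.e., $\E[\phi_k(\bm u_1)^2]=O(1)$ uniformly in $k$. So your argument is correct under the mildly stronger side condition you identify, and is upgraded to match the paper by replacing your last inequality with this trace bound; your observation that the paper elsewhere (in the proof of Lemma \ref{lmm:sieve:rank}) itself falls back on the cruder $\zeta_0(L_N)^4/N$ bound is accurate.
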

\begin{proof}
    Denote the $j$-th column of $\bm\Phi_N$ by $\bm\phi_j(\bm U)$.
    Let $\hat{\bm Q}_{L_N}:=\frac{1}{N}\bm\Phi_N'\bm\Phi_N$. 
    Let $q_{ij}:=\E[\phi_i(\bm u_1)'\phi_j(\bm u_1)]$. Then, for $z_{ij}(\bm u_k):=\phi_i(\bm u_k)\phi_j(\bm u_k)-q_{ij}$ which has zero mean,
    \begin{align*}
        \E\|\hat{\bm Q}_{L_N}&-\bm Q_{L_N}\|_F^2
        =\sum_{1\le i,j\le L_N}\E\left[\left(\frac{1}{N}\bm\phi_i(\bm U)'\bm\phi_j(\bm U)-q_{ij}\right)^2\right]\\
        &=\sum_{1\le i,j\le L_N}\E\left[\left(\frac{1}{N}\sum_{k\le N}z_{ij}(\bm u_k)\right)^2\right]=\frac{1}{N}\sum_{1\le i,j\le L_N}\E\left[z_{ij}(\bm u_1)^2\right]\\
        &\qquad\le\frac{1}{N}\sum_{1\le i,j\le L_N}\E\left[\phi_i(\bm u_1)^2\phi_j(\bm u_1)^2\right]=\frac{1}{N}\E[\|\bm\phi^{L_N}(\bm u_1)\|^4] \\
        &\qquad\qquad\le\frac{\zeta_0(L_N)^2}{N}\tr(\bm Q_{L_N})=O\left(\frac{\zeta_0(L_N)^2L_N}{N}\right).
    \end{align*}
    So, the conclusion follows as $\|\hat{\bm Q}_{L_N}-\bm Q_{L_N}\|_F=o_p(1)$.
\end{proof}
\begin{Lemma}\label{lmm:sieve:apprx}
    Let $\hat{\bm\alpha}_j^f:=\left(\bm\Phi_N'\bm\Phi_N\right)^{-1}\bm\Phi_N'\bm h_j^f(\bm U_N)$ for some $j\in{\cal I}(f)$ and $\hat{\bm h}_j^f:=\hat{\bm h}_j^f(\bm U_N)=\bm\Phi_N\hat{\bm\alpha}_j^f$. 
    Then, Under assumption on sieve approximation in section \ref{sec:assmp},
    \[\|\hat{\bm\alpha}_j^f-\bm\alpha^j\|=O_p({L_N}^{-\kappa}),\quad \frac{1}{N}\|\bm h_j^f(\bm u)-\hat{\bm h}^f_j(\bm u)\|^2=O_p({L_N}^{-2\kappa}).\]
\end{Lemma}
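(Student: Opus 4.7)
The plan is to cast both bounds as consequences of the uniform approximation guarantee in Assumption \ref{assmp:sieve}-(iii), together with the lower bound on the smallest eigenvalue of the empirical Gram matrix supplied by Lemma \ref{lmm:linracine}. Define the residual vector
\[
\bm r := \bm h_j^f(\bm U_N) - \bm\Phi_N \bm\alpha^j \in \mathbb{R}^N,
\]
whose $i$-th component is $h_j^f(\bm u_i) - \bm\phi^{L_N}(\bm u_i)'\bm\alpha^j$. By Assumption \ref{assmp:sieve}-(iii), $\sup_{i\le N} |r_i| \le \sup_{\bm u\in\mathcal{U}}|h_j^f(\bm u) - \bm\phi^{L_N}(\bm u)'\bm\alpha^j| = O(L_N^{-\kappa})$, hence $\|\bm r\|^2 \le N \cdot O(L_N^{-2\kappa})$. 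Substituting $\bm h_j^f(\bm U_N) = \bm\Phi_N\bm\alpha^j + \bm r$ into the definition of $\hat{\bm\alpha}_j^f$ gives the key identity
\[
\hat{\bm\alpha}_j^f - \bm\alpha^j = (\bm\Phi_N'\bm\Phi_N)^{-1}\bm\Phi_N'\bm r.
\]

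For the first claim, I would control this difference in Euclidean norm via
\[
\|\hat{\bm\alpha}_j^f - \bm\alpha^j\|^2 = \bm r'\bm\Phi_N(\bm\Phi_N'\bm\Phi_N)^{-2}\bm\Phi_N'\bm r \le \lambda_{\min}^{-1}(\bm\Phi_N'\bm\Phi_N)\,\bm r'\bm P_{\bm\Phi_N}\bm r \le \lambda_{\min}^{-1}(\bm\Phi_N'\bm\Phi_N)\,\|\bm r\|^2.
\]
Assumption \ref{assmp:sieve}-(ii) ensures $\lambda_{\min}(\bm Q_{L_N})$ is asymptotically bounded away from zero, and Lemma \ref{lmm:linracine} transfers this lower bound to the empirical Gram matrix, so $\lambda_{\min}(N^{-1}\bm\Phi_N'\bm\Phi_N) \ge c$ with probability tending to one for some $c>0$. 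Combining these yields $\|\hat{\bm\alpha}_j^f - \bm\alpha^j\|^2 = O_p((cN)^{-1} \cdot N \cdot L_N^{-2\kappa}) = O_p(L_N^{-2\kappa})$, which is the first stated bound.

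For the second claim, the cleanest route is to recognize $\hat{\bm h}_j^f = \bm P_{\bm\Phi_N}\bm h_j^f(\bm U_N)$ and expand
\[
\bm h_j^f(\bm U_N) - \hat{\bm h}_j^f = (\bm I_N - \bm P_{\bm\Phi_N})(\bm\Phi_N\bm\alpha^j + \bm r) = (\bm I_N - \bm P_{\bm\Phi_N})\bm r,
\]
using $\bm P_{\bm\Phi_N}\bm\Phi_N = \bm\Phi_N$. Since $\bm I_N - \bm P_{\bm\Phi_N}$ is an orthogonal projector, $\|\bm h_j^f(\bm U_N) - \hat{\bm h}_j^f\|^2 \le \|\bm r\|^2 \le N \cdot O(L_N^{-2\kappa})$, which gives the second bound after dividing by $N$. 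Overall, the argument is routine OLS algebra paired with a uniform approximation bound; the only nontrivial ingredient is the lower bound on $\lambda_{\min}(\bm\Phi_N'\bm\Phi_N)$ used in the first part, which is precisely why Assumption \ref{assmp:sieve}-(ii) and Lemma \ref{lmm:linracine} are needed. The second bound, notably, does not require the eigenvalue control at all, since projection contractivity alone suffices.
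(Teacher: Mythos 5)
Your proposal is correct, and the first half is essentially the paper's argument: both hinge on the identity $\hat{\bm\alpha}_j^f-\bm\alpha^j=(\bm\Phi_N'\bm\Phi_N)^{-1}\bm\Phi_N'\bm r$ with $\bm r:=\bm h_j^f(\bm U_N)-\bm\Phi_N\bm\alpha^j$, the bound $\|\bm r\|\le\sqrt{N}\,\|\bm r\|_\infty=O(\sqrt{N}\,L_N^{-\kappa})$ from Assumption \ref{assmp:sieve}-(iii), and the eigenvalue lower bound on $N^{-1}\bm\Phi_N'\bm\Phi_N$ that holds with probability tending to one via Assumption \ref{assmp:sieve}-(ii) and Lemma \ref{lmm:linracine} (the paper works on the event $B_N=\{\lambda_{\min}(N^{-1}\bm\Phi_N'\bm\Phi_N)\ge 1/2\}$ and bounds the operator norm of the inverse by $2$, where you bound the quadratic form $\bm r'\bm\Phi_N(\bm\Phi_N'\bm\Phi_N)^{-2}\bm\Phi_N'\bm r$ directly; these are the same estimate). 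Where you genuinely diverge is the second claim. The paper decomposes $\bm h_j^f-\hat{\bm h}_j^f$ by the triangle inequality into the sieve approximation error $\bm h_j^f-\bm\Phi_N\bm\alpha^j$ plus the estimation error $\bm\Phi_N(\bm\alpha^j-\hat{\bm\alpha}_j^f)$, so it needs both the first claim and a bound $\frac{1}{N}\|\bm\Phi_N'\bm\Phi_N\|=O_p(1)$ to control the second piece. You instead observe $\hat{\bm h}_j^f=\bm P_{\bm\Phi_N}\bm h_j^f(\bm U_N)$, hence $\bm h_j^f-\hat{\bm h}_j^f=(\bm I_N-\bm P_{\bm\Phi_N})\bm r$, and conclude by contractivity of the orthogonal projector. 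Your route is cleaner: it makes the second bound logically independent of the first, avoids any condition on the largest eigenvalue of the Gram matrix, and even delivers a deterministic $O(L_N^{-2\kappa})$ rather than merely $O_p$. The trade-off is negligible; the paper's decomposition would generalize more readily if $\hat{\bm\alpha}_j^f$ were replaced by a non-least-squares estimate of the coefficients, but for the estimator as defined your projection argument is the sharper one.
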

\begin{proof}
    Let $B_N:=\{\lambda_{\min}(\frac{1}{N}\bm\Phi_N'\bm\Phi_N)\ge1/2\}$. 
    Then, $B_N=\{1/\lambda_{\min}(\frac{1}{N}\bm\Phi_N'\bm\Phi_N)\le2\}$ implies
    \begin{align*}
        \|\hat{\bm\alpha}_j^f-\bm\alpha^j\|=\left\|\left(\frac{1}{N}\bm\Phi_N'\bm\Phi_N\right)^{-1}\frac{1}{N}\bm\Phi_N'\{\bm h_j^f(\bm U)-\bm\Phi_N\bm\alpha^j\}\right\|\le2\left\|\frac{1}{N}\bm\Phi_N'\{\bm h_j^f(\bm U)-\bm\Phi_N\bm\alpha^j\}\right\|\\
        \le\frac{2\sqrt{2}}{\sqrt{N}}\|\bm h_j^f(\bm U)-\bm\Phi_N\bm\alpha^j\|\lesssim\|\bm h_j^f(\bm U)-\bm\Phi_N\bm\alpha^j\|_\infty=O({L_N}^{-\kappa}).
    \end{align*}
    As $\Pr(B_N)\to1$,
    \[\|\hat{\bm\alpha}_j^f-\bm\alpha^j\|=O_p({L_N}^{-\kappa}).\]
    Next,
    \begin{align*}
        \frac{1}{N}\|\bm h_j^f(\bm U)-\hat{\bm h}_j^f(\bm U)\|^2\le\frac{1}{N}\|\bm h_j^f(\bm U)-\bm\Phi_N\bm\alpha^j\|^2+\frac{1}{N}\|\bm\Phi_N(\bm\alpha^j-\hat{\bm\alpha}_j^f)\|^2 \\
        =O({L_N}^{-2\kappa})+\frac{1}{N}\|\bm\Phi_N'\bm\Phi_N\|O_p({L_N}^{-2\kappa})=O_p({L_N}^{-2\kappa}).
    \end{align*}
\end{proof}

\setcounter{figure}{0}
\renewcommand{\thefigure}{B\arabic{figure}}
\setcounter{table}{0}
\renewcommand{\thetable}{B\arabic{table}}

\subsection{Algorithm}
\begin{algorithm}[H]
\caption{Vectorized 2SLS with Network Endogeneity in MSAR}
\label{algorithm1}
\begin{algorithmic}[1]
    \REQUIRE $Y \in \mathbb{R}^{N\times m}$ (Outcomes),  $X \in \mathbb{R}^{N\times p}$ (Covariates),  $A \in \mathbb{R}^{N\times N}$ (Adjacency matrix), latent dimension $d$
    \ENSURE Vectorized coefficient matrix $\hat{\beta}_v$, Variance-Covariance matrix of coefficients $\hat{\Sigma}_{\beta}$.
    \STATE \textbf{Step 1: Spectral Embedding (RDPG)}
    \STATE $\hat{U} \gets \texttt{SpectralEmbed}(A,d)$ \COMMENT{Top $d$ eigenvectors and absolute eigenvalues}
    
    \STATE \textbf{Step 2: Construct Peer Lag Matrices}
    \STATE $GY \gets GY$, $GX \gets GX$, $G^2X \gets G(GX)$
    \STATE $Z \gets [GY,  X,  GX]$, $K \gets [X , GX,  G^2X]$
    
    \STATE \textbf{Step 3: Latent Adjustment}
    \STATE Form sieve design matrix $\widehat{\Phi}_N = \left[\phi^{L_N}(\widehat{u}_1), \ldots, \phi^{L_N}(\widehat{u}_N) \right] \in \mathbb{R}^{N \times L_N}.$
    \STATE Compute sieve projection $P_{\widehat{\Phi}_N} := \widehat{\Phi}_N (\widehat{\Phi}_N^\top \widehat{\Phi}_N)^{-1} \widehat{\Phi}_N^\top$
    \STATE $\tilde{Y}^* \gets Y - P_{\widehat{\Phi}_N} Y$, $\tilde{Z}^* \gets Z - P_{\widehat{\Phi}_N} Z$, $\tilde{K}^* \gets K - P_{\widehat{\Phi}_N}K$
    
    \STATE \textbf{Step 4: Vectorization}
    \STATE $\tilde{Y}_v^* \gets \texttt{vec}(\tilde{Y}^*)$ \COMMENT{$\in \mathbb{R}^{Nm}$}
    \STATE $\tilde{Z}_v^* \gets I_m \otimes \tilde{Z}^*$ \COMMENT{$\in \mathbb{R}^{mN\times m(m+2p)}$}
    \STATE $\tilde{K}_v^* \gets I_m \otimes \tilde{K}^*$ \COMMENT{$\in \mathbb{R}^{mN\times 3mp}$}
    
    \STATE \textbf{Step 5: IV Estimation and Covariance Calculation}
    \STATE $P_{\tilde{K}_v^*} \gets \tilde{K}_v^* (\tilde{K}_v^*{}^\top \tilde{K}_v^*)^{-1}\tilde{K}_v^*{}^\top$ \COMMENT{$\in \mathbb{R}^{mN\times mN}$}
    \STATE $\hat{\beta}_v \gets (\tilde{Z}_v^*{}^\top P_{\tilde{K}_v^*}\tilde{Z}_v^*)^{-1}\tilde{Z}_v^*{}^\top P_{\tilde{K}_v^*}\tilde{Y}_v^*$ \COMMENT{$\in \mathbb{R}^{m(m+2p)}$}
    \STATE $\hat{E}_v' \gets \tilde{Y}_v^* - \tilde{Z}_v^*\hat{\beta}_v$ \COMMENT{$\in \mathbb{R}^{Nm}$}
    \STATE $\hat{E}' \gets \texttt{reshape}(\hat{E}'_v,N,m)$ \COMMENT{Reshape vector to an $N \times m$ matrix}
    \STATE $\hat{V}' \gets \frac{1}{N}{\hat{E}'^\top} \hat{E'}$
    \STATE $\hat{\Sigma}_\beta \gets (\tilde{Z}_v^*{}^\top P_{\tilde{K}_v^*{}}\tilde{Z}_v^*{})^{-1}\tilde{Z}_v^*{}^\top P_{\tilde{K}_v^*}\Bigl(  \hat{V}' \otimes I_N\Bigr)P_{\tilde{K}_v^*}\tilde{Z}_v^*(\tilde{Z}_v^*{}^\top P_{\tilde{K}_v^*}\tilde{Z}_v^*)^{-1}$
\end{algorithmic}
\end{algorithm}

\renewcommand{\thefigure}{A\arabic{section}.\arabic{figure}}
\renewcommand{\thetable}{A\arabic{section}.\arabic{table}}

\section{Additional Tables and Figures}

\begin{table}[H] \centering \renewcommand*{\arraystretch}
{0.8}
\caption{Summary Statistics}
\label{summarystatsdata}

\begin{tabular}{p{5.5cm}rrrrrrrrr}
\hline
\hline
Variable & N & Mean & SD & Min & p25 & p50 & p75 & p95 & Max \\ 
\hline
\textbf{A. Female Unit}&  &  &  &  &  &  &  & \\
Recidivism & 982 &  &  &  &  &  &  &  &  \\ 
\hspace{0.5cm} 0 & 769 & 78\% &  &  &  &  &  &  &  \\ 
\hspace{0.5cm} 1 & 213 & 22\% &  &  &  &  &  &  &  \\ 
White & 982 &  &  &  &  &  &  &  &  \\ 
\hspace{0.5cm} Non White & 204 & 21\% &  &  &  &  &  &  &  \\ 
\hspace{0.5cm} White & 778 & 79\% &  &  &  &  &  &  &  \\ 
LSI.R & 982 & 25 & 7.2 & 0 & 21 & 25 & 30 & 35 & 57 \\ 
Age & 982 & 31 & 8.3 & 18 & 24 & 29 & 37 & 46 & 60 \\ 
education & 982 &  &  &  &  &  &  &  &  \\ 
\hspace{0.5cm} less than HS & 539 & 55\% &  &  &  &  &  &  &  \\ 
\hspace{0.5cm} GED & 109 & 11\% &  &  &  &  &  &  &  \\ 
\hspace{0.5cm} HS & 232 & 24\% &  &  &  &  &  &  &  \\ 
\hspace{0.5cm} >HS but < College Grad. & 74 & 8\% &  &  &  &  &  &  &  \\ 
\hspace{0.5cm} College Degree & 28 & 3\% &  &  &  &  &  &  &  \\ \\
\textbf{B. Male Unit}&  &  &  &  &  &  &  & \\
Recidivism & 1649 &  &  &  &  &  &  &  &  \\ 
\hspace{0.5cm} 0 & 1229 & 75\% &  &  &  &  &  &  &  \\ 
\hspace{0.5cm} 1 & 420 & 25\% &  &  &  &  &  &  &  \\ 
White & 1649 &  &  &  &  &  &  &  &  \\ 
\hspace{0.5cm} Non White & 689 & 42\% &  &  &  &  &  &  &  \\ 
\hspace{0.5cm} White & 960 & 58\% &  &  &  &  &  &  &  \\ 
LSI.R & 1649 & 26 & 10 & 0 & 22 & 25 & 29 & 35 & 354 \\ 
Age & 1649 & 29 & 8.8 & 18 & 21 & 26 & 35 & 46 & 60 \\ 
education & 1649 &  &  &  &  &  &  &  &  \\ 
\hspace{0.5cm} less than HS & 1018 & 62\% &  &  &  &  &  &  &  \\ 
\hspace{0.5cm} GED & 213 & 13\% &  &  &  &  &  &  &  \\ 
\hspace{0.5cm} HS & 293 & 18\% &  &  &  &  &  &  &  \\ 
\hspace{0.5cm} >HS but < College Grad. & 94 & 6\% &  &  &  &  &  &  &  \\ 
\hspace{0.5cm} College Degree & 31 & 2\% &  &  &  &  &  &  &  \\ 
\hline
\hline
\multicolumn{10}{c}{ \begin{minipage}{15 cm}{\footnotesize{Notes: This table provides the summary statistics for the primary outcome ``recidivism" and the pre-entry covariates. }}
\end{minipage}} \\
\end{tabular}
%}
\end{table}

\begin{table}[H]
\caption{Monte Carlo Simulations RDPG Model}
\label{simsfinal}
\centering
\begin{tabular}[t]{p{2cm}p{2cm}p{2cm}p{2cm}p{2cm}}
\hline\hline
\multicolumn{1}{c}{} & \multicolumn{2}{c}{Y1 Equation} & \multicolumn{2}{c}{Y2 Equation} \\
\cmidrule(l{3pt}r{3pt}){2-3} \cmidrule(l{3pt}r{3pt}){4-5}
N & $D_{11}$ & $D_{12}$ & $D_{21}$ & $D_{22}$ \\
\midrule
\addlinespace[0.5em]
\multicolumn{5}{l}{\textbf{$h^{Y}(U)=U\Theta_{Y}$}}\\
100 & 1.349 & 0.284 & 1.662 & 0.383 \\ 
  300 & 1.057 & 0.224 & 1.041 & 0.217 \\ 
  500 & 0.471 & 0.098 & 0.472 & 0.100 \\ 
\addlinespace[0.5em]
\multicolumn{5}{l}{\textbf{$h^{Y}(U)=[U[,1],U[,2],U[,1]^2,U[,2]^2]\Theta_{Y}$}}\\
  100 & 0.638 & 0.218 & 0.777 & 0.279 \\ 
  300& 0.256 & 0.090 & 0.157 & 0.055 \\ 
  500 & 0.205 & 0.073 & 0.113 & 0.039 \\ 
\addlinespace[0.5em]
\multicolumn{5}{l}{\textbf{$h^{Y}(U)=[U[,1],U[,2],U[,1]^2,U[,2]^2,U[,1]*U[,2]]\Theta_{Y}$}}\\
  100 & 1.671 & 0.635 & 1.167 & 0.460 \\ 
  300 & 0.514 & 0.195 & 0.693 & 0.258 \\ 
  500 & 0.156 & 0.062 & 0.220 & 0.091 \\ 
\hline\hline
\multicolumn{5}{c}{ \begin{minipage}{11 cm}{\footnotesize{Notes: Columns (2)-(5) report the Mean Squared Error (MSE) of the Multivariate IV 2SLS estimator adjusting for network endogeneity for both the direct and the indirect spillover effects $(\hat{D})$. }}
\end{minipage}} \\
\end{tabular}
\end{table}

\begin{table}[!h]
\caption{Monte Carlo Simulations Latent Space Model with Covariates}
\label{simsfinallsmcov}
\centering
\begin{tabular}[t]{p{1cm}p{2cm}p{2cm}p{2cm}p{2cm}}
\hline\hline
\multicolumn{1}{c}{} & \multicolumn{2}{c}{Y1 Equation} & \multicolumn{2}{c}{Y2 Equation} \\
\cmidrule(l{3pt}r{3pt}){2-3} \cmidrule(l{3pt}r{3pt}){4-5}
N & $D_{11}$ & $D_{12}$ & $D_{21}$ & $D_{22}$ \\
\midrule
\addlinespace[0.5em]
\multicolumn{5}{l}{{$h^{Y}(U)=U\Theta_{Y}$}}\\
100  & 2.434 & 3.711 & 1.489 & 2.241\\
300  & 0.874 & 1.169 & 0.686 & 1.064\\
500  & 0.413 & 0.644 & 0.666 & 0.833\\
\addlinespace[0.5em]
\multicolumn{5}{l}{$h^{Y}(U)=[U[,1],U[,2],U[,1]^2,U[,2]^2]\Theta_{Y}$}\\
100 & 1.362 & 1.700 & 4.380 & 5.600\\
300 & 0.670 & 0.861 & 3.957 & 5.093\\
500 & 0.263 & 0.338 & 0.987 & 1.268\\
\addlinespace[0.5em]
\multicolumn{5}{l}{\textbf{$h^{Y}(U)=[U[,1],U[,2],U[,1]^2,U[,2]^2,U[,1]*U[,2]]\Theta_{Y}$}}\\
100  & 1.820 & 2.261 & 3.256 & 4.055\\
300  & 1.120 & 1.423 & 1.933 & 2.478\\
500  & 0.605 & 0.769 & 0.838 & 1.072\\
\hline\hline
\multicolumn{5}{c}{ \begin{minipage}{11 cm}{\footnotesize{Notes: Columns (2)-(5) report the Mean Squared Error (MSE) of the Multivariate IV 2SLS estimator adjusting for network endogeneity for both the direct and the indirect spillover effects $(\hat{D})$. }}
\end{minipage}} \\
\end{tabular}
\end{table}

\begin{table}[!h]
\caption{Peer Effects in Text Class Probabilities (Female) IV 2SLS without Adjustment}
\label{rhotablewithoutadjust}
\begin{center}
\begin{tabular}{lP{2cm}p{2cm}p{2cm}}
\hline\hline
 & Personal Growth ALR & Community Support ALR& Rule Violations ALR\\
\midrule

\textbf{a. Sender Profiles}&&&\\
Peer Personal Growth ALR   & 1.061   & -0.091  & 0.189   \\
                           & (0.239) & (0.244) & (0.167) \\
Peer Community Support ALR & -0.058  & 1.072   & -0.165  \\
                           & (0.301) & (0.307) & (0.211) \\
Peer Rule Violations ALR   & -0.023  & -0.257  & 1.375   \\
                           & (0.336) & (0.343) & (0.235) \\
 \textbf{b. Receiver Profiles}                       &&&\\
Peer Personal Growth ALR   & 1.060   & 0.227   & -0.179  \\
                           & (0.221) & (0.233) & (0.150) \\
Peer Community Support ALR & 0.144   & 0.850   & 0.345   \\
                           & (0.284) & (0.300) & (0.193) \\
Peer Rule Violations ALR   & 0.712   & 0.278   & 1.642   \\
                           & (0.265) & (0.279) & (0.180) \\
\hline\hline
\multicolumn{4}{c}{ \begin{minipage}{13 cm}{\footnotesize{Notes: This table deploys the IV 2SLS for multivariate outcomes but does not adjust for the endogeneity driven by endogeneity in network formation.}}
\end{minipage}} \\
\end{tabular}
\end{center}
\end{table}

\begin{table}[!h]
\caption{Peer Effects in Text Class Probabilities (Male) IV 2SLS without Adjustment}
\label{rhotablewithoutadjustmale}
\begin{center}
\begin{tabular}{lP{2cm}p{2cm}p{2cm}}
\hline\hline
 & Personal Growth ALR& Community Support ALR& Rule Violations ALR\\
\midrule

\textbf{a. Sender Profiles}&&&\\
Peer Personal Growth ALR   & 0.800   & -0.266  & -0.001  \\
                           & (0.332) & (0.335) & (0.218) \\
Peer Community Support ALR & 0.222   & 1.224   & 0.014   \\
                           & (0.347) & (0.351) & (0.228) \\
Peer Rule Violations ALR   & -0.146  & -0.244  & 1.119   \\
                           & (0.303) & (0.306) & (0.199) \\
 \textbf{b. Receiver Profiles}                       &&&\\
Peer Personal Growth ALR   & 1.258   & 0.309   & 0.013   \\
                           & (0.254) & (0.274) & (0.167) \\
Peer Community Support ALR & -0.212  & 0.706   & -0.043  \\
                           & (0.264) & (0.284) & (0.173) \\
Peer Rule Violations ALR   & 0.138   & 0.012   & 1.063   \\
                           & (0.224) & (0.241) & (0.147) \\
\hline\hline
\multicolumn{4}{c}{ \begin{minipage}{14 cm}{\footnotesize{Notes: This table deploys the IV 2SLS for multivariate outcomes but does not adjust for the endogeneity in network formation.}}
\end{minipage}} \\
\end{tabular}
\end{center}
\end{table}

\begin{table}[!h]
\centering
\caption{Sensitivity Analysis: Language Dimensions and Recidivism}
\label{tab:oster_bounds_fullapp}
\resizebox{0.95\textwidth}{!}{
\begin{tabular}{p{5.5cm}p{2.5cm}p{2.5cm}p{2.5cm}p{2.5cm}}
\hline\hline
Language Dimension & $\beta$ (OLS) & $\beta^*(\delta=1)$ & $\delta^*$ & $R^2_{full}$\\
\midrule
\multicolumn{5}{c}{\textbf{Panel A: Female Unit, Sender Profiles}}\\
Personal Growth ALR & -0.080 & -0.051 & 2.782 & 0.093\\
 & (0.054) &  &  & \\
Community Support ALR & -0.037 & -0.024 & 2.782 & 0.093\\
 & (0.055) &  &  & \\
Rule Violations ALR & 0.118 & 0.076 & 2.782 & 0.093\\
 & (0.046) &  &  & \\
\multicolumn{5}{c}{\textbf{Panel B: Female Unit, Receiver Profiles}}\\
Personal Growth ALR & -0.023 & -0.015 & 2.888 & 0.115\\
 & (0.068) &  &  & \\
Community Support ALR & -0.106 & -0.069 & 2.888 & 0.115\\
 & (0.071) &  &  & \\
Rule Violations ALR & 0.183 & 0.120 & 2.888 & 0.115\\
 & (0.055) &  &  & \\
\multicolumn{5}{c}{\textbf{Panel C: Male Unit, Sender Profiles}}\\
Personal Growth ALR & -0.006 & -0.004 & 2.897 & 0.076\\
 & (0.047) &  &  & \\
Community Support ALR & -0.132 & -0.087 & 2.897 & 0.076\\
 & (0.045) &  &  & \\
Rule Violations ALR & 0.048 & 0.031 & 2.897 & 0.076\\
 & (0.033) &  &  & \\
\multicolumn{5}{c}{\textbf{Panel D: Male Unit, Receiver Profiles}}\\
Personal Growth ALR & 0.010 & 0.006 & 2.891 & 0.075\\
 & (0.049) &  &  & \\
Community Support ALR & -0.137 & -0.090 & 2.891 & 0.075\\
 & (0.048) &  &  & \\
Rule Violations ALR & 0.070 & 0.046 & 2.891 & 0.075\\
 & (0.036) &  &  & \\
\hline\hline
\end{tabular}}
\begin{flushleft}
\footnotesize Notes: All models include three language dimensions jointly (Personal Growth, Community Support, Rule Violations) plus controls (LSI-R, age, education, race). $\beta$ (OLS) is the OLS coefficient with standard error in parentheses. $\beta^*(\delta=1)$ is the bounded estimate assuming selection on unobservables equals selection on observables. $\delta^*$ is the degree of selection on unobservables relative to selection on observables needed to completely eliminate the treatment effect of language on recidivism. $R^2_{full}$ is from the full model with all three language dimensions. $R^2_{restricted}$ (controls only) ranges from 0.015 to 0.017 across all models.
\end{flushleft}
\end{table}

%\newpage

\begin{table}[h]
\centering
\caption{Sensitivity to Unobserved Confounding: Positive Language}
\label{tab:sensemakr}
\begin{tabular}{p{2cm}p{2cm}p{1.5cm}p{1.5cm}p{1.8cm}p{1.5cm}p{1.5cm}}
\hline\hline
Unit & Profile & N & $\beta$ & Partial $R^2$ & RV ($q=1$) & RV ($\alpha=0.05$)\\
\midrule
\multicolumn{7}{c}{\textbf{Panel A: Female Unit}}\\
Female & Sender & 982 & -0.142 & 0.071 & 0.241 & 0.192\\
 &  &  & (0.016) &  &  & \\
Female & Receiver & 982 & -0.166 & 0.090 & 0.268 & 0.221\\
 &  &  & (0.017) &  &  & \\
\multicolumn{7}{c}{\textbf{Panel B: Male Unit}}\\
Male & Sender & 1649 & -0.144 & 0.064 & 0.229 & 0.191\\
 &  &  & (0.014) &  &  & \\
Male & Receiver & 1649 & -0.134 & 0.062 & 0.226 & 0.187\\
 &  &  & (0.013) &  &  & \\
\hline\hline
\end{tabular}
\begin{flushleft}
\footnotesize Notes: Positive language is log[(Personal Growth + Community Support) / (Rule Violations + Disruptive Conduct)]. All models include controls (LSI-R, age, education, race). Standard errors in parentheses. Partial $R^2$ is the partial $R^2$ of treatment with outcome. RV ($q=1$) is the percentage of residual variance an unobserved confounder must explain for both treatment and outcome to drive the effect to zero. RV ($\alpha=0.05$) is the robustness value to make the effect statistically insignificant at $\alpha=0.05$.
\end{flushleft}
\end{table}

\begin{figure}[!h] 
  \caption{Entry, Exit Dates, Length of Stay and Time to Recidivism}
    \label{time} 

 \begin{minipage}[b]{0.5\linewidth}
       \begin{center}

    \includegraphics[width=\linewidth]{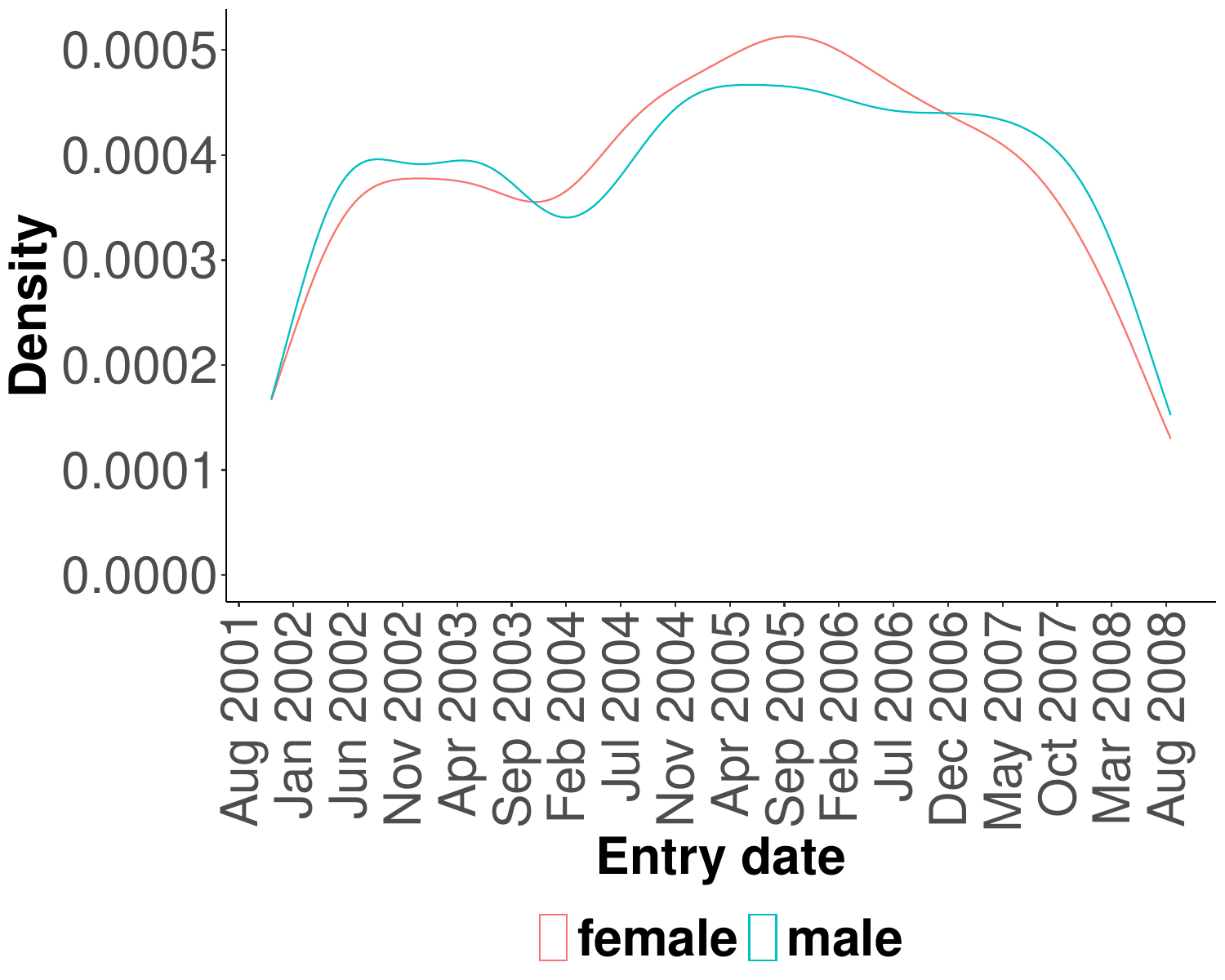}
    \caption*{a. Entry Date} 
        \end{center}

  \end{minipage}%%
  \begin{minipage}[b]{0.5\linewidth}
       \begin{center}

    \includegraphics[width=\linewidth]{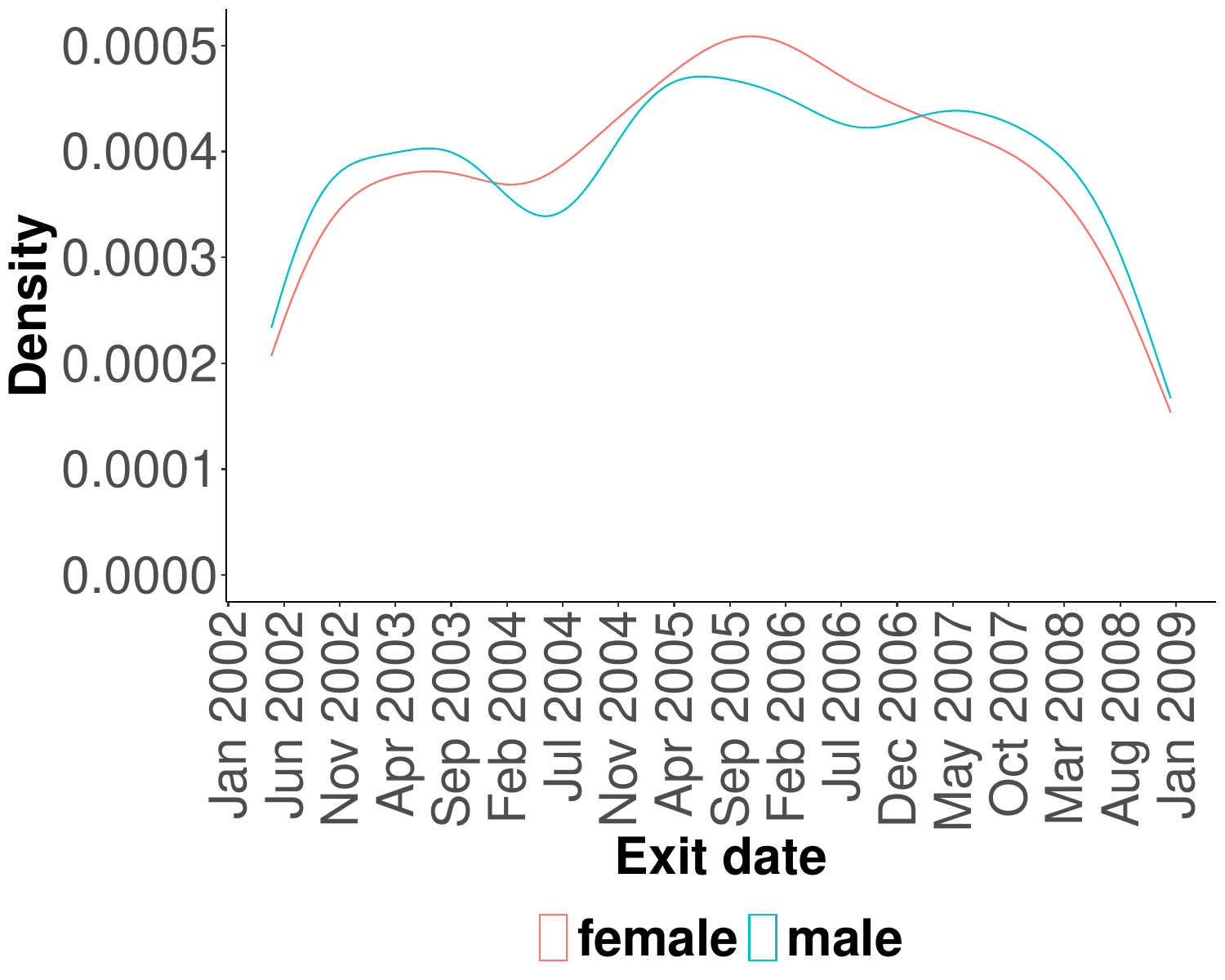}
    \caption*{b. Exit Date} 
        \end{center}

  \end{minipage}
   \begin{minipage}[b]{0.5\linewidth}
       \begin{center}

    \includegraphics[width=\linewidth]{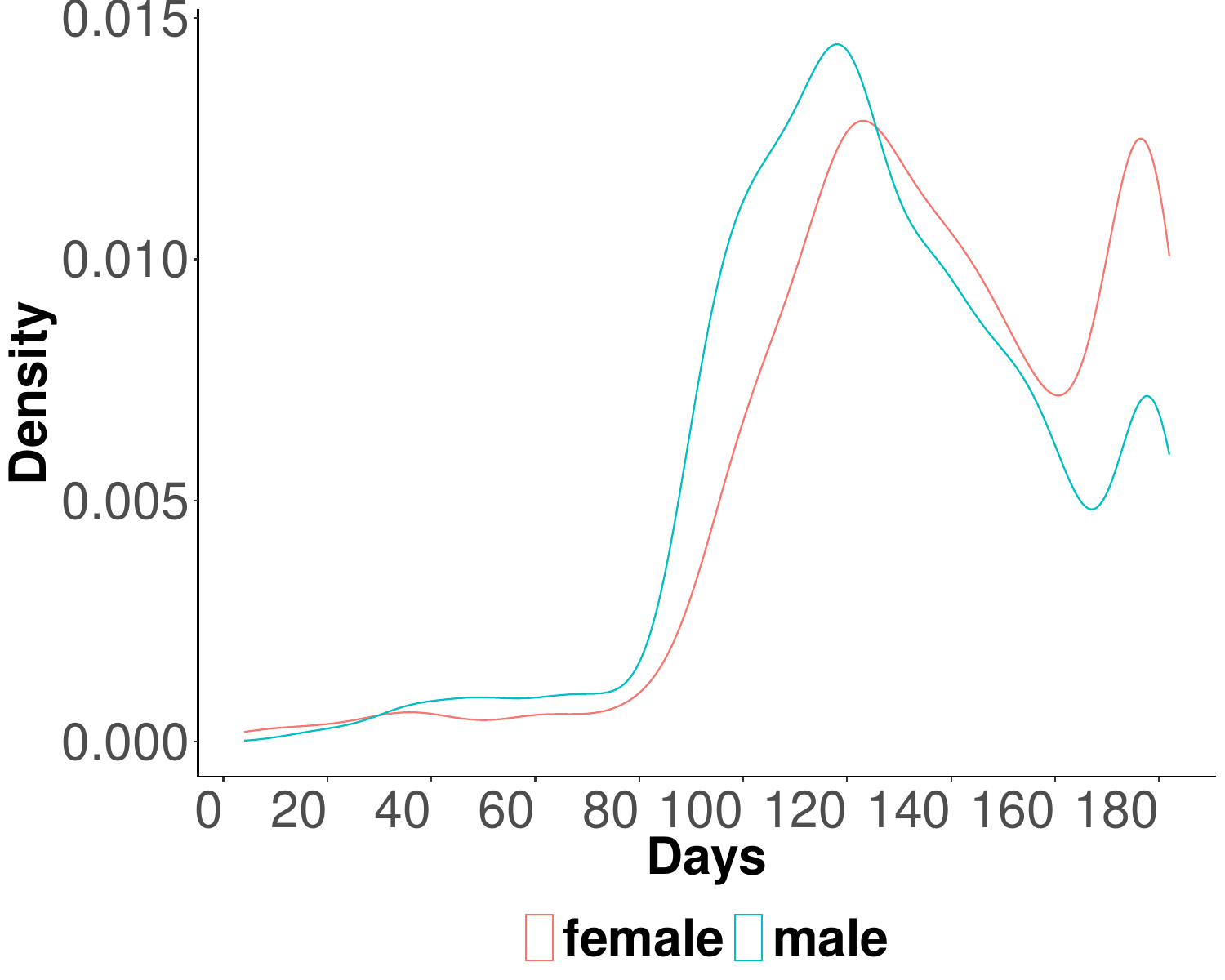}
    \caption*{c. Length of Stay} 
        \end{center}

  \end{minipage}%%
  \begin{minipage}[b]{0.5\linewidth}
       \begin{center}

    \includegraphics[width=\linewidth]{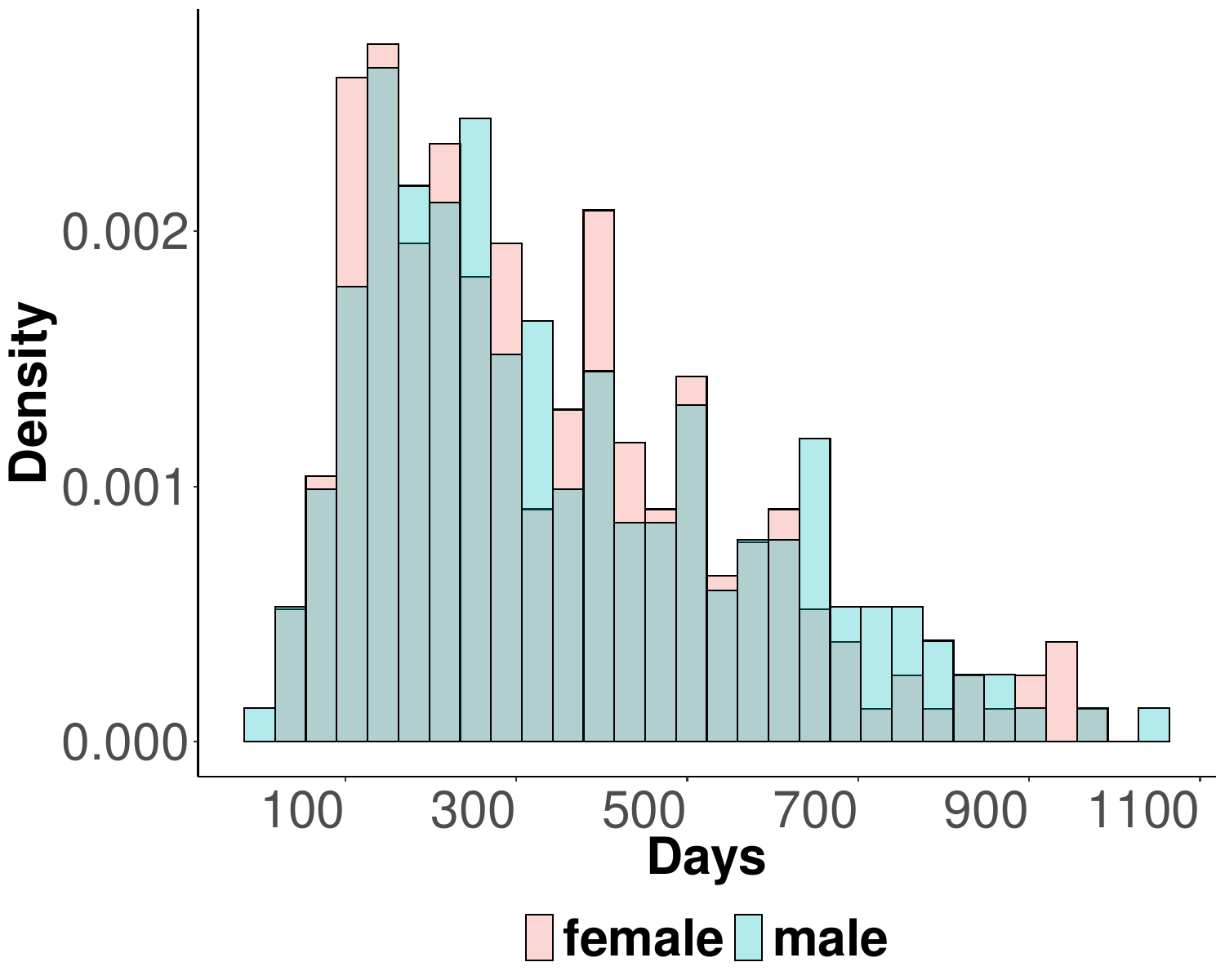}
    \caption*{d. Time to Recidivism} 
        \end{center}

  \end{minipage}
\end{figure}

\begin{figure}[!h] 
  \caption{Exchanges of Affirmations and Corrections (Count)}
    \label{counttexts} 

 \begin{minipage}[b]{0.5\linewidth}
       \begin{center}

    \includegraphics[width=\linewidth]{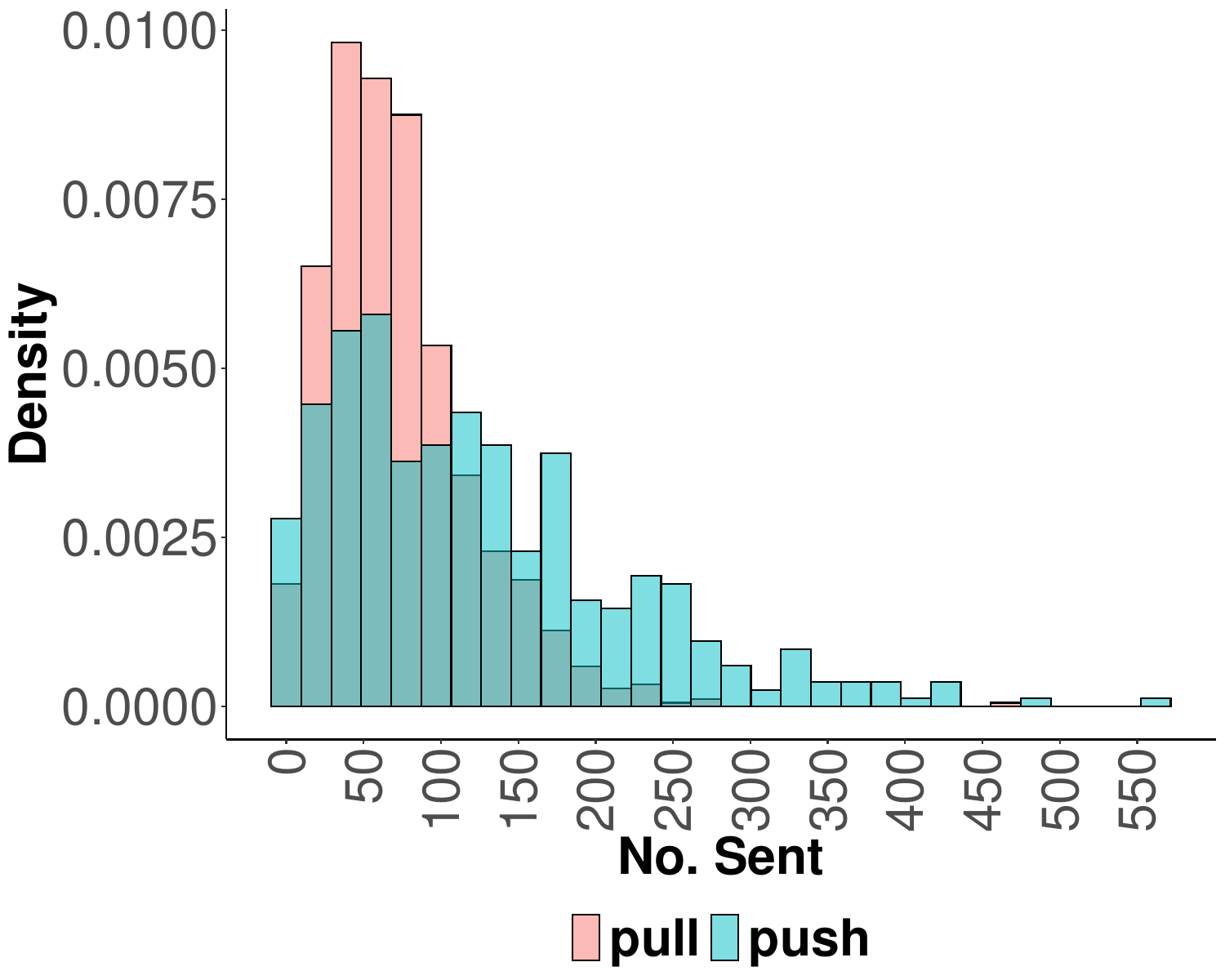}
    \caption*{a. Female} 
        \end{center}

  \end{minipage}%%
  \begin{minipage}[b]{0.5\linewidth}
       \begin{center}

    \includegraphics[width=\linewidth]{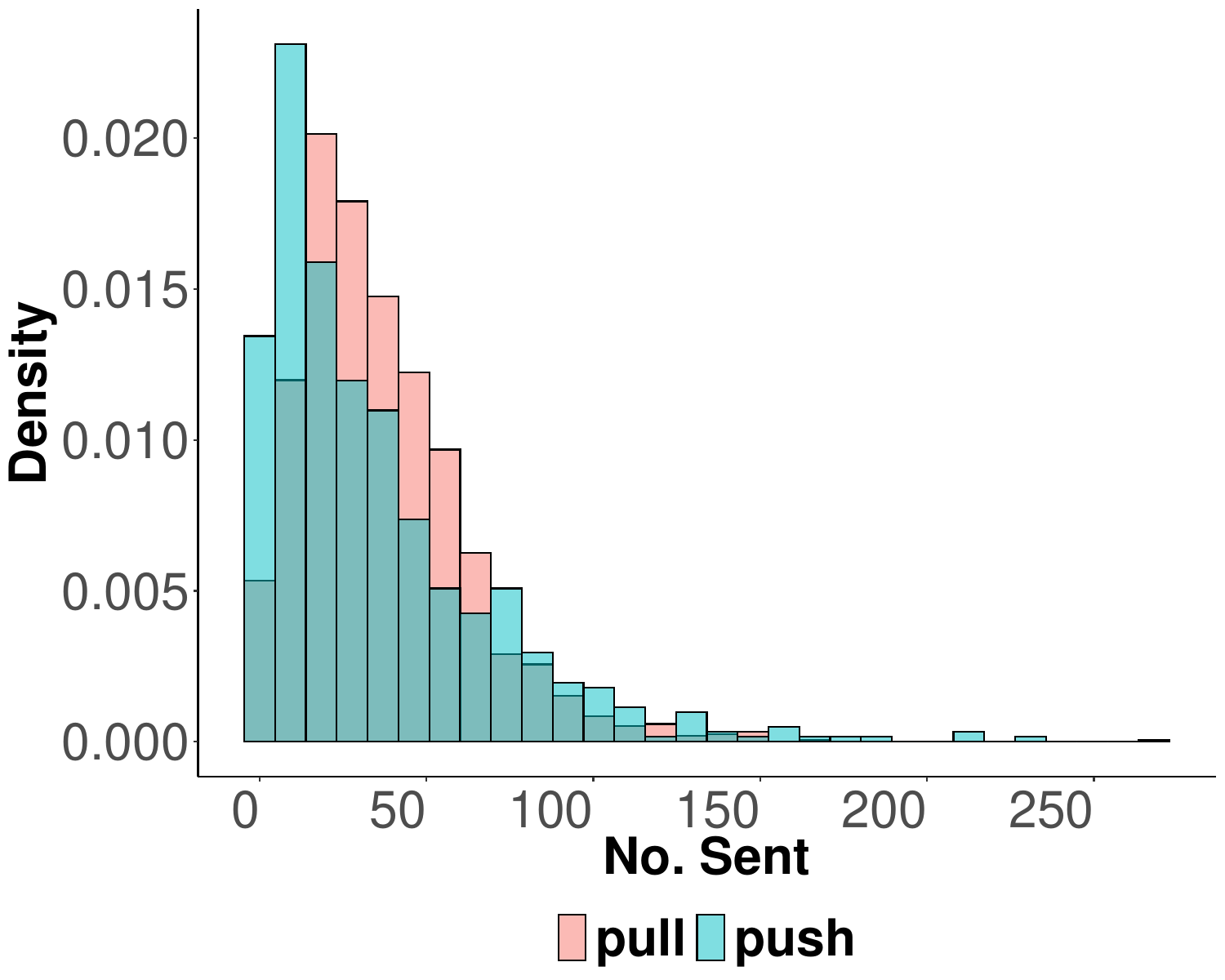}
    \caption*{b. Male} 
        \end{center}

  \end{minipage}
   \begin{minipage}[b]{0.5\linewidth}
       \begin{center}

    \includegraphics[width=\linewidth]{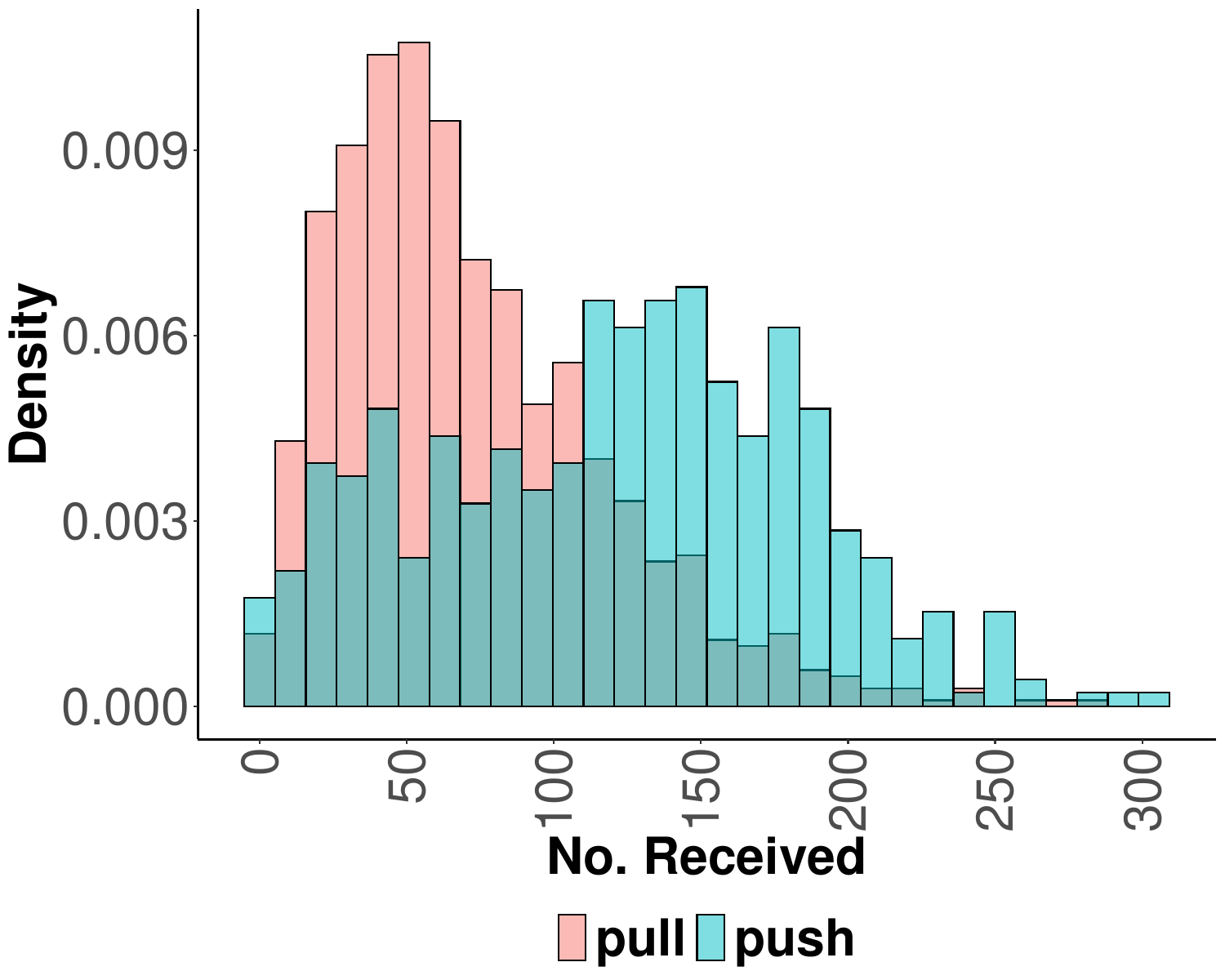}
    \caption*{c. Female} 
        \end{center}

  \end{minipage}%%
  \begin{minipage}[b]{0.5\linewidth}
       \begin{center}

    \includegraphics[width=\linewidth]{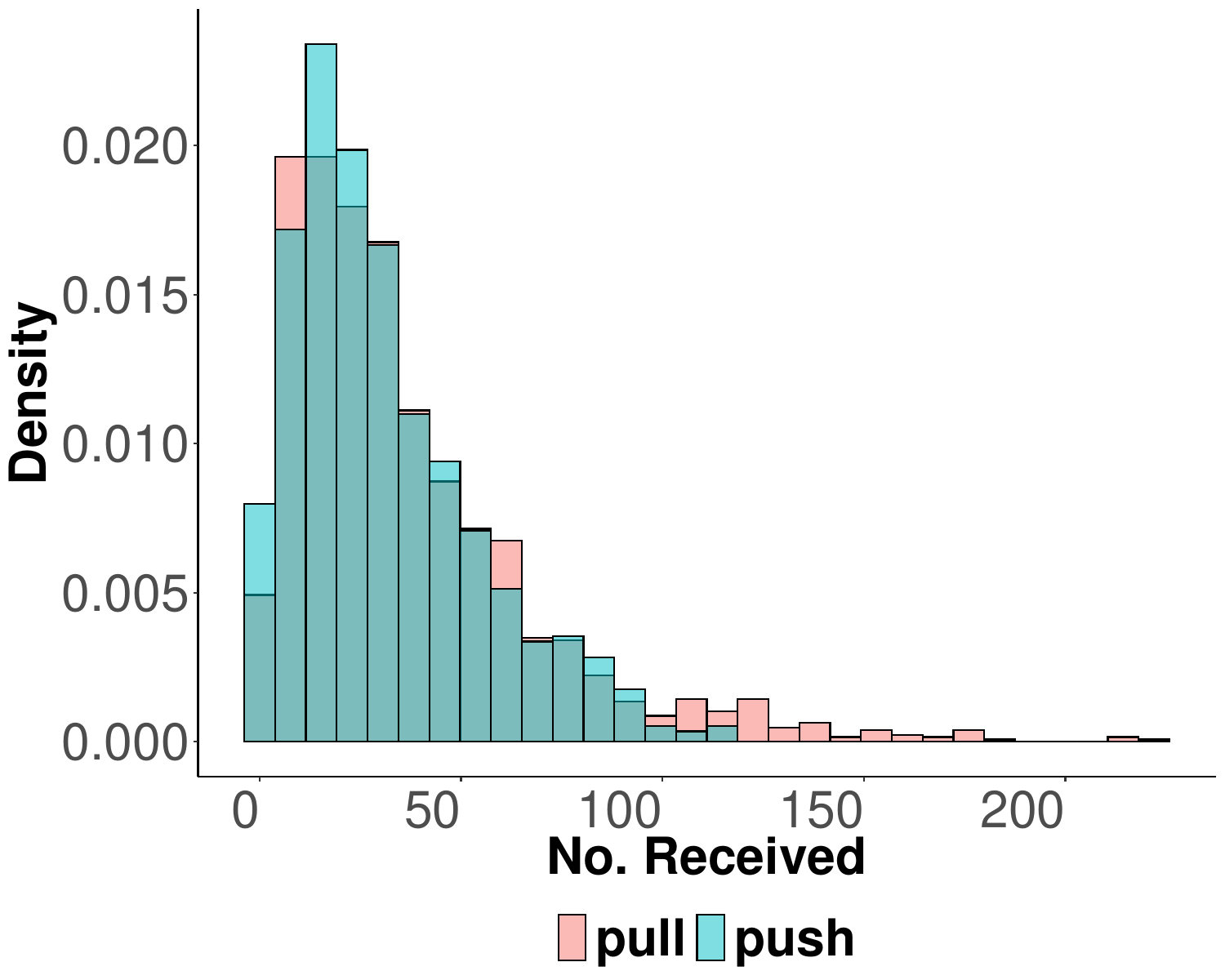}
    \caption*{d. Male} 
        \end{center}

  \end{minipage}
\end{figure}

\begin{figure}[!h] 
  \caption{Class Probabilities and Message Type in Male Units}
    \label{0shotclassprobbytype} 

 \begin{minipage}[b]{0.5\linewidth}
       \begin{center}

    \includegraphics[width=\linewidth]{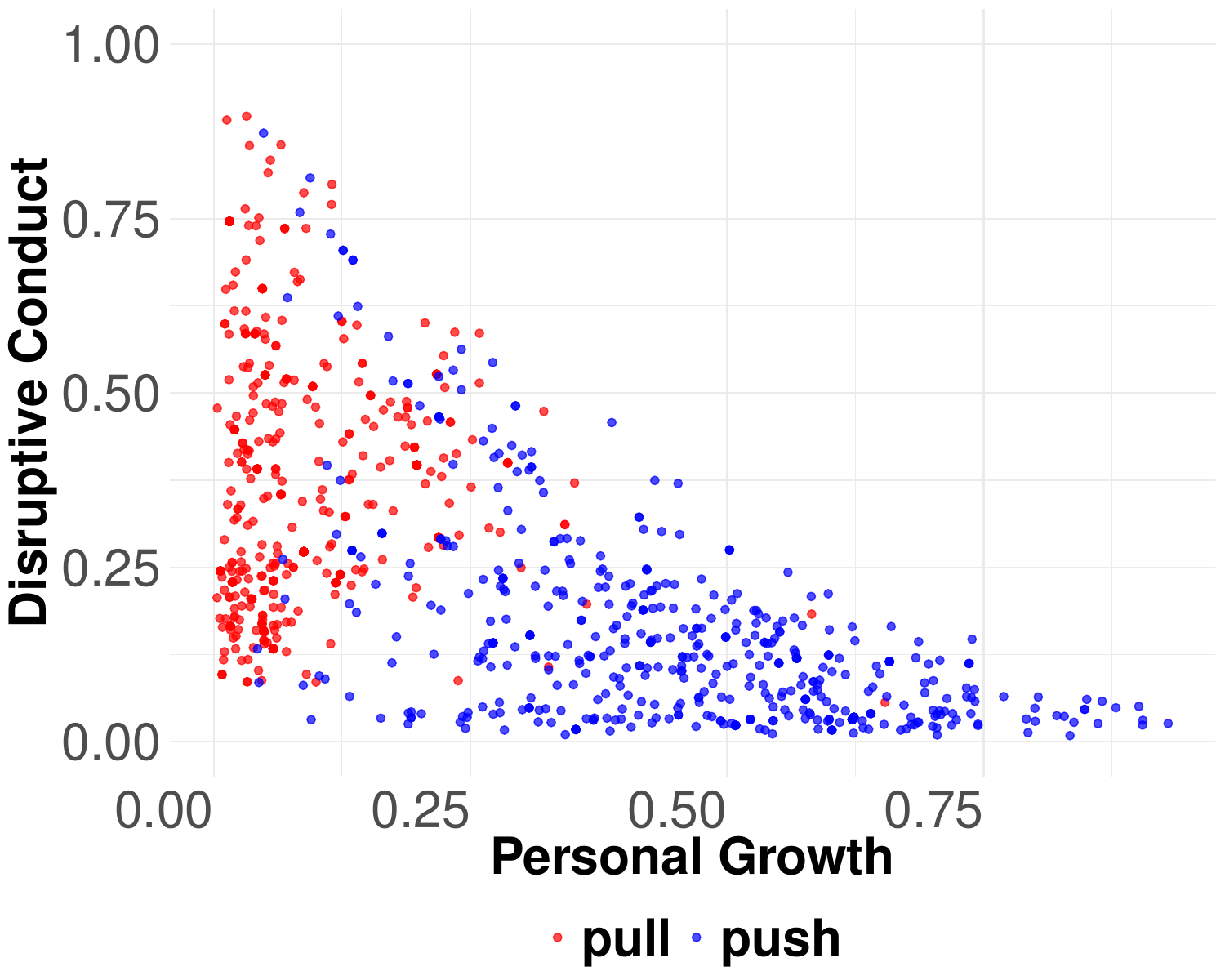}
    %\caption*{a.} 
        \end{center}

  \end{minipage}%%
  \begin{minipage}[b]{0.5\linewidth}
       \begin{center}

    \includegraphics[width=\linewidth]{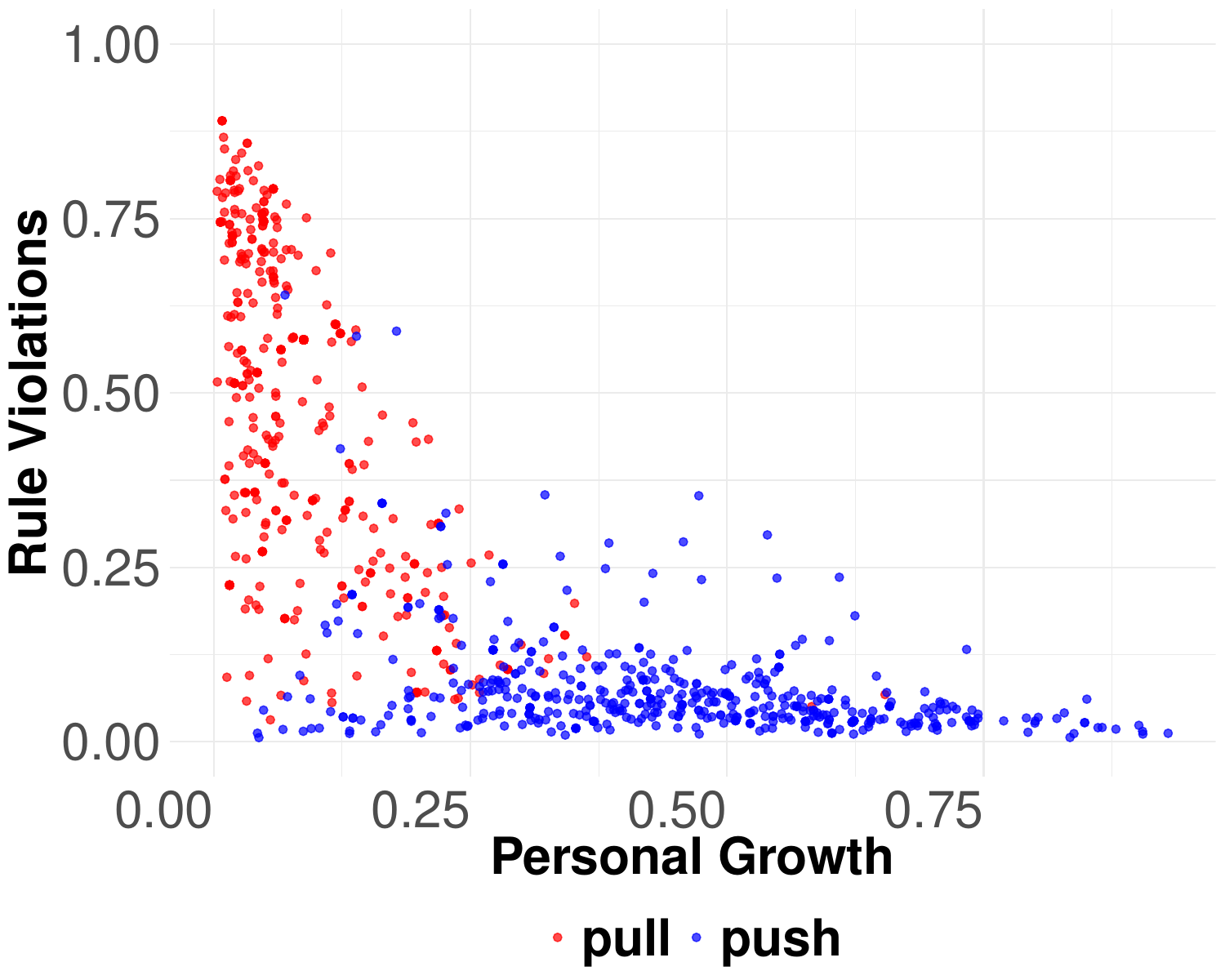}
    %\caption*{b.} 
        \end{center}

  \end{minipage}
   \begin{minipage}[b]{0.5\linewidth}
       \begin{center}

    \includegraphics[width=\linewidth]{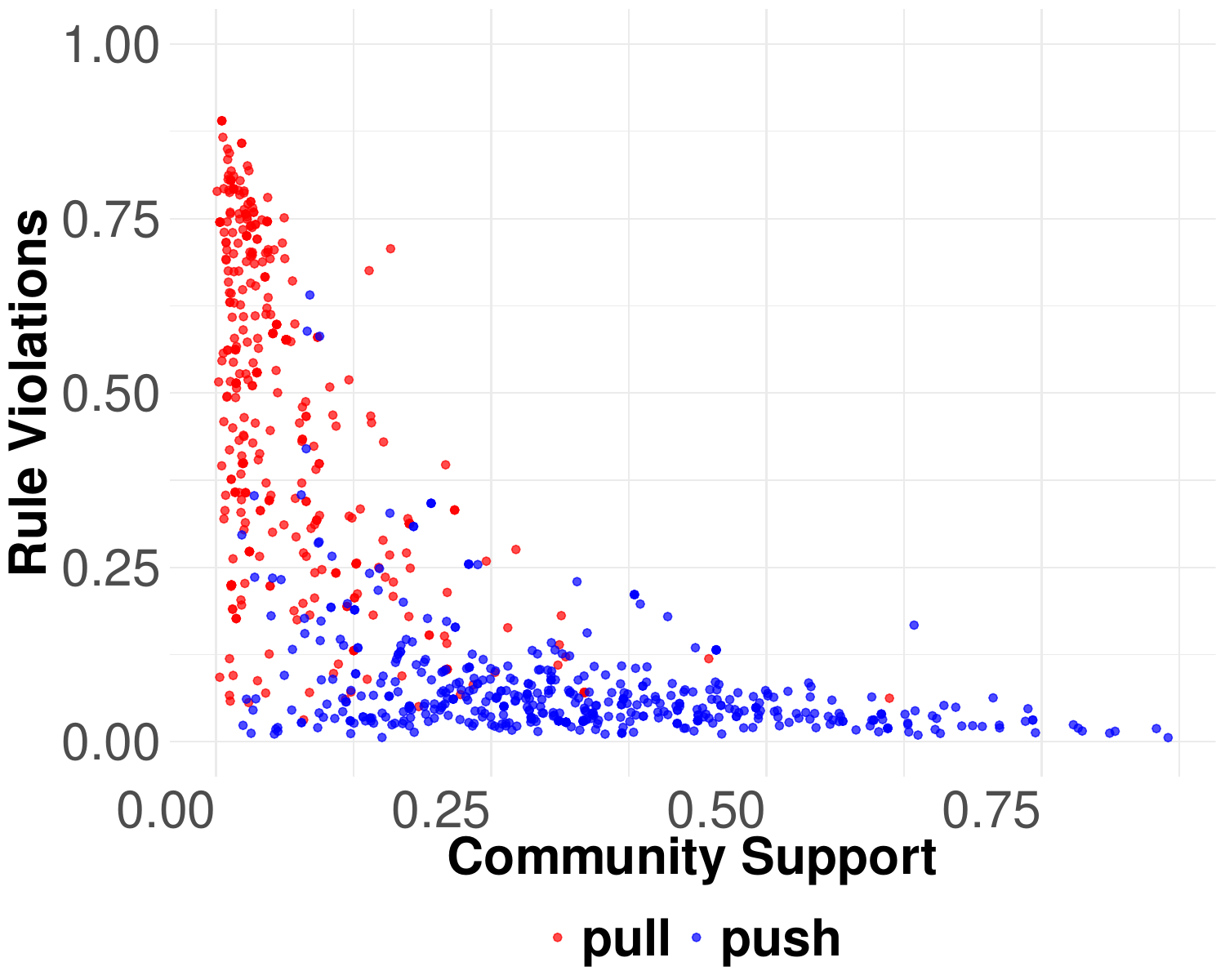}
    %\caption*{c.} 
        \end{center}

  \end{minipage}%%
  \begin{minipage}[b]{0.5\linewidth}
       \begin{center}

    \includegraphics[width=\linewidth]{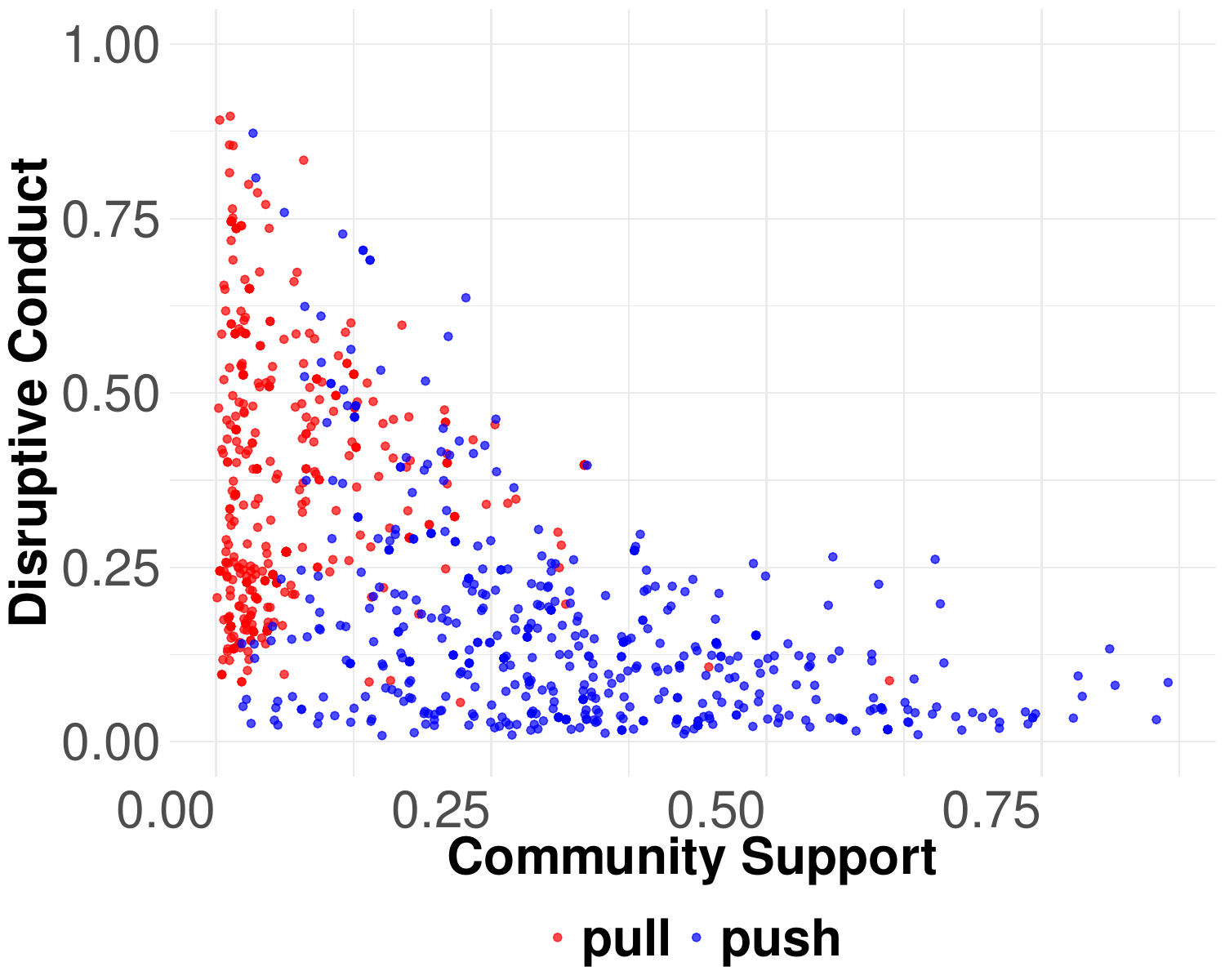}
    %\caption*{d.} 
        \end{center}

  \end{minipage}
  \begin{minipage}{15.5 cm}{\footnotesize{Notes: Random sample of 1000 text messages was drawn from the male units for this analysis, with 50\% sampling by message type.}}
\end{minipage} 
\end{figure}

\begin{figure}[!h]
    \centering
        \caption{BERT Architecture}
    \label{fig:BERT Architecture}
    \includegraphics[width=\linewidth]{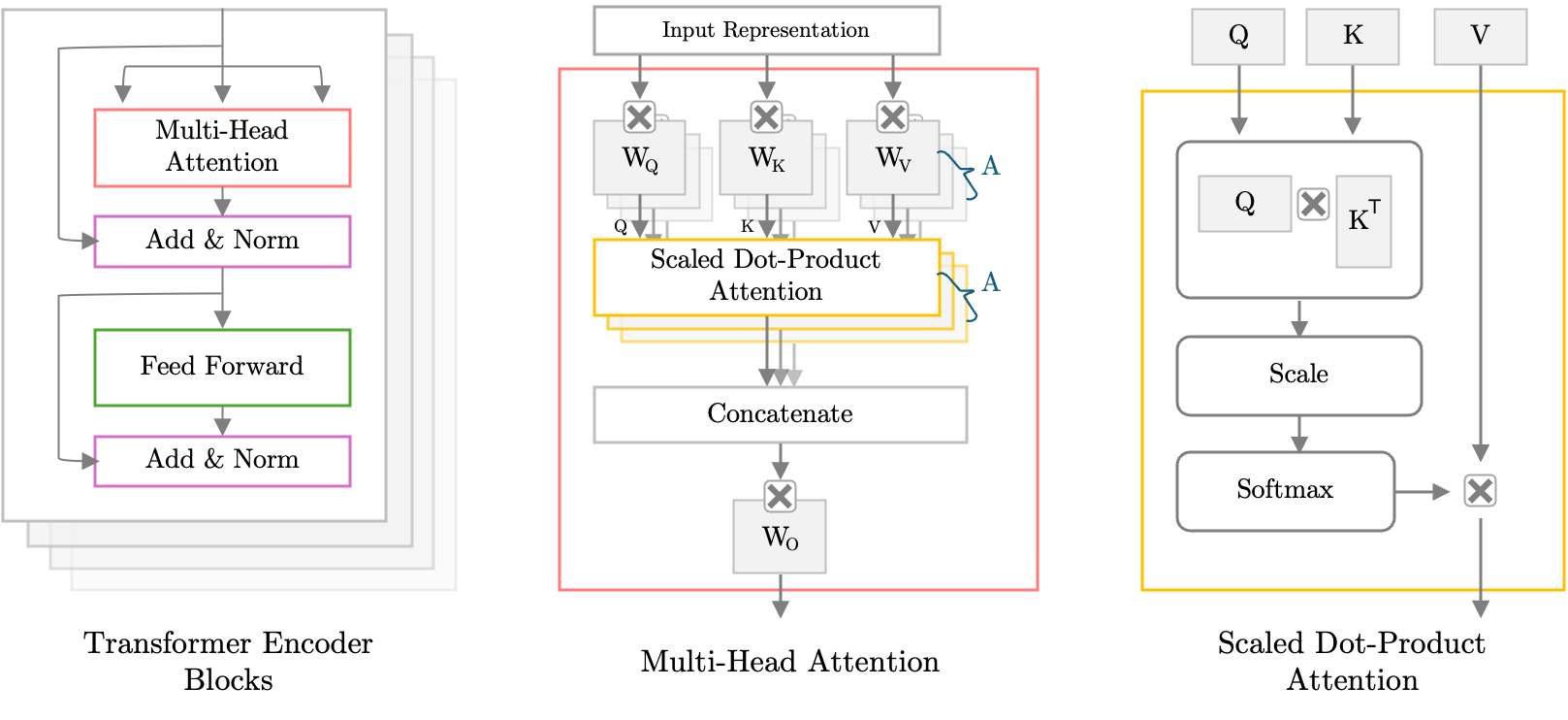}
    \begin{minipage}{15.0 cm}{\footnotesize{Notes: The architecture has been reproduced from the framework in \cite{vaswani2017attention}.}}
\end{minipage} 
\end{figure}

\begin{figure}[!h] 
  \caption{Calibration Plots Predicted Probability and True Recidivism (Male)}
    \label{predictembeddings2male} 

 \begin{minipage}[b]{0.5\linewidth}
       \begin{center}

    \includegraphics[width=\linewidth]{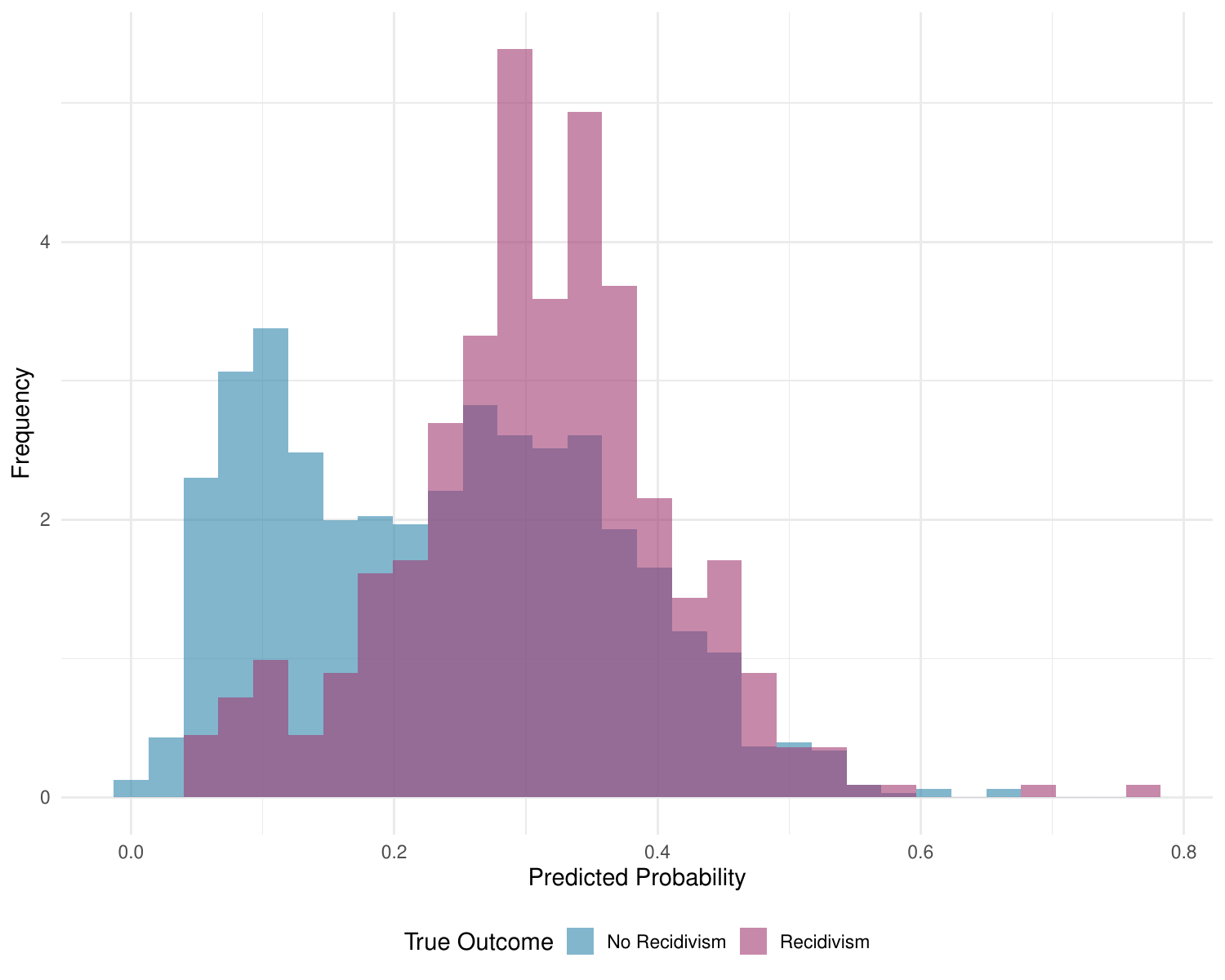}
    \caption*{a. Histogram True vs. Predicted Outcome} 
        \end{center}

  \end{minipage}%%
  \begin{minipage}[b]{0.5\linewidth}
       \begin{center}

    \includegraphics[width=\linewidth]{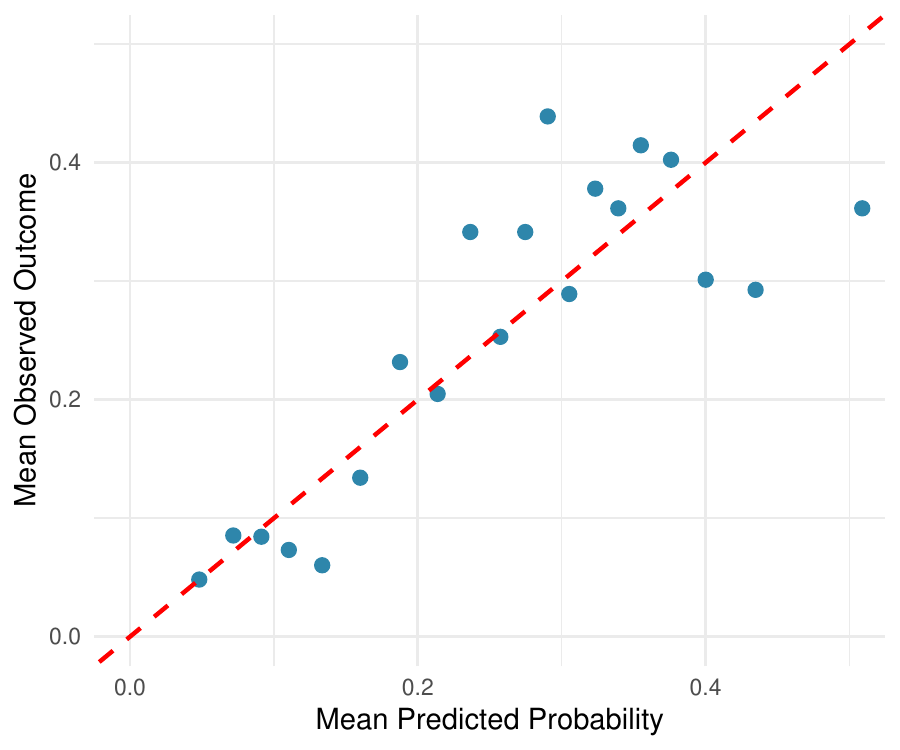}
    \caption*{b. Calibration Plot} 
        \end{center}

  \end{minipage}
  \begin{minipage}{15.0 cm}{\footnotesize{Notes: Five-fold cross-fitting is used to estimate the out-of-sample probabilities for residents in each fold. }}
\end{minipage} 
\end{figure}

\begin{figure}[!h] 
  \caption{Predict Recidivism with and without text embeddings Aggregated by Receiver}
    \label{predictembeddings_receiver} 

 \begin{minipage}[b]{0.5\linewidth}
       \begin{center}

    \includegraphics[width=\linewidth]{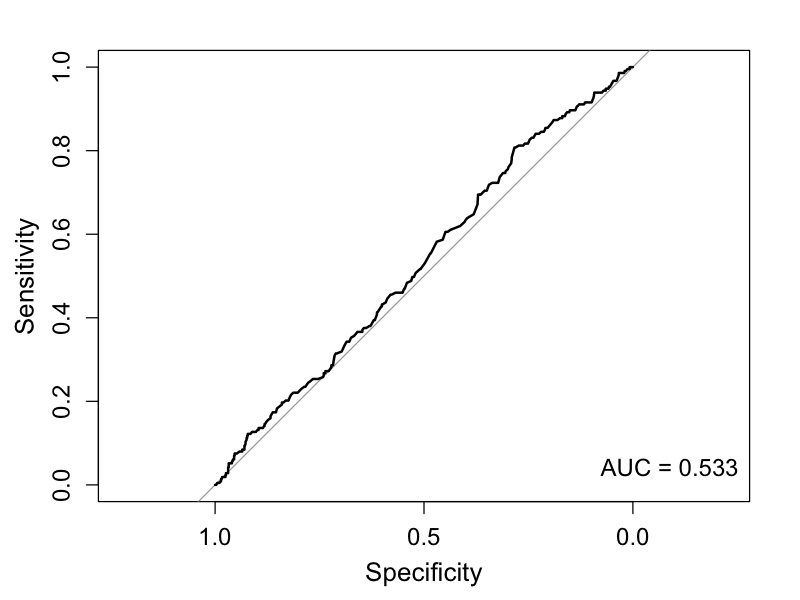}
    \caption*{a. Female -- Without Embeddings} 
        \end{center}

  \end{minipage}%%
  \begin{minipage}[b]{0.5\linewidth}
       \begin{center}

    \includegraphics[width=\linewidth]{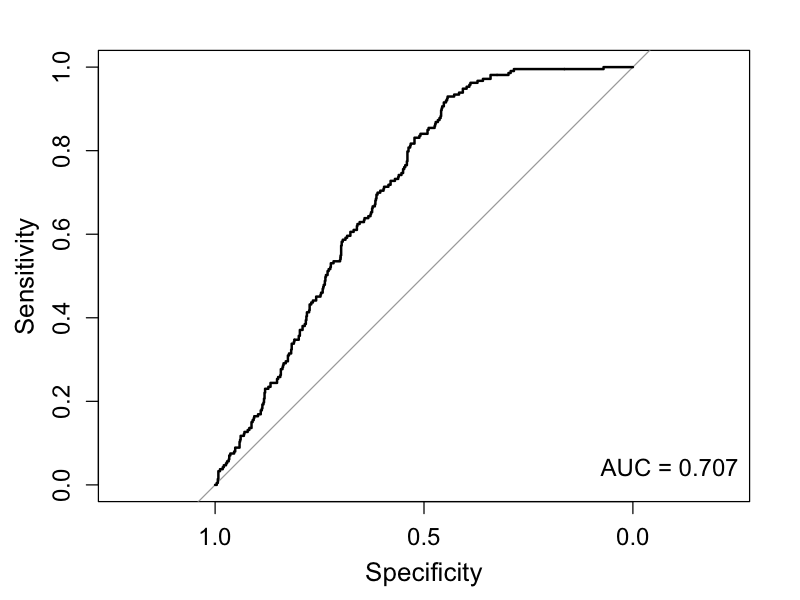}
    \caption*{b. Female -- With Embeddings} 
        \end{center}

  \end{minipage}
   \begin{minipage}[b]{0.5\linewidth}
       \begin{center}

    \includegraphics[width=\linewidth]{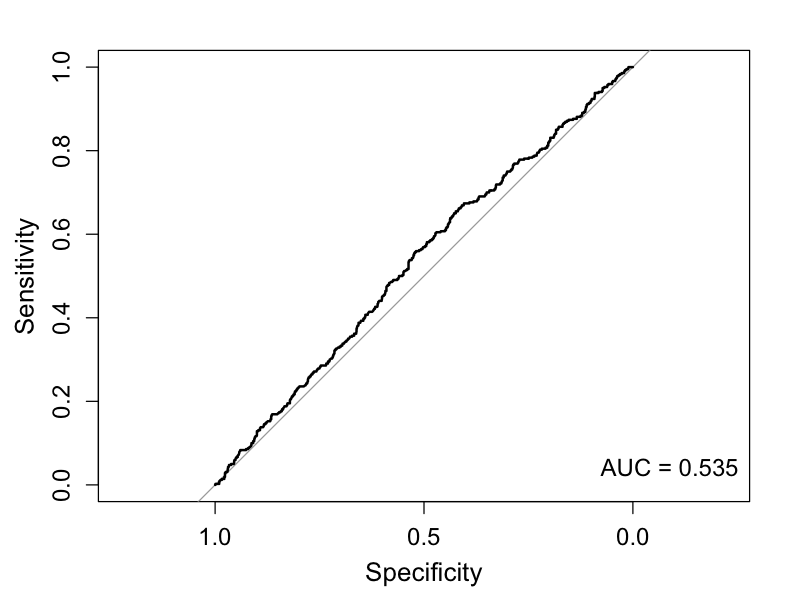}
    \caption*{c. Male -- Without Embeddings} 
        \end{center}

  \end{minipage}%%
  \begin{minipage}[b]{0.5\linewidth}
       \begin{center}

    \includegraphics[width=\linewidth]{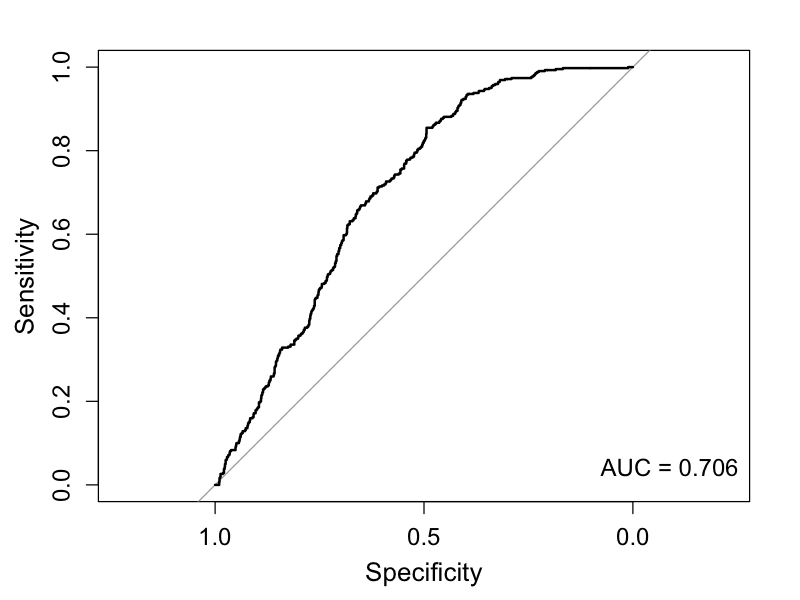}
    \caption*{d. Male -- With Embeddings} 
        \end{center}

  \end{minipage}
  \begin{minipage}{15.0 cm}{\footnotesize{Notes: Five-fold cross-fitting is used to estimate the out-of-sample AUC for all models.}}
\end{minipage} 
\end{figure}

\begin{figure}[!h] 
  \caption{Predict Recidivism with and without class probabilities Aggregated by Receiver}
    \label{predictscoresreceiver_F} 

 \begin{minipage}[b]{0.5\linewidth}
       \begin{center}

    \includegraphics[width=\linewidth]{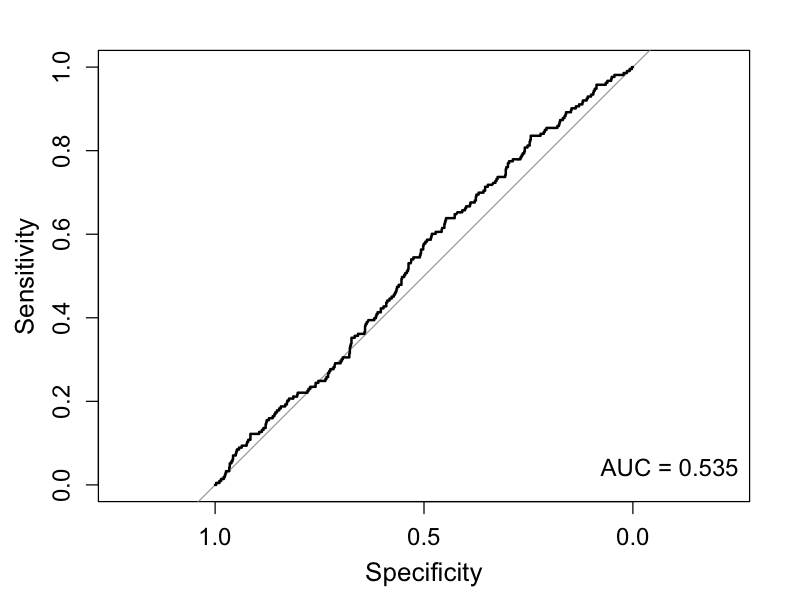}
    \caption*{a. Female -- Without Class Prob.} 
        \end{center}

  \end{minipage}%%
  \begin{minipage}[b]{0.5\linewidth}
       \begin{center}

    \includegraphics[width=\linewidth]{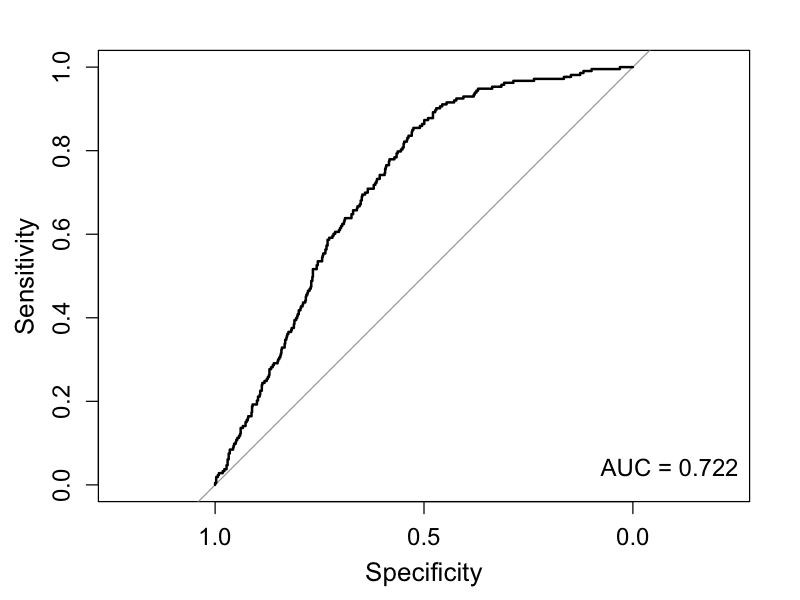}
    \caption*{b. Female -- With Class Prob.} 
        \end{center}

  \end{minipage}
   \begin{minipage}[b]{0.5\linewidth}
       \begin{center}

    \includegraphics[width=\linewidth]{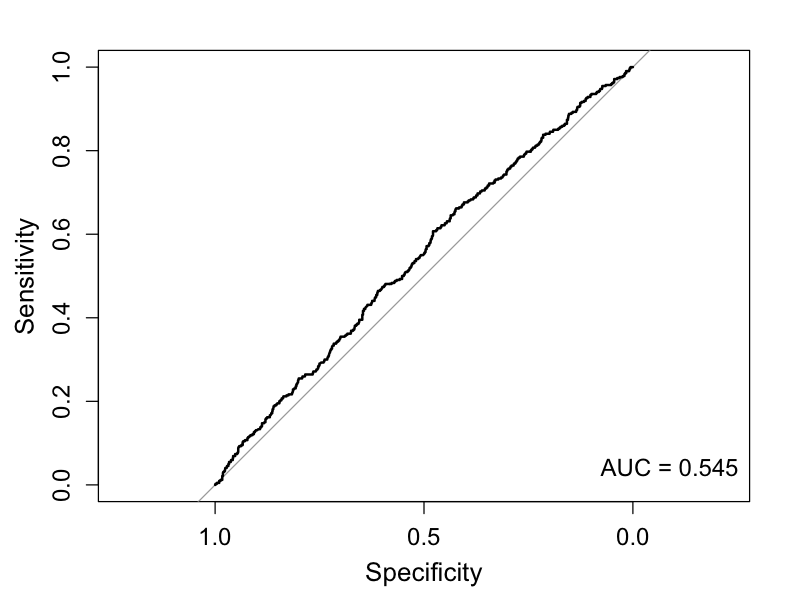}
    \caption*{c. Male -- Without Class Prob.} 
        \end{center}

  \end{minipage}%%
  \begin{minipage}[b]{0.5\linewidth}
       \begin{center}

    \includegraphics[width=\linewidth]{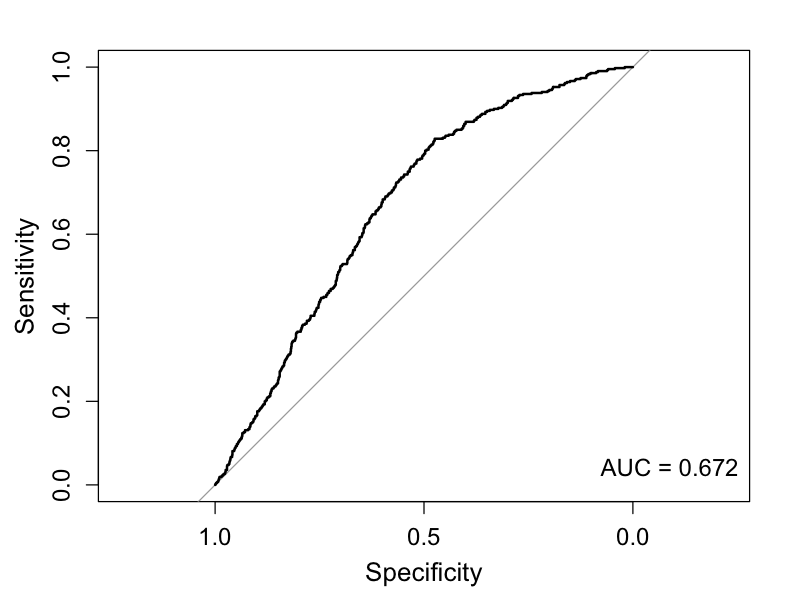}
    \caption*{d. Male -- With Class Prob.} 
        \end{center}

  \end{minipage}
  \begin{minipage}{15.0 cm}{\footnotesize{Notes: Five-fold cross-fitting is used to estimate the out-of-sample AUC for all models. The class probabilities are obtained using Zero-Shot classification, and we use the additive log ratios for this analysis, with disruptive behavior as the baseline category.}}
\end{minipage} 
\end{figure}

\clearpage
\section{Zero-Shot Classification Label Selection}
\label{inpulabelgeneration}

A principled, iterative, and reproducible LLM query approach is used to generate the input labels for zero-shot classification. We created 100 random samples of affirmations and corrections after appending the text messages of the male and female units. The random sample generation was stratified by message type (affirmations and corrections). In other words, in every sample about 500 messages were affirmations and 500 were corrections. Moreover, each sample was generated using a random seed to maintain independence across samples.

Each sample was submitted to the Claude API (Anthropic, \texttt{claude-sonnet-4}) with the following prompt:

\begin{quote}
\textit{``I want to do zero-shot text classification. I have a sample of 1000 texts from a therapeutic community in a correctional facility - short messages exchanged between residents (affirmations and corrections). I want to classify these into 2 positive and 2 negative behaviors. Based on these messages, suggest 4 class names (2 positive, 2 negative) that are mutually exclusive and relevant for predicting recidivism. Return ONLY the 4 class names, one per line, with positive classes first, then negative classes.}

\textit{Messages: [1000 text messages]"}
\end{quote}

Across 100 iterations, the suggested labels showed strong convergence. Table \ref{tab:label_clusters} summarizes the clustering patterns observed.

\begin{table}[h]
\centering
\caption{Summary of Suggested Class Labels Across 100 Iterations}
\label{tab:label_clusters}
\begin{tabular}{lll}
\hline\hline
\textbf{Category} & \textbf{Common Suggestions} & \textbf{Selected Label} \\
\hline
Positive 1 & Program Engagement, Supportive Engagement, & Community Support \\
           & Prosocial Leadership, Positive Leadership & \\
%\hline
Positive 2 & Personal Growth, Personal Development, & Personal Growth \\
           & Academic Achievement, Program Commitment & \\
%\hline
Negative 1 & Rule Violations , & Rule Violations \\
           & Rule Violation, Rule Breaking & \\
%\hline
Negative 2 & Disrespectful Behavior, Disruptive Behavior, & Disruptive Conduct \\
           & Disrespectful Conduct, Interpersonal Aggression & \\
\hline\hline
\end{tabular}
\end{table}

The label ``Rule Violations'' demonstrated the highest stability, appearing in approximately 95\% of iterations. Positive behavior labels clustered into two distinct semantic groups: one centered on program engagement and prosocial behavior, and another focused on personal growth and achievement. Negative behavior labels similarly clustered around rule-breaking and interpersonal misconduct themes.
\begin{figure}[!h]
    \centering
        \caption{Zero Shot Classification}
    \label{fig:umapclustering}
    \includegraphics[width=\linewidth]{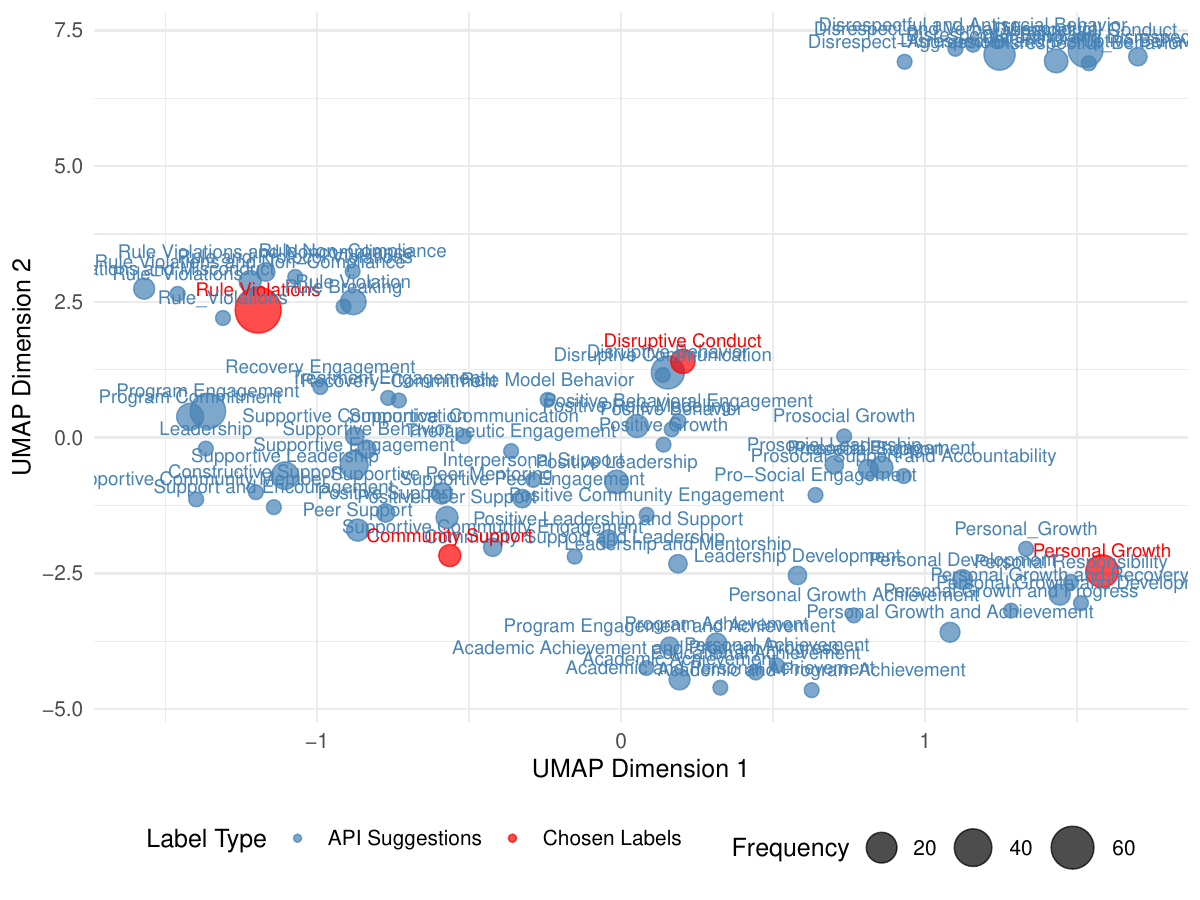}
    \begin{minipage}{15.0 cm}{\footnotesize{}}
\end{minipage} 
\end{figure}
Based on this analysis, we selected four final labels for zero-shot classification: \textit{Personal Growth}, \textit{Community Support}, \textit{Rule Violations}, and \textit{Disruptive Conduct}. These labels are representative of the distribution of API-generated categories while maintaining mutual exclusivity and relevance for predicting recidivism.

Finally, to examine the semantic relationships among zero-shot classification labels, we embedded all candidate labels using BERT-base-uncased model and projected the 768-dimensional embeddings into a 2D space using UMAP \citep{mcinnes2018umap}. UMAP projection used parameters $n\_neighbors=15$ and $min\_dist=0.1$ to preserve local neighborhood structure. Figure~\ref{fig:umapclustering} shows the resulting semantic space of labels. Labels cluster into interpretable groups. Our pre-specified labels (shown in red) align well with API-generated negative-behavior labels, while positive labels form distinct clusters around concepts such as leadership, community support, program engagement, and personal growth.

\clearpage

\section{Multi-Head Self-Attention and transformer blocks}
\label{attn}
To compute attention within each encoder block $l = 1, \ldots,L$, the model first generates query ($\mathbf{Q}^{(l,a)}$), key ($\mathbf{K}^{(l,a)}$), and value matrices ($\mathbf{V}^{(l,a)}$) for each attention head $a = 1, \ldots, A$. These matrices are obtained by multiplying the previous layer's output $\mathbf{E}^{(l-1)}$ using head-specific learnable weight matrices $\mathbf{W}_B^{(l,a)}$ as follows,  $\mathbf{B}^{(l,a)} = \mathbf{E}^{(l-1)}\mathbf{W}_B^{(l,a)} \in \mathbb{R}^{K \times d}$ for $\mathbf{B} \in \{\mathbf{Q,K,V}\}$, $\mathbf{W}_B^{(l,a)} \in \mathbb{R}^{H\times d}$ and $d = H/A$. 
For reference, \textit{bert-base-uncased} uses $L=12$ encoder layers with $A=12$ attention heads, while \textit{bart-large} uses $L=12$ encoder layers with $A=16$ attention heads (and hence $d=64$ in both models). The $k$-th row of the matrix $\mathbf{Q}^{(l,a)}$ is denoted $\mathbf{q}_k^{(l,a)} \in \mathbb{R}^d$, with equivalent notation $\mathbf{k}_k^{(l,a)}$ and $\mathbf{v}_k^{(l,a)}$ for $\mathbf{K}^{(l,a)}$ and $\mathbf{V}^{(l,a)}$ respectively.

The query vector represents what the $k$-th token seeks, and the key vector indicates what each token offers others. The attention score between the $k$-th query and all $K$ keys is given by $\mathbf{s}_k^{(l,a)} = \left[ s_{k1}^{(l,a)}, s_{k2}^{(l,a)}, \ldots, s_{kK}^{(l,a)} \right]^\top \in \mathbb{R}^K$. Each element in this vector is the scaled dot-product between the corresponding query and key vectors, 
  $  s_{kj}^{(l,a)} = \frac{\mathbf{q}_k^{(l,a)\top} \mathbf{k}_j^{(l,a)}}{\sqrt{d}}. $ The attention weights are then obtained by normalizing $s_{kj}^{(l,a)}$ using the softmax function as $m_{kj}^{(l,a)} = \frac{\exp(s_{kj}^{(l,a)})}{\sum_{q=1}^K \exp(s_{kq}^{(l,a)})}.$ Let $\mathbf{M}^{(l,a)} \in \mathbb{R}^{K\times K}$ be the attention weights matrix for all tokens in $\mathbf{z}$. The output matrix is given by $\mathbf{H}^{(l,a)} = \mathbf{M}^{(l,a)} \mathbf{V}^{(l,a)} \in \mathbb{R}^{K \times d}$. Next, the outputs from $A$ heads are combined as: $\mathbf{H}^{(l)}=\left[ \mathbf{H}^{(l,1)}, \mathbf{H}^{(l,2)},\ldots,  \mathbf{H}^{(l,A)}\right] \in \mathbb{R}^{K \times H}$. 
  The column-concatenated matrix $\mathbf{H}^{(l)}$ is then processed within the encoder block to generate the block's final output $\mathbf{E}^{(l)}$ (Figure \ref{fig:BERT Architecture}).

\end{document}